		\def\l{3em}
\newcounter{Question} % Stores the current question number that gets iterated with each new question
\pgfplotsset{compat=1.18}
\newtcolorbox{wiparea}{
  breakable,
  colback=yellow!10,
  colframe=red!75!black,
  title=Work In Progress
}
\newtheoremstyle{plain}
  {\topsep}   % ABOVESPACE
  {\topsep}   % BELOWSPACE
  {\itshape}  % BODYFONT
  {0pt}       % INDENT (empty value is the same as 0pt)
  {\bfseries} % HEADFONT
  {.}         % HEADPUNCT
  {5pt plus 1pt minus 1pt} % HEADSPACE
  {}          % CUSTOM-HEAD-SPEC
\newtheoremstyle{definition}
  {\topsep}   % ABOVESPACE
  {\topsep}   % BELOWSPACE
  {\normalfont}  % BODYFONT
  {0pt}       % INDENT (empty value is the same as 0pt)
  {\bfseries} % HEADFONT
  {.}         % HEADPUNCT
  {5pt plus 1pt minus 1pt} % HEADSPACE
  {}          % CUSTOM-HEAD-SPEC
\newtheoremstyle{remark}
  {\topsep}   % ABOVESPACE
  {\topsep}   % BELOWSPACE
  {\normalfont}  % BODYFONT
  {0pt}       % INDENT (empty value is the same as 0pt)
  {\itshape} % HEADFONT
  {.}         % HEADPUNCT
  {5pt plus 1pt minus 1pt} % HEADSPACE
  {}          % CUSTOM-HEAD-SPEC
\newtheoremstyle{algorithm}
  {\topsep}   % ABOVESPACE
  {\topsep}   % BELOWSPACE
  {\normalfont}  % BODYFONT
  {0pt}       % INDENT (empty value is the same as 0pt)
  {\bfseries} % HEADFONT
  {\newline\newline}         % HEADPUNCT
  {5pt plus 1pt minus 1pt} % HEADSPACE
  {}          % CUSTOM-HEAD-SPEC
\newtheorem{theorem}{Theorem}
\newtheorem{claim}{Claim}
\newtheorem{primitive}{Primitive}
\theoremstyle{definition}
\newtheorem{definition}{Definition}
\theoremstyle{remark}
\newtheorem{remark}{Remark}
\theoremstyle{algorithm}
\newtheorem{algorithm}{Algorithm}
\newcommand{\x}{\mathbf{x}}
\newcommand{\y}{\mathbf{y}}
\newcommand{\z}{\mathbf{z}}
\newcommand{\bb}{\mathbf{b}}
\newcommand{\vv}{\mathbf{v}}
\newcommand{\ww}{\mathbf{w}}
\newcommand{\F}{\mathbb{F}}
\definecolor{palred}{RGB}{239, 71, 111}
\definecolor{palyellow}{RGB}{255, 209, 102}
\definecolor{palgreen}{RGB}{6, 214, 160}
\definecolor{palblue}{RGB}{17, 138, 178}
\definecolor{palblack}{RGB}{7, 59, 76}
\definecolor{palbluegreen}{RGB}{12, 176, 169}
\definecolor{palredblue}{RGB}{128, 105, 145}
\definecolor{palblueyellow}{RGB}{136, 174, 140}
\definecolor{palgreenyellow}{RGB}{131, 212, 131}
\definecolor{palredyellow}{RGB}{247, 140, 107}
\definecolor{palredgreen}{RGB}{123, 143, 136}
\newcommand*{\Relbarfill@}{\arrowfill@\Relbar\Relbar\Relbar}
\newcommand*{\xeq}[2][]{\ext@arrow 0055\Relbarfill@{#1}{#2}}
\title{Optimization of Quadratic Constraints by Decoded Quantum Interferometry}
\author[1]{Daniel Cohen Hillel\footnote{daniel.cohen-hillel@weizmann.ac.il}}
\affil[1]{\small{\it{Department of Condensed Matter Physics, Weizmann Institute of Science, Rehovot, Israel}}}
\colorlet{ColorLightBlue}{blue!65}
\begin{document}

\newcommand{\I}[1]{1_{#1 \times #1}}
\newcommand{\SO}[1]{\text{SO}(#1)}
\newcommand{\SU}[1]{\text{SU}(#1)}
\newcommand{\GL}[1]{\text{GL}(#1)}
\newcommand{\SL}[2]{\text{SL}(#1, #2 )}
\newcommand{\gU}[1]{\text{U}(#1)}
\newcommand{\gO}[1]{\text{O}(#1)}
\newcommand{\so}[1]{\mathfrak{so}(#1)}
\newcommand{\su}[1]{\mathfrak{su}(#1)}
\newcommand{\gl}[1]{\mathfrak{gl}(#1)}
\newcommand{\asl}[1]{\mathfrak{sl}(#1)}
\newcommand{\aU}[1]{\mathfrak{u}(#1)}
\newcommand{\aO}[1]{\mathfrak{o}(#1)}
\newcommand{\tr}[1]{\text{tr\!}\left(#1\right)}
\newcommand{\set}[1]{\left\{#1\right\}}
\let\oldref\ref
\renewcommand{\ref}[1]{(\oldref{#1})}
\newcommand{\TM}{\textsuperscript{\texttt{TM}}\ }
\let\oldautoref\autoref
\renewcommand{\autoref}[1]{(\oldautoref{#1})}

\maketitle
\begin{abstract}
    A recent paper by Jordan et al. introduced Decoded Quantum Interferometry (DQI), a novel quantum algorithm that uses the quantum Fourier transform to reduce linear optimization problems -- max-XORSAT and max-LINSAT -- to decoding problems.
    In this paper, we extend DQI to optimization problems involving quadratic constraints, which we call max-QUADSAT. Leveraging a connection to quadratic Gauss sums, we give an efficient algorithm to prepare the DQI state for max-QUADSAT.\newline
    To demonstrate that our algorithm achieves a quantum advantage, we introduce the Quadratic Optimal Polynomial Intersection (quadratic-OPI) problem, a restricted variant of OPI for which, to our knowledge, the standard DQI framework offers no algorithmic speedup. We show that quadratic-OPI is an instance of max-QUADSAT and use our algorithm to optimize it. \newline
    Lastly, we present a new generalized proof of the "semicircle law" for the fraction of satisfied constraints, generalizing it to any DQI state of problems where the distribution of the number of satisfied constraints for a random assignment is sufficiently close to a binomial distribution. This condition holds exactly for the DQI state of max-LINSAT, and approximately holds in the max-QUADSAT case, with the approximation becoming exponentially better as the problem size increases. This establishes performance guarantees for our algorithm.
    % because when a linear code gives the constraints, they are sufficiently independent.
    % This generalization allows us to use the results of the semicircle law for the case of max-QUADSAT, even though it doesn't satisfy the conditions of the original theorem.
    % to achieve quantum speedup on a problem that's at least as hard as OPI.
    % a better approximation ratio than any polynomial-time classical algorithm known to us.
\end{abstract}
=====

Disclaimer (March 9, 2026): Step 7 of the algorithm contains a mistake that I currently don't know how to fix (see red annotation in \autoref{algo:max_quadsat_diag_quad_no_linear}). This invalidates \autoref{claim:efficient_algo_for_quadsat}/\autoref{algo:max_quadsat_diag_quad_no_linear} as of now. The other results of the paper still hold (mainly, the new proof of the semicircle law). I thank Noah Shutty for pointing out the mistake.

\tableofcontents
%%%%%%%%%%%%%%%%%%%%%%%%%%%%%%%%%%%%%%%%%%%%%%%%%%%%%%%%%%%%%%%%%%%%%%%%%%%%%%%%%%%%%%
%%%%%%%%%%%%%%%%%%%%%%%%%%%%%%%%%%%%%%%%%%%%%%%%%%%%%%%%%%%%%%%%%%%%%%%%%%%%%%%%%%%%%%
%%%%%%%%%%%%%%%%%%%%%%%%%%%%%%%%%%%%%%%%%%%%%%%%%%%%%%%%%%%%%%%%%%%%%%%%%%%%%%%%%%%%%%
% \section{Preliminaries}

% \realclearpage

\section{Background} \label{sec:background}

In a recent paper by Jordan et al. \cite{og_dqi}, the authors present a new kind of algorithm, which they call Decoded Quantum Interferometry (DQI), to optimize the number of satisfied constraints for some constraint satisfaction problems, called \textit{max-XORSAT} and \textit{max-LINSAT}. DQI works by using the quantum
Fourier transform to reduce optimization problems to decoding problems, and was shown to provide an exponential speedup over the best currently known classical approaches for certain instances. DQI has sparked a lot of interest in the scientific community \cite{dqi_influ1, dqi_influ2, dqi_influ3, dqi_influ4}.

The max-LINSAT problem is defined as a set of $m$ constraints of the form $\left\{ f_i(\bb_i \cdot \x) : 1 \le i \le m\right\}$, with vectors $\bb_i \in \F_p^n$ and functions $f_i : \F_p \to \left\{ \pm 1\right\}$, where we refer to $+1$ as satisfied and $-1$ as unsatisfied, where the goal is to find $\x \in \F_p^n$ that satisfies the most constraints.
This problem can be seen as a maximization problem of the function $f(\x) = \sum_{i=1}^m f_i(\bb_i \cdot \x)$. We denote by $B$ the matrix whose rows are $\bb_i$.

The authors suggest preparing the \textit{DQI state}:
\begin{equation} \label{eq:dqi_state_introduction}
    \left| P(f) \right> = \sum_{\x \in \F_p^n} P(f(\x)) \left| \x \right>
\end{equation}
Where $P$ is an appropriately normalized degree-$\ell$ polynomial. One can obtain strings $\x$ that satisfy many constraints with high probability upon measuring this state with an optimally chosen polynomial. The DQI algorithm is then an efficient way to prepare the DQI state. The authors show an efficient algorithm to prepare the QFT of the DQI state when $\bb_i$ are columns of a parity-check matrix of a linear code that can efficiently correct up to $\ell$ errors.

In some cases, the DQI state can be efficiently prepared by noticing that its quantum Fourier transform has a particular structure. It can be shown that the QFT of the DQI state for max-LINSAT can be written as (normalization factors omitted)
\begin{equation}
    \big| \widetilde{P}(f) \big>
    =
    \sum_{k=0}^\ell w_k \sum_{\substack{\y \in \F_p^m\\ |\y| = k}}
    \left( \prod_{\substack{i=1 \\ y_i \ne 0}}^m \tilde{g}_i(y_i) \right) \big| B^T \y   \big>
\end{equation}
Where $w_k$ are the optimally chosen coefficients of the polynomial of degree $\ell$ in \autoref{eq:dqi_state_introduction}, and $\tilde{g}_i$ are the Fourier transforms of the shifted and rescaled constraints functions of $f_i$, for precise definition see \autoref{def:normalized_constraints}. The core of the DQI algorithm is the following steps:
\begin{enumerate}
    \item Prepare the Dicke state \cite{dicke1954coherence}
        \begin{equation}
            \sum_{\substack{\y \in \F_p^m\\ |\y| = k}} \big| \y \big>
        \end{equation}
    \item Reversibly compute the matrix multiplication $B^T \y$ into an ancilla
        \begin{equation}
            \sum_{\substack{\y \in \F_p^m\\ |\y| = k}} \big| \y \big>\big| B^T \y \big>
        \end{equation}
    \item Uncompute the $\left| \y \right>$ register using the $\left| B^T \y \right>$ register, which is exactly the decoding problem of $k$ bit-flip errors, where $B^T\y$ is the syndrome.
        \begin{equation}
            \sum_{\substack{\y \in \F_p^m\\ |\y| = k}} \big| B^T \y \big>
        \end{equation}
\end{enumerate}
Decoding is generally an NP-hard problem, but if we had chosen a max-LINSAT problem such that $B^T$ becomes the parity-check matrix of a linear code with an efficient decoding scheme, then this step could be done efficiently. This was shown, for example, that for the Optimal Polynomial Intersection (OPI) problem, where the $B^T$ matrix becomes exactly the parity-check matrix of a Reed-Solomon code\cite{reedsolomon1960polynomial}, which has efficient decoders up to half the distance of the code, such as the Berlekamp-Massey algorithm \cite{berlekamp2015algebraic}.

For the analysis of the algorithm, the authors find an analytical expression for the expected number of satisfied constraints $s$ upon measuring the DQI state. The expected fraction of satisfied constraints, $\left< s \right> / m$, is given by
\begin{equation} \label{eq:semicircle_introduction}
    \frac{\left< s \right>}{m}
    =
    \left(
    \sqrt{\frac{\ell}{m} \left(1 - \frac{r}{p}\right)}
    +
    \sqrt{\frac{r}{p} \left(1 - \frac{\ell}{m}\right)}
    \right)^2
\end{equation}
For $\frac{r}{p} \le 1 - \frac{\left< s \right>}{m}$ and $\frac{\left< s \right>}{m} = 1$ otherwise. Where $m$ is the number of constraints, $\ell$ is the degree of the polynomial, $p$ is the prime modulus of the field we work in, and $r$ is the number of satisfying assignments for each $f_i$ (which is assumed to be equal for all $i$). This expression is informally referred to as the "semicircle law". This equation was derived by defining the "number of satisfied constraints" observable $S_f$, looking at its expectation value in the DQI state, then finding the maximal value for the optimal polynomial.

% The last important part thing

% TODO? Write about OPI

\section{Results} \label{sec:results}

In this paper, we first consider what happens when one replaces the linear constraints $f_i (\bb_i \cdot \x)$ of max-LINSAT with quadratic form constraints $f_i(\x^T C_i \x)$ for some symmetric matrices $C_i$. We call this problem \textit{max-QUADSAT}.
\begin{definition}[max-QUADSAT]\label{def:max_quadsat}
    Let $\F$ be a finite field. For $i = 1, \ldots, m$, let $f_i : \F \to \set{+1, -1}$ be arbitrary functions.
    Given the vectors $\mathbf{b}_1,\ldots,\mathbf{b}_m \in \F^n$ and the symmetric matrices $C_1, \ldots C_m \in \F^{n \times n}$, the max-QUADSAT problem is to find $\x \in \F^n$ maximizing the objective $
        f(\x) = \sum_{i=1}^m f_i\left(
            \mathbf{b}_i^T \x + \x^T C_i \x
        \right)
    $
\end{definition}
This manuscript will assume the same assumptions made in \cite{og_dqi}. That is, $m = \text{poly}(n)$, and $\F = \F_p$ where $p$ is a prime whose size is polynomial in $n$. Therefore, $f_1, \ldots f_m$ can be specified explicitly. We take $f_i = +1$ to mean that the constraint $i$ is satisfied and $f_i = -1$ to mean the constraint is unsatisfied. Also, we will sometimes denote $B$ as a matrix whose rows are $b_i$, and $C$ as a vector of the matrices $C_i$.
% TODO: Still $C$ vector? TODO: Should this remark be here?
\begin{remark}\label{remark:equal_preimage_assumption}
    We restrict ourselves to situations where the preimage of the constraints $f_i$ has the same cardinality. That is, let $F_i = f_i^{-1}(+1)$ be the preimage set of $+1$ for the $i^{\text{th}}$ constraints, then for all $i = 1,\dots , m$, $r = \left| F_i \right| \in \set{1, \dots, p-1}$, where $r$ does not depend on $i$.
\end{remark}

Our goal is to prepare the DQI state for the objective function of max-QUADSAT, which is done in \autoref{sec:dqi_for_max_quadsat}. The DQI state for max-QUADSAT is the same as the one for max-LINSAT, with a different objective function.
\begin{definition}[DQI state for max-QUADSAT]\label{def:dqi_state_max_quadsat}
    Let $P(f)$ be a degree $\ell$ polynomial of the objective as defined in \autoref{def:max_quadsat}. Then the DQI state for max-QUADSAT is
    \begin{equation} \label{eq:dqi_state}
        \left| P(f) \right> = \sum_{\x \in \F_p^n} P(f(\x)) \left| \x \right>
    \end{equation}
\end{definition}
We can regard $P(f)$ as a multivariate polynomial in the constraints $f_1,\cdots, f_m$. The first significant result is that, in some cases, there's an efficient algorithm to prepare the DQI state for max-QUADSAT.
\begin{theorem}[efficient algorithm for max-QUADSAT DQI] \label{claim:efficient_algo_for_quadsat}
    There's an efficient algorithm, \autoref{algo:max_quadsat_diag_quad_no_linear}, to prepare the DQI state for max-QUADSAT \autoref{def:max_quadsat} when $\bb_i=0$ and the matrices $C_i$ are diagonal, and the diagonals of the matrices $C_i$ are columns of a parity-check matrix that has an efficient decoding scheme.
\end{theorem}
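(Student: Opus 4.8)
The plan is to mirror the max-LINSAT pipeline, replacing its linear Fourier structure with the quadratic Gauss-sum structure that the diagonal hypothesis forces. First I would record the structural consequence of $\bb_i = 0$ and $C_i$ diagonal: writing $\mathbf{c}_i \in \F_p^n$ for the diagonal of $C_i$, the argument of each constraint is $\x^T C_i \x = \sum_j c_{i,j} x_j^2 = \mathbf{c}_i \cdot \x^{(2)}$, where $\x^{(2)} := (x_1^2,\dots,x_n^2)$. Hence $P(f)$, expanded as a degree-$\ell$ multilinear polynomial in $f_1,\dots,f_m$ and then Fourier-expanded in each constraint, takes the form $P(f(\x)) = \sum_{|\mathbf{a}| \le \ell} W(\mathbf{a})\, \omega^{\langle H\mathbf{a},\,\x^{(2)}\rangle}$, where $\omega = e^{2\pi i/p}$, $H$ is the matrix whose columns are the $\mathbf{c}_i$ (the parity-check matrix of the hypothesis), $\mathbf{a}$ ranges over weight-$\le\ell$ vectors, and $W(\mathbf{a})$ collects the coefficients $w_{|\mathbf{a}|}$ together with the normalized Fourier coefficients of the $f_i$. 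Summing over $\x$ then gives the DQI state as $\sum_{\mathbf{a}} W(\mathbf{a}) \bigotimes_j |\phi_{(H\mathbf{a})_j}\rangle$, a superposition of tensor products of single-coordinate ``chirp'' states $|\phi_q\rangle := \sum_{x} \omega^{q x^2} |x\rangle$. This is the exact analogue of the max-LINSAT identity, with the syndrome $B^T\y$ replaced by $H\mathbf{a}$ and the plane-wave basis states replaced by chirp states.

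The second ingredient is the explicit evaluation of the relevant quadratic Gauss sums, which is the promised connection to quadratic Gauss sums. Completing the square gives, for the single-coordinate QFT amplitude, $\tfrac{1}{\sqrt p}\sum_x \omega^{q x^2 - k x}$, equal to $\sqrt p\,[k=0]$ when $q=0$ and to a unit-modulus phase $\omega^{-k^2 (4q)^{-1}}\, \chi(q)\, g/\sqrt p$ when $q \ne 0$, where $\chi$ is the Legendre symbol and $g = \sum_x \omega^{x^2}$ is the standard Gauss sum. Equivalently $|\phi_q\rangle = D_q\, |\phi_0\rangle$, where $D_q := \mathrm{diag}(\omega^{q x^2})$ is a quadratic-phase gate and $|\phi_0\rangle = \sum_x|x\rangle$ is uniform; I would use this to give an efficient circuit for each chirp.

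With these in hand the algorithm proceeds as in max-LINSAT: (i) prepare the weighted Dicke superposition $\sum_{\mathbf{a}} W(\mathbf{a})|\mathbf{a}\rangle$ over weight-$\le\ell$ vectors, folding in the $w_k$ and the constraint Fourier coefficients; (ii) reversibly compute the syndrome $\mathbf{q} = H\mathbf{a}$; (iii) prepare the main register in $|\phi_0\rangle^{\otimes n}$ and apply, coordinate-wise and controlled on $q_j$, the quadratic-phase gate $D_{q_j}$ to synthesize $\bigotimes_j |\phi_{q_j}\rangle$; and (iv) remove the auxiliary $\mathbf{a}$/$\mathbf{q}$ registers using the efficient decoder for $H$, which by hypothesis recovers the weight-$\le\ell$ error $\mathbf{a}$ from $\mathbf{q}$. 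I would formalize this as \autoref{algo:max_quadsat_diag_quad_no_linear} and verify that the surviving amplitudes match $W(\mathbf{a})$.

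The step I expect to be the main obstacle is exactly (iv), the disentangling of the syndrome register. In max-LINSAT the data register literally holds the syndrome $B^T\y$ as an orthonormal computational-basis state, so $\y$ can be recomputed by the decoder and uncomputed, leaving a clean Fourier-space state. In the quadratic case each constraint's Fourier transform spreads over all frequencies -- this is precisely the Gauss sum -- so the branch states $\bigotimes_j |\phi_{q_j}\rangle$ are non-orthogonal and $\mathbf{q} = H\mathbf{a}$ is \emph{not} a function of the main register; indeed $\langle \phi_{q'} | \phi_q \rangle = \chi(q-q')\,g \ne 0$ for $q \ne q'$. Consequently the naive compute--decode--uncompute cannot disentangle the auxiliary register, and any residual entanglement is fatal: tracing it out flattens the main-register measurement statistics to uniform (since $|\langle \x | \bigotimes_j \phi_{q_j}\rangle| \equiv 1$), destroying the interference that concentrates DQI on good assignments. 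Engineering an efficient, coherence-preserving removal of the syndrome register that exploits the explicit Gauss-sum form -- for instance via the metaplectic/Weil action under which $D_c$ shifts the chirp parameter $q \mapsto q+c$ and the QFT sends $q \mapsto -(4q)^{-1}$, or via an enlarged encoding that orthogonalizes the chirps -- is the technical heart of the theorem and the part I would expect to demand the most care.
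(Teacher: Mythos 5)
You have assembled the right ingredients (the chirp/Gauss-sum decomposition, the reduction to a weight-$\le\ell$ superposition plus a syndrome, the explicit evaluation by completing the square), and you have correctly diagnosed that your own step ordering breaks at stage (iv) -- but you then leave that breakage unresolved, and the resolution is precisely the content of the theorem. The gap is real: as you note, once the chirps $\bigotimes_j \left|\phi_{q_j}\right>$ live in a separate main register controlled on $\left|\mathbf{q}\right>$, the syndrome is no longer a function of the main register and no decoder can disentangle it. The paper avoids this entirely by reordering: the decoding is performed \emph{before} any quadratic phase is introduced, while the state is still $\sum_{\y}\left(\prod_i \tilde{g}_i(y_i)\right)\left|\y\right>\left|D^T\y\right>$ with both registers in the computational basis. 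At that point uncomputing $\left|\y\right>$ from the orthonormal syndrome states $\left|D^T\y\right>$ is exactly the max-LINSAT decoding step and goes through verbatim (steps 5--6 of \autoref{algo:max_quadsat_diag_quad_no_linear}). Only afterwards is each coordinate of the syndrome register converted \emph{in place} into a chirp by the map $F_0$ of \autoref{def:F_alpha_very_convinient}, so there is no auxiliary register left to remove and no non-orthogonality problem to engineer around; the final QFT then yields the DQI state by \autoref{claim:fourier_of_DQI_state_max_QUADSAT}.

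The second idea you are missing is how that in-place conversion $\left|q\right>\mapsto\sum_x\omega_p^{qx^2}\left|x\right>$ is implemented at all: it is manifestly non-unitary (distinct chirps are non-orthogonal, as you computed), and the paper realizes it as a postselected subroutine, \autoref{prim:quadratic_phase}, succeeding with constant probability $1/8$ per register, combined with \autoref{prim:quantum_conditions} to handle the $q=0$ branch. Your alternative of a controlled diagonal gate $D_{q_j}$ acting on a fresh uniform register is genuinely unitary but necessarily leaves the control $\left|q_j\right>$ entangled, which is exactly the dead end you identify; the metaplectic/Weil-action and orthogonalizing-encoding ideas you float at the end are not what the paper uses and are not needed. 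So the proposal is not a proof of the theorem as it stands: it reproduces the state identity correctly but does not produce an algorithm, because the one step you flag as the technical heart is solved in the paper by a reordering plus a postselected primitive rather than by any coherence-preserving disentangling.
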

The main difference of this algorithm compared to the algorithm for preparing DQI state of max-LINSAT, is that for max-LINSAT, the information about the functionals $\bb_i$ is encoded in which strings appear in the superposition of the QFT of the DQI state, and for max-QUADSAT, the information about $C_i$ is encoded in the phases. This happens because when we go from a linear function in $x$ to a quadratic $x^2$, a sum over all roots of unity becomes a quadratic Gauss sum \cite{murty2017evaluation}, which has a constant amplitude but varying phase. Full explanation is given in \autoref{sec:dqi_quantum_state_for_max_quadsat}.

We can illustrate the difference between max-LINSAT and max-QUADSAT algorithms by looking at a simple example, where $n=m=\ell =1$, then the DQI state is just $\sum_{x \in \F_p} f_1(c_1 x^2) \left| x \right>$, the QFT of this state is $\sum_{x,y \in \F_p} \omega_p^{x \cdot y} f_1(c_1 x^2) \left| y \right>$, and if we write $f_1$ in terms of its Fourier transform $\tilde{f}_1$, we get
\[
  \sum_{x, y \in \F_p} \omega_p^{x \cdot y} \sum_{z \in \F_p } \omega_p^{-z \cdot c_1 x^2} \tilde{f}_1(z) \left| y \right>
\]
And if we reorder the sums, we get
\[
  \sum_{y,z \in \F_p} \tilde{f}_1(z) \left(\sum_{x \in \F_p}\omega_p^{x \cdot y-z \cdot c_1 x^2}\right) \left| y \right>
\]
The inner sum over $x$ is a quadratic Gauss sum, hence it is a complex value whose magnitude is always $\sqrt{p}$ and its phase contains information about $c_1$, $y$, and $z$. This differs from max-LINSAT, where the inner sum would become a Kronecker delta that would pick out specific values in the superposition. The quantum algorithm is then an efficient algorithm for preparing this state, taking the QFT, and measuring.

This state is prepared similarly to the max-LINSAT case, where one starts from the Dicke state, reversibly computes the matrix multiplication into an ancilla, and decodes it to uncompute the first register. You can use this register to prepare the quadratic phase as shown in \autoref{prim:quadratic_phase}.

Next, in \autoref{sec:problems}, we present the Quadratic Optimal Polynomial Intersection (quadratic-OPI) problem, a variant of OPI that can be written as an instance of max-QUADSAT. The OPI problem, as defined in \cite{og_dqi}, is the following:
\begin{definition}[Optimal Polynomial Intersection] \label{def:opi}
     Given integers $n < p-1$ with $p$ prime, an instance of the Optimal Polynomial Intersection (OPI) problem is as follows. Let $F_1 , \dots , F_{p-1}$ be subsets of the finite field $\F_p$. Find the degree $n - 1$ polynomial in $Q \in \F_p [y]$ that maximizes $f_\text{OPI}(Q) = \left|\set{y \in \set{1, . . . , p - 1} : Q(y) \in \F_y }\right|$, i.e. intersects as many of these subsets as possible.
\end{definition}
Our variant, quadratic-OPI, is almost the same problem as OPI, with the additional requirement that the polynomial coefficients are quadratic residues.
\begin{definition}[Quadratic OPI] \label{def:quadratic_opi}
    Quadratic OPI is an instance of OPI \autoref{def:opi}, with the additional requirement that the coefficients of the optimal polynomial must be quadratic residues.
\end{definition}
This problem is at least as hard as regular OPI, since solutions to quadratic-OPI are solutions to regular OPI, and we achieve a quantum speedup for it using our modified version of DQI. 

\begin{theorem}\label{theorem:quadratic_opi_is_max_quadsat}
    Quadratic OPI is a special case of max-QUADSAT, and the corresponding DQI state can be efficiently prepared by \autoref{algo:max_quadsat_diag_quad_no_linear} with $\ell = \lfloor \frac{n+1}{2}\rfloor$.
\end{theorem}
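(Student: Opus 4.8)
The plan is to exhibit an explicit encoding of quadratic-OPI as a max-QUADSAT instance of exactly the form handled by \autoref{claim:efficient_algo_for_quadsat}, and then read off $\ell$ from the decoding radius of the resulting code. First I would absorb the quadratic-residue constraint on the coefficients directly into the optimization variables. A degree-$(n-1)$ polynomial $Q(y) = \sum_{j=0}^{n-1} a_j y^j$ has quadratic-residue coefficients exactly when each $a_j$ can be written $a_j = x_j^2$ for some $x_j \in \F_p$, so I introduce $\x = (x_0, \dots, x_{n-1}) \in \F_p^n$ as the variable and substitute. This gives $Q(y) = \sum_{j=0}^{n-1} x_j^2\, y^j = \x^T C_y\, \x$ with $C_y = \mathrm{diag}(1, y, y^2, \dots, y^{n-1})$ symmetric (indeed diagonal) and with no linear part, i.e. $\bb_y = 0$. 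Taking one constraint per evaluation point $y \in \set{1, \dots, p-1}$ with $f_y(t) = +1$ iff $t \in F_y$, the max-QUADSAT objective $\sum_{y} f_y(\bb_y^T \x + \x^T C_y \x)$ of \autoref{def:max_quadsat} counts precisely the number of $y$ with $Q(y) \in F_y$, which is the quadratic-OPI objective of \autoref{def:quadratic_opi}. This establishes the first claim. I would note two small points: the equal-preimage hypothesis of \autoref{remark:equal_preimage_assumption} is inherited from the assumption that the $F_y$ have common size $r$, and since $x_j \mapsto x_j^2$ is two-to-one on $\F_p^\times$ (with $0 \mapsto 0$), the superposition ranges over all coefficient vectors whose entries are quadratic residues (or zero), so a measurement of the prepared state returns a genuine quadratic-OPI polynomial.

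Next I would verify the hypotheses of \autoref{claim:efficient_algo_for_quadsat}. By construction $\bb_y = 0$ and each $C_y$ is diagonal, so the only remaining point is that the diagonals $\mathrm{diag}(C_y) = (1, y, y^2, \dots, y^{n-1})$, taken as the columns of an $n \times (p-1)$ matrix as $y$ ranges over $\F_p^\times$, form a parity-check matrix admitting an efficient decoder. But this is exactly the Vandermonde parity-check matrix of a Reed-Solomon code, identical to the one arising for ordinary OPI in \cite{og_dqi}; its syndromes are the power sums $s_j = \sum_y w_y\, y^j$, so the decoding step is solved efficiently by Berlekamp-Massey (Prony's method). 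Invoking \autoref{claim:efficient_algo_for_quadsat} then yields an efficient preparation of the DQI state via \autoref{algo:max_quadsat_diag_quad_no_linear}.

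Finally I would pin down $\ell$. The decoding step must recover a weight-$k$ error from its syndrome for every $k \le \ell$, and the largest such $\ell$ is the unique-decoding radius of the above Reed-Solomon code; matching this radius to the number of available power-sum syndromes is what should produce $\ell = \floor{\frac{n+1}{2}}$. I expect this last arithmetic — correctly relating the $n$ parity checks coming from the diagonal to the number of errors the Berlekamp-Massey step can resolve — to be the main obstacle and the step most in need of care, since it is exactly where an off-by-one in the code parameters would shift the guaranteed $\ell$. Everything else is a direct substitution into the already-established machinery of \autoref{claim:efficient_algo_for_quadsat}.
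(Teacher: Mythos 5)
Your proposal is correct and follows essentially the same route as the paper: encode the residue condition by squaring the optimization variables, take diagonal matrices $C_y = \mathrm{diag}(1, y, \dots, y^{n-1})$ with $\bb_y = 0$, recognize the resulting syndrome matrix as the Vandermonde parity-check matrix of a Reed--Solomon code decodable by Berlekamp--Massey, and invoke \autoref{claim:efficient_algo_for_quadsat}. The only differences are cosmetic (the paper parameterizes the evaluation points as powers of a primitive root $\gamma$), and the paper likewise leaves the final arithmetic for $\ell = \lfloor \frac{n+1}{2}\rfloor$ implicit in the statement that the code has distance $n+1$.
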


The last significant result is that the semicircle law \autoref{eq:semicircle_introduction} is more general than was shown in \cite{og_dqi}, and can be derived differently. This is important for us because the original derivation of the semicircle law is highly dependent on the linear structure of the DQI state, which does not hold for our quadratic case. Our derivation only depends on the probability distribution of a random string to satisfy $s$ constraints of max-LINSAT/QUADSAT. We start by re-deriving the following lemma:

% \begin{claim} \label{claim:semicircle_nonoptimized_introduction} % (re-derivation of \cite[lemma 9.2]{og_dqi})\label{claim:rederive_lemma92}
%     Let $\phi_i: \F_p^n \to \F_p$ be balanced functions (the cardinality of the preimage of each element in $\F_p$ is the same), and let $f(\x) = \sum_{i=1}^m f_i(\phi_i(\x))$ be an objective function for a prime $p$ and positive integers $m$ and $n$, such that $m>n$. Suppose that $\left|f_i^{-1}(+1) \right| = r$ for some $r \in \set{1, \dots, p-1}$. Let $\big< s^{(m, \ell)} \big>$ be the expected number of satisfied constraints for the symbol string obtained upon measuring the DQI state with a polynomial of degree $\ell$. Then
%     \begin{equation} \label{eq:lemma92_result_intorduction}
%         \big< s^{(m, \ell)} \big>
%         =
%         \frac{mr}{p}
%         +
%         \frac{\sqrt{r (p - r)}}{p} \ww^\dagger A^{(m, \ell, d)} \ww
%     \end{equation}
%     where $\ww = (w_0, \dots, w_\ell)^T$ and $A^{(m, \ell, d)}$ is the $(\ell+1) \times (\ell+1)$ symmetric tridiagonal matrix
%     \[
%         A^{(m, \ell, d)}
%         =
%         \begin{pmatrix}
%             0 & a_{1} \\
%             a_{1} & d & a_{2} \\
%             & a_{2} & 2d & \ddots \\
%             & & \ddots  & & a_{\ell}\\
%             &&& a_{\ell} & \ell d
%         \end{pmatrix}
%     \]
%     with $a_k = \sqrt{k(m-k+1)}$ and $d=\frac{p-2r}{\sqrt{r(p-r)}}$
% \end{claim}
% TODO I DONT LIKE $\phi$

% TODO: Write in a general way? Not only max-LINSAT
\begin{theorem} (re-derivation of \cite[lemma 9.2]{og_dqi})\label{claim:rederive_lemma92}
    Let $f(\x) = \sum_{i=1}^m f_i(B_{ij}x_j)$ be a max-LINSAT objective function with matrix $B \in \F_p^{m \times n}$ for a prime $p$ and positive integers $m$ and $n$, such that $m>n$. Suppose that $\left|f_i^{-1}(+1) \right| = r$ for some $r \in \set{1, \dots, p-1}$. Let $\big< s^{(m, \ell)} \big>$ be the expected number of satisfied constraints for the symbol string obtained upon measuring the DQI state \autoref{def:dqi_state_max_quadsat}. If $2\ell + 1 < d^{\perp}$ where $d^\perp$ is the minimum distance of the code $C^\perp = \set{\mathbf{d} \in \F_p^m : B^T \mathbf{d} = \mathbf{0}}$, then
    \begin{equation} \label{eq:lemma92_result}
        \big< s^{(m, \ell)} \big>
        =
        \frac{mr}{p}
        +
        \frac{\sqrt{r (p - r)}}{p} \ww^\dagger A^{(m, \ell, d)} \ww
    \end{equation}
    where $\ww = (w_0, \dots, w_\ell)^T$ and $A^{(m, \ell, d)}$ is the $(\ell+1) \times (\ell+1)$ symmetric tridiagonal matrix
    \[
        A^{(m, \ell, d)}
        =
        \begin{pmatrix}
            0 & a_{1} \\
            a_{1} & d & a_{2} \\
            & a_{2} & 2d & \ddots \\
            & & \ddots  & & a_{\ell}\\
            &&& a_{\ell} & \ell d
        \end{pmatrix}
    \]
    with $a_k = \sqrt{k(m-k+1)}$ and $d=\frac{p-2r}{\sqrt{r(p-r)}}$
\end{theorem}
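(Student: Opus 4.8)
The plan is to compute the expectation directly as a classical average and show that the hypothesis $2\ell+1 < d^\perp$ collapses it to a computation governed entirely by the binomial distribution $\mathrm{Bin}(m, r/p)$. Writing $s(\x)$ for the number of satisfied constraints at $\x$ and using $f(\x) = \sum_i f_i = 2s(\x) - m$, i.e. $s(\x) = \tfrac12(m + f(\x))$, the measurement probabilities of the (real) DQI state give
\[
\big< s^{(m,\ell)}\big> = \frac{\sum_{\x} P(f(\x))^2\, s(\x)}{\sum_{\x} P(f(\x))^2} = \frac{m}{2} + \frac{1}{2}\,\frac{\mathbb{E}_{\x}\!\left[P(f(\x))^2 f(\x)\right]}{\mathbb{E}_{\x}\!\left[P(f(\x))^2\right]},
\]
where $\mathbb{E}_\x$ is the uniform average over $\x \in \F_p^n$. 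Thus the entire problem reduces to evaluating moments of $f$, weighted by $P(f)^2$, under the uniform law on $\x$.

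The key step, and the main obstacle, is to show that these weighted moments depend only on the binomial law. Expanding each constraint in its additive Fourier series, $f_i(\bb_i\cdot\x) = \sum_{y\in\F_p}\tilde f_i(y)\,\omega_p^{y\,\bb_i\cdot\x}$, a product $\prod_{i\in S} f_i(\bb_i\cdot\x)$ over a subset $S$ averages to $\sum_{\set{y_i}:\ \sum_{i\in S} y_i \bb_i = 0} \prod_{i\in S}\tilde f_i(y_i)$. Any solution $\set{y_i}$, extended by zero outside $S$, is a codeword of $C^\perp$, so if it is nonzero its Hamming weight is at least $d^\perp$; hence whenever $|S| < d^\perp$ the only solution is $y_i = 0$ for all $i\in S$, giving $\mathbb{E}_\x[\prod_{i\in S} f_i] = \prod_{i\in S}\tilde f_i(0) = \mu^{|S|}$ with $\mu = \tfrac1p\sum_u f_i(u) = \tfrac{2r-p}{p}$ (independent of $i$ by \autoref{remark:equal_preimage_assumption}). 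Since $f_i^2 = 1$, every monomial in $P(f)^2 f$ collapses to a product over at most $2\ell+1$ distinct indices, and $2\ell+1 < d^\perp$ guarantees factorization for all of them (the weaker $2\ell<d^\perp$ suffices for the denominator, while the numerator's extra factor of $f$ needs exactly the stated bound). Consequently the $f_i$ behave as $(2\ell+1)$-wise independent $\pm1$ variables with mean $\mu$, and all moments $\mathbb{E}_\x[f^j]$ with $j\le 2\ell+1$ coincide with those of $f = 2s-m$ for $s\sim\mathrm{Bin}(m,r/p)$. I expect the care in tracking distinct versus repeated indices, and confirming that degree $2\ell+1$ is exactly covered, to be the delicate part.

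With the moments fixed, both numerator and denominator may be computed in the binomial model. Let $\set{q_k}_{k\ge0}$ be the polynomials orthonormal with respect to $\mathrm{Bin}(m,r/p)$ (the normalized Krawtchouk polynomials) and expand $P = \sum_{k=0}^\ell w_k q_k$; then $\mathbb{E}[P^2] = \sum_k w_k^2 = \|\ww\|^2$ and $\mathbb{E}[P^2 f] = \ww^\dagger M \ww$, where $M$ is the Jacobi (tridiagonal) matrix of multiplication by $f$ in the basis $\set{q_k}$. To identify $M$ I will use the spin picture: a single $\pm1$ variable has Jacobi matrix $\mu Z + \sigma X$ with $\sigma = 2\sqrt{r(p-r)}/p$, and multiplication by the sum $f = \sum_i f_i$ restricts, on the degree-graded symmetric subspace of dimension $m+1$, to $M = 2\mu J_z + \sigma(J_+ + J_-)$ for spin $j = m/2$. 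The Dicke/angular-momentum basis then gives $M_{kk} = \mu(m-2k)$ and $M_{k,k-1} = M_{k-1,k} = \sigma\sqrt{k(m-k+1)} = \sigma a_k$, which is exactly where the off-diagonal entries $a_k = \sqrt{k(m-k+1)}$ of the statement originate.

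Finally I will substitute and match constants. With the standard normalization $\|\ww\|=1$ the reduction gives $\big< s^{(m,\ell)}\big> = \tfrac{m}{2} + \tfrac12 \ww^\dagger M \ww$. Collecting the diagonal part produces the constant $\tfrac{m}{2}(1+\mu) = \tfrac{mr}{p}$ together with $-\mu\sum_k k\,w_k^2$, and the off-diagonal part produces $\sigma\sum_k a_k w_{k-1}w_k$. Using $\tfrac{\sqrt{r(p-r)}}{p}\,d = \tfrac{p-2r}{p} = -\mu$ and $\tfrac{2\sqrt{r(p-r)}}{p} = \sigma$ with $d = \tfrac{p-2r}{\sqrt{r(p-r)}}$, these combine precisely into $\tfrac{\sqrt{r(p-r)}}{p}\,\ww^\dagger A^{(m,\ell,d)}\ww$, where $A^{(m,\ell,d)}$ has diagonal $0,d,2d,\dots,\ell d$ and off-diagonals $a_k$, completing the derivation.
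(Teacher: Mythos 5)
Your proposal is correct and follows the same overall architecture as the paper's proof: reduce $\big<s\big>$ to an average of $s$ against the weight $|P|^2 N(s)$, show that $N(s)$ matches the binomial $\mathrm{Bin}(m,r/p)$ up to moments of order $2\ell+1$ using $2\ell+1<d^\perp$, and then evaluate the quadratic form via orthogonal polynomials for the binomial weight. Two of your sub-arguments, however, genuinely differ from the paper's. For the moment-matching step, the paper argues that every set of $2\ell+1$ rows of $B$ is linearly independent, so the corresponding marginal of $B\x$ is uniform (its \autoref{claim:linsat_is_uniform}) and the $s_i$ are $(2\ell+1)$-wise independent Bernoulli variables; you instead expand each $f_i$ in additive characters and observe that $\mathbb{E}_\x\big[\prod_{i\in S}f_i\big]$ factorizes whenever no nonzero codeword of $C^\perp$ is supported on $S$, i.e.\ whenever $|S|<d^\perp$. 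These are equivalent in content, but your version is more self-contained and makes the role of the dual distance explicit at the level of individual monomials (including the correct bookkeeping that the denominator needs only $2\ell<d^\perp$ while the numerator's extra factor of $f$ uses the full $2\ell+1<d^\perp$). For the final step, the paper cites the three-term recurrence of the Krawtchouk polynomials and then fixes signs by conjugating with $\mathrm{diag}((-1)^k)$; you instead derive the Jacobi matrix of multiplication by $f$ from the spin-$m/2$ restriction of $\sum_i(\mu Z_i+\sigma X_i)$, which produces the off-diagonal entries $a_k=\sqrt{k(m-k+1)}$ as angular-momentum ladder coefficients and lands directly on the matrix $A^{(m,\ell,d)}$ with positive off-diagonals, no sign flip needed. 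The price is that you must justify the identification of the orthonormal-polynomial basis $\{q_k\}$ with the Dicke basis of symmetrized products $\binom{m}{k}^{-1/2}\sum_{|S|=k}\prod_{i\in S}\frac{f_i-\mu}{\sigma}$ (they agree up to signs because both are orthonormal bases adapted to the same degree filtration on the $(m+1)$-dimensional symmetric subspace); this is worth one explicit sentence in a full write-up. Your constant-matching in the last step checks out: $\frac{m}{2}(1+\mu)=\frac{mr}{p}$, $-\mu=\frac{\sqrt{r(p-r)}}{p}d$, and $\sigma=\frac{2\sqrt{r(p-r)}}{p}$ reproduce \autoref{eq:lemma92_result} exactly.
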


% This claim is a generalization of Lemma 9.2 of \cite{og_dqi}, if we use $\phi_i(\x) = \bb_i \cdot \x$ we get it back. max-QUADSAT is also a special case, if we use $\phi_i (\x) = \x^T C_i \x$.
The semicircle law arises by finding the maximal eigenvalue of $A^{(m, \ell, d)}$. Using it to get the maximal value of $\ww^\dagger A^{(m, \ell, d)} \ww$ when we are constrained by normalization to $\ww^\dagger \ww = 1$, this is precisely the contents of Theorem 4.1 in \cite{og_dqi}. A similar matrix problem was also encountered in \cite{barg2006spectral} from the same context of the recurrence relation of the Krawtchouk polynomials. We also note that related results were found in \cite{marwaha2025complexitydecodedquantuminterferometry} after we submitted this paper.

The following is a sketch of the proof of \autoref{claim:rederive_lemma92}

\begin{enumerate}
    \item Show that the probability of measuring a string $\x \in \F_p^n$ that satisfies $s$ constraints in the DQI state depends only on the probability distribution $N(s)$ of satisfying $s$ constraints for a random string. This probability distribution is $\text{Pr}_{\text{DQI}}[s] = |P^*(s)|^2 N(s)$, where $P^*$ is related to the DQI polynomial $P(f)$ by shifting and rescaling. This fact arises immediately by looking at the $m$ subspaces of strings that satisfy the same number of constraints.
    \item Show that for max-LINSAT, $N(s)$ is very close to a binomial distribution. It is so close that the expectation value of $s$ in the DQI state remains unchanged if $N(s)$ is replaced by a binomial distribution, $N_{\text{binom}}(s)$, where the binomial distribution arises by treating each constraint as an independent variable that has probability $r/p$ of being satisfied. This is true because $B^T$ is a parity-check matrix of a linear code that can correct up to $\ell$ errors, so every set of $2\ell+1$ columns in $B^T$ is independent, hence any subset of $2\ell+1$ constraints is independent. Because of this, the first $2\ell + 1$ moments of $N(s)$ and $N_{\text{binom}}(s)$ match, which is enough to show the expectation is the same for the distributions $|P(s)|^2 N(s)$ and $|P(s)|^2 N_{\text{binom}}(s)$.
    \item Turn the problem of finding a polynomial that maximizes the expectation $\left< s \right>_{\text{DQI}} \propto \sum_{s=0}^m s |P(s)|^2 N(s)$ into a linear algebra problem by defining an inner product $\left< p, q \right>_{\text{binom}}:= \sum_{s=0}^m p(s) q(s) N_{\text{binom}}(s)$, then $\left< s \right>_{\text{DQI}} \propto \left< sP, P \right>$. Now we can use the Krawtchouk polynomials \cite{krawtchouk_koekoek1996askeyschemehypergeometricorthogonalpolynomials}, which are a set of polynomials that are orthogonal with respect to this inner product. Using the three-term recurrence relation of the Krawtchouk polynomials (guaranteed by Favard's theorem \cite{chihara2011introduction}), we get \autoref{eq:lemma92_result}.
\end{enumerate}

% \begin{claim} \label{claim:dqi_state_is_from_N_s_introduction}
%     For the DQI state \autoref{eq:dqi_state_introduction}, the probability to measure a string $\x$ that achieves objective $f$ is
%     \[
%         \text{Pr}_{\text{\tiny DQI}}[f] = \left| P(f) \right|^2 N(f)
%     \]
%     Where $P$ is the DQI polynomial, and $N(f)$ is the probability for a random $\x \in \F_p^n$ to achieve objective $f$.
% \end{claim}

All of this was for max-LINSAT, but because the only property we needed is that any set of $2\ell + 1$ constraints is independent, the proof can also work for max-QUADSAT.
\begin{claim}[semicircle holds for max-QUADSAT] \label{claim:quadsat_also_binomial}
    The result of \autoref{claim:rederive_lemma92}, equation \autoref{eq:lemma92_result}, is also true for the DQI state of max-QUADSAT when $\bb_i = 0$ and the $C_i$ matrices are diagonal, in the limit $\text{rank}(C_i) \to \infty$ for all $i$.
\end{claim}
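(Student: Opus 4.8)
The plan is to reduce the claim to the single structural fact identified in the proof sketch of \autoref{claim:rederive_lemma92}: the tridiagonal formula \autoref{eq:lemma92_result} follows whenever the first $2\ell+1$ moments of the satisfaction distribution $N(s)$ agree with those of the binomial distribution $N_{\text{binom}}(s)$, and this in turn follows from the independence of every subset of at most $2\ell+1$ constraints. So it suffices to show that, for max-QUADSAT with $\bb_i = 0$ and diagonal $C_i$, the joint distribution of any $k \le 2\ell+1$ constraint values $\set{\x^T C_{i_1}\x, \dots, \x^T C_{i_k}\x}$ over a uniform $\x \in \F_p^n$ converges to the uniform distribution on $\F_p^k$ as the ranks grow, so that the associated constraints become asymptotically independent.

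First I would compute the joint characteristic function of these $k$ constraint values using additive characters. Writing $C_i = \mathrm{diag}(c_{i,1}, \dots, c_{i,n})$, the $i$-th constraint argument is $\x^T C_i \x = \sum_{a=1}^n c_{i,a}\, x_a^2$, so for any $\mathbf{t} = (t_1, \dots, t_k) \in \F_p^k$,
\[
    \mathbb{E}_{\x}\!\left[ \omega_p^{\sum_{j=1}^k t_j\, \x^T C_{i_j} \x} \right]
    =
    \prod_{a=1}^n \left( \frac{1}{p} \sum_{x \in \F_p} \omega_p^{\alpha_a x^2} \right),
    \qquad
    \alpha_a = \sum_{j=1}^k t_j\, c_{i_j, a},
\]
where the factorization over coordinates uses that the entries of $\x$ are independent and uniform. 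Each factor is a normalized quadratic Gauss sum: it equals $1$ when $\alpha_a = 0$ and has magnitude $p^{-1/2}$ when $\alpha_a \ne 0$ \cite{murty2017evaluation}. Hence the characteristic function has magnitude $p^{-\mathrm{wt}(\boldsymbol{\alpha})/2}$, where $\boldsymbol{\alpha} = \sum_{j} t_j \mathbf{c}_{i_j}$ is the corresponding linear combination of the diagonal vectors $\mathbf{c}_i = (c_{i,1}, \dots, c_{i,n})$ and $\mathrm{wt}(\cdot)$ is the Hamming weight.

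Next I would invoke the coding hypothesis of \autoref{claim:efficient_algo_for_quadsat}: the vectors $\mathbf{c}_i$ are columns of a parity-check matrix correcting $\ell$ errors, so any $2\ell+1$ of them are linearly independent. Consequently $\boldsymbol{\alpha} \ne 0$ for every $\mathbf{t} \ne 0$ with $k \le 2\ell+1$, and every nontrivial Fourier coefficient of the joint distribution is bounded by $p^{-w^\ast/2}$, where $w^\ast$ is the minimum Hamming weight of a nonzero combination of at most $2\ell+1$ of the columns. A crude Fourier union bound then gives total-variation distance from uniform at most $p^{\,2\ell+1} \cdot p^{-w^\ast/2}$. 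Since for diagonal matrices $\mathrm{rank}(C_i)$ equals the weight of $\mathbf{c}_i$, the hypothesis $\mathrm{rank}(C_i) \to \infty$ is meant to drive $w^\ast \to \infty$, the deviation from uniform vanishes exponentially in $w^\ast$, and the constraints become exactly independent in the limit. Feeding this back into the moment-matching step of \autoref{claim:rederive_lemma92} reproduces \autoref{eq:lemma92_result} verbatim.

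The main obstacle is the passage from $\mathrm{rank}(C_i) \to \infty$ to $w^\ast \to \infty$: while each individual column $\mathbf{c}_i$ has weight $\mathrm{rank}(C_i)$, a combination $\sum_j t_j \mathbf{c}_{i_j}$ could in principle have small weight through cancellation among overlapping supports, so linear independence alone does not immediately lower-bound $\mathrm{wt}(\boldsymbol{\alpha})$. Making the claim precise therefore requires controlling this cancellation — for instance via a genericity or bounded-overlap condition on the supports of the $\mathbf{c}_i$, or by directly bounding the minimum combination weight — and quantifying the resulting exponential rate $p^{-w^\ast/2}$, which is the source of the exponentially improving approximation advertised in the abstract.
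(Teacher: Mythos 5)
Your strategy is essentially the paper's: show the quadratic constraint values are close to uniform via square-root cancellation of quadratic Gauss sums, then feed this into the moment-matching machinery of \autoref{claim:dqi_dist_like_binom}. The difference is in what gets proven. The paper only establishes the single-constraint marginal bound \autoref{eq:quadsat_is_almost_uniform_introduction} and then asserts that ``all conditions of \autoref{claim:dqi_dist_like_binom} hold'' in the limit, whereas you compute the joint characteristic function of $k \le 2\ell+1$ constraint values --- which is the quantity the moment-matching lemma actually requires, since its max-LINSAT proof uses joint uniformity (hence independence) of any $2\ell+1$ coordinates of $B\x$, not just marginal uniformity of each one. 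Your calculation, with the factorization over coordinates and the bound $p^{-\mathrm{wt}(\boldsymbol{\alpha})/2}$ on each nontrivial Fourier coefficient, is the correct multivariate generalization of the paper's proof of \autoref{claim:quadsat_is_almost_uniform_intorduction} (which is exactly your $k=1$ case).

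The obstacle you flag at the end is genuine, and it is present --- unacknowledged --- in the paper as well: the decay rate is governed by $\mathrm{wt}\bigl(\sum_j t_j \mathbf{c}_{i_j}\bigr)$, and neither $\mathrm{rank}(C_i)\to\infty$ (which controls only the weight of each individual diagonal) nor the parity-check hypothesis (which only guarantees the combination is nonzero for $0 < k < d$) lower-bounds the weight of a linear combination with overlapping supports. The paper never confronts this because it never writes down the joint distribution. So your proposal is not a finished proof, but the missing step reflects a hypothesis gap in the claim as stated rather than a flaw in your route; closing it requires either an added minimum-weight assumption on the syndromes $D^T\y$ for $|\y| \le 2\ell+1$, or direct verification for the codes of interest (e.g.\ the Reed--Solomon diagonals of quadratic-OPI). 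Your version should be preferred, with that caveat made explicit.
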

The proof of this is almost identical to the new proof of \autoref{claim:rederive_lemma92}. The only caveat is that, unlike max-LINSAT, if $\x \in \F_p^n$ is uniformly random in $\F_p^n$, then $\x^T C_i \x$ is \textit{not} uniformly random in $\F_p$, so we can't claim every constraint independently has probability $r/p$ to be satisfied. Fortunately, thanks to the well-known square root cancellation property of character sums, $\x^T C_i \x$ becomes exponentially close to being uniformly random in $\F_p$ as the rank of $C_i$ increases. The exact claim is:
\begin{claim}[quadratic form is close to uniform] \label{claim:quadsat_is_almost_uniform_intorduction}
    Let $\x \in \F_p^n$ be uniformly distributed, and let $C \in \F_p^{n \times n}$ be a diagonal matrix of rank $r$. Then for all $a \in \F_p$, the probability $\text{Pr}\left(\x^T C \x = a\right)$ is exponentially close to a uniform distribution, that is
    \begin{equation} \label{eq:quadsat_is_almost_uniform_introduction}
        \left|\text{Pr}\left(\x^T C \x = a\right) - \frac{1}{p}\right| \le p^{-r/2}
    \end{equation}
\end{claim}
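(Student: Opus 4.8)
The plan is to expand the point-count probability using additive characters and reduce everything to a product of one-dimensional quadratic Gauss sums, each of magnitude exactly $\sqrt{p}$. Writing $\omega_p = e^{2\pi i / p}$, I would start from the orthogonality relation $\frac{1}{p}\sum_{t \in \F_p}\omega_p^{ts} = \mathbbm{1}[s=0]$ to rewrite the indicator of the event $\x^T C \x = a$, giving
\[
  \text{Pr}\left(\x^T C \x = a\right)
  = \frac{1}{p^n}\sum_{\x \in \F_p^n}\frac{1}{p}\sum_{t \in \F_p}\omega_p^{t(\x^T C \x - a)}.
\]
The $t = 0$ term contributes exactly $\tfrac{1}{p}$, so the quantity to be bounded is
\[
  \text{Pr}\left(\x^T C \x = a\right) - \frac{1}{p}
  = \frac{1}{p^{n+1}}\sum_{t \ne 0}\omega_p^{-ta}\sum_{\x \in \F_p^n}\omega_p^{t\,\x^T C \x}.
\]

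Next I would exploit that $C = \mathrm{diag}(c_1,\dots,c_n)$ is diagonal, so $\x^T C \x = \sum_{i} c_i x_i^2$ with no cross terms, and the inner character sum factorizes over the coordinates as $\prod_{i=1}^n \sum_{x_i \in \F_p}\omega_p^{t c_i x_i^2}$. For each of the $n - r$ coordinates with $c_i = 0$ the factor is $\sum_{x_i \in \F_p} 1 = p$, while for each of the $r$ coordinates with $c_i \ne 0$ the factor is a quadratic Gauss sum $\sum_{x \in \F_p}\omega_p^{(t c_i) x^2}$ with $t c_i \ne 0$ (using $t \ne 0$). The key classical input, which is exactly the square-root-cancellation fact already cited in the paper, is that for any $u \ne 0$ one has $\sum_{x \in \F_p}\omega_p^{u x^2} = \left(\frac{u}{p}\right)\sum_{x \in \F_p}\omega_p^{x^2}$ with $\bigl|\sum_{x \in \F_p}\omega_p^{x^2}\bigr| = \sqrt{p}$, so each such factor has magnitude exactly $\sqrt{p}$. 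Hence the inner sum has magnitude $p^{\,n-r}\cdot p^{\,r/2} = p^{\,n - r/2}$, uniformly in $t$.

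Finally I would apply the triangle inequality over the $p-1$ values $t \ne 0$:
\[
  \left|\text{Pr}\left(\x^T C \x = a\right) - \frac{1}{p}\right|
  \le \frac{1}{p^{n+1}}\sum_{t \ne 0} p^{\,n - r/2}
  = \frac{p-1}{p}\,p^{-r/2}
  \le p^{-r/2},
\]
which is the claimed bound, exponentially small in $r$. I do not expect a genuine obstacle here: the argument is a routine character-sum computation, and the only nontrivial ingredient is the magnitude $\sqrt{p}$ of the quadratic Gauss sum. The one point that requires care is bookkeeping of the rank, namely identifying $r$ with the number of nonzero diagonal entries and checking that each such entry, paired with $t \ne 0$, genuinely produces a nonzero coefficient and therefore a full $\sqrt{p}$ cancellation factor; the zero entries must contribute the trivial factor $p$ and cancel against the $p^n$ normalization exactly as above.
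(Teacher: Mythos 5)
Your proof is correct and follows essentially the same route as the paper's: expand the indicator of $\x^T C \x = a$ via additive characters, factor the inner sum over coordinates into $r$ quadratic Gauss sums of magnitude $\sqrt{p}$ (and trivial factors of $p$ for the zero diagonal entries), and bound the nonzero-frequency contribution. The only difference is that you finish with a triangle inequality over $t \ne 0$, whereas the paper evaluates the remaining sum over the dual variable exactly (splitting into even and odd $r$) before bounding; both yield the same worst-case value $\frac{p-1}{p}\,p^{-r/2}$, so nothing is lost.
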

This is true because of square root cancellation, and proven exactly in \autoref{sec:opti_app_to_max_quadsat}.

\section{Discussion and Open Questions} \label{sec:discussion}

In this paper, we presented an algorithm that extends DQI to problems of quadratic constraints. We then demonstrated its power by applying it to the quadratic-OPI problem. Lastly, we found a new proof for the semicircle law, which also generalizes the conditions under which this law holds. Specifically, it holds for the DQI state in max-QUADSAT. This demonstrates that our algorithm provides a quantum advantage for the quadratic-OPI problem.

% Write why DQI breaks in the quadratic case (not here, after theorem 1)

% This consitutes a gneralization, this gen is not trivial because the fact that the states are quadratic  breaks the structe

There are several promising directions for future work. First, it would be natural to consider constraint families that combine linear and quadratic terms at the same time (e.g., functionals would be of the form $\bb_i \cdot \x + \x^T C_i \x$). Second, we restricted attention to quadratic forms whose matrices are diagonal. Relaxing this diagonal assumption would substantially enlarge the problems that can be represented (because we could describe constraints on products $x_ix_j$), and it would also allow us to apply DQI to more interesting codes (because the decoding problem becomes decoding of a syndrome matrix, as opposed to a syndrome vector). We conjecture that fully removing the diagonality restriction in the general case is not achievable using the exact same techniques of this paper, but there could be interesting families of matrices for which our methods still work.

It may also be interesting to explore what new problems can be described with max-QUADSAT. We note that every instance of max-LINSAT has a corresponding quadratic version that has the same performance guarantees of DQI, similarly to how quadratic-OPI was defined.

% The DQI algorithm is a novel quantum algorithm that's a fertile ground for many new ideas and improvements. This paper showed one direction these ideas can take (quadratic constraints). Still, there are probably many other directions one could go.

% There are still open questions regarding DQI for max-QUADSAT. Can we prepare the DQI state for the most general max-QUADSAT case, with non-diagonal matrices $C_i$, and linear constraints $\bb_i$ simultaneously? Preparing such a state would open many possibilities for what codes we can use (because the problem would become decoding of a matrix instead of a vector) and what problems we can describe (for example, we could constrain products $x_ix_j$). We suspect such generalization is impossible, but can it be proven? Even so, are there families of matrices $C_i$ that are not diagonal, for which an efficient algorithm exists to prepare the appropriate DQI state?

% TODO: I think this discussion section is not good.

\textbf{Acknowledgments:} I thank Dorit Aharonov for introducing me to this problem and for her guidance.

\section{DQI for max-QUADSAT} \label{sec:dqi_for_max_quadsat}

\subsection{DQI Quantum State for max-QUADSAT} \label{sec:dqi_quantum_state_for_max_quadsat}
Here, we will see what happens to the DQI state when considering a quadratic functional $\x^T C_i \x$ for the constraint, as opposed to $\bb_i \cdot \x$ that was used for max-LINSAT.

For reasons that would become apparent later, it would be convenient to work with "shifted and rescaled" constraints, defined as
\begin{definition}[shifted-rescaled constraints]\label{def:normalized_constraints}
    Let $f_i$ be the constraint $i$ in max-QUADSAT \autoref{def:max_quadsat}. Then we define
    \[
        g_i(x) := \frac{f_i(x) - \overline{f}}{\varphi}
    \]
    where $\overline{f} = \frac{1}{p}\sum_{x \in \F_p} f_i(x)$ is the mean of $f_i$, and $\varphi = \sqrt{\sum_{y \in \F_p}\left|f_i(y) - \overline{f}_i\right|^2}$ is the standard deviation of $f_i$.
\end{definition}
\begin{remark}
The reason that the mean and standard deviation of $f_i$ were the same for all $i=1, \dots, m$ in \autoref{def:normalized_constraints} is that we assumed that the preimage of $\pm 1$ had the same cardinality; see \autoref{remark:equal_preimage_assumption}. \footnote{It follows that $\overline{f} = \frac{(+1)\cdot r + (-1) \cdot (p - r)}{p} = 2 \frac{r}{p} - 1$ and
    $\varphi 
    = \sqrt{r|1 - (2 \frac{r}{p} - 1) |^2 + (p-r)|-1 - (2 \frac{r}{p} - 1) |^2} 
    = 2\sqrt{r(1-\frac{r}{p})}$}
\end{remark}

From \autoref{def:normalized_constraints} it immediately follows that the Fourier transform of $g_i$,
\begin{equation} \label{eq:g_fourier}
    \tilde{g}_i(y) = \frac{1}{\sqrt{p}} \sum_{x \in \F_p} \omega_p^{xy} g_i(x)
\end{equation}
vanishes at $y=0$ and is normalized (i.e., $\sum_{x \in \F_p} \left| g_i(x) \right|^2 = \sum_{y \in \F_p} \left| \tilde{g}_i(y) \right|^2 = 1$).

The new "objective function" $g(\x)$ is defined straightforwardly as $g(\x) = \sum_{i=1}^m g_i(b_i^T \x + \x^T C_i \x)$, and is related to the original $f(\x)$ by 
\begin{equation}\label{eq:g_to_f_relation}
    f(\x) = g(\x) \varphi + m \overline{f}
\end{equation}

\begin{definition}[elementary symmetric polynomials]
    Let $k=0,1,2,\dots$ and $m=1,2,\dots$. The degree-$k$ elementary symmetric polynomial of $m$-variables is defined as
    \begin{equation}\label{eq:elementary_symmetric_polynomial}
        P^{(k)}(f_1, \dots, f_m) = \sum_{\substack{i_1, \dots , i_k \\ \text{distinct}} } \prod_{j=1}^k f_{i_j}
    \end{equation}
\end{definition}

\begin{claim}[DQI polynomial can be written as sum of elementary symmetric polynomials] \label{claim:dqi_polynomial_can_be_written_as_sum_of_symmetric_polys}
    The polynomial from \autoref{def:dqi_state_max_quadsat} can be written as a linear combination of the elementary symmetric polynomials
    \[
        P(f) = \sum_{k=0}^\ell u_k P^{(k)}(f_1,\ldots,f_m)
    \]
    Where $P^{(k)}(f_1,\ldots,f_m)$ is the degree-$k$ elementary symmetric polynomial, and $u_k$ is a scalar. Moreover, we can transform the polynomial $P(f(\x))$ into an equivalent polynomial $Q(g(\x))$ of the same degree by using \autoref{eq:g_to_f_relation} and absorbing $\varphi$ and $m\overline{f}$ into the coefficients.
\end{claim}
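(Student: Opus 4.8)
The plan is to rest the entire argument on the single structural fact that each constraint $f_i$ is $\set{\pm 1}$-valued, so that $f_i^2 = 1$ holds identically; this is the linchpin. First I would regard $P$ as an ordinary univariate polynomial, $P(f) = \sum_{k=0}^\ell c_k f^k$, and substitute the objective $f(\x) = \sum_{i=1}^m f_i$. Expanding each power, $f^k = \bigl(\sum_{i=1}^m f_i\bigr)^k$, yields a sum of monomials $f_{i_1} f_{i_2} \cdots f_{i_k}$ in which indices may repeat. Applying $f_i^2 = 1$ to cancel every repeated pair, each monomial collapses to the product of those constraints whose index occurs an odd number of times, i.e.\ a squarefree (multilinear) monomial whose degree is at most $k \le \ell$.

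The resulting expression is therefore a multilinear polynomial in $f_1, \ldots, f_m$, and it is symmetric, since $f = \sum_i f_i$ and hence every power $f^k$ is invariant under permuting the $f_i$. The next step is to invoke the standard fact that the elementary symmetric polynomials $\set{P^{(k)}}_{k=0}^m$ form a basis for the space of symmetric multilinear polynomials in $m$ variables: by symmetry the coefficient of a squarefree monomial $\prod_{i\in S} f_i$ depends only on $|S|$, so such a polynomial is constant on each degree level, and $P^{(k)}$ is exactly the indicator of level $k$. Because the reduction above never creates a squarefree monomial of degree exceeding $\ell$, only $P^{(0)}, \ldots, P^{(\ell)}$ can appear, and collecting coefficients gives $P(f) = \sum_{k=0}^\ell u_k P^{(k)}(f_1, \ldots, f_m)$. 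Linear independence of the $P^{(k)}$ makes the $u_k$ unique, and $u_\ell \ne 0$ exactly when $\deg P = \ell$.

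For the ``moreover'' clause I would simply change variables using $f(\x) = \varphi\, g(\x) + m\overline{f}$ from \autoref{eq:g_to_f_relation}. Setting $Q(t) := P(\varphi t + m\overline{f})$ produces the identity $P(f(\x)) = Q(g(\x))$, with the scaling $\varphi$ and the shift $m\overline{f}$ absorbed into the coefficients of $Q$. Since $\varphi = 2\sqrt{r(1 - r/p)} \ne 0$ under the standing assumption $r \in \set{1, \ldots, p-1}$ (\autoref{remark:equal_preimage_assumption}), the affine substitution $t \mapsto \varphi t + m\overline{f}$ is invertible and degree-preserving, so $Q$ again has degree exactly $\ell$; if desired, the same two-valued reduction (each $g_i$ satisfies a quadratic relation, so $g_i^2$ is linear in $g_i$) re-expresses $Q(g)$ in the elementary symmetric polynomials of $g_1,\ldots,g_m$.

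I expect the only genuine subtlety to lie in the degree bookkeeping of the first step: one must verify that reducing repeated factors via $f_i^2 = 1$ can only decrease the number of surviving distinct factors, so that no elementary symmetric polynomial of degree larger than $\ell$ is ever created. Everything else is the routine observation that symmetric multilinear polynomials are spanned by the $P^{(k)}$, combined with a single degree-preserving affine reparametrization.
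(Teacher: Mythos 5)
Your argument is correct and is essentially the standard one: the paper itself gives no proof here beyond citing Appendix A of \cite{og_dqi}, and that appendix proceeds exactly as you do, using the two-valuedness $f_i^2=1$ to collapse $f^k=\bigl(\sum_i f_i\bigr)^k$ into a symmetric multilinear polynomial of degree at most $\ell$, which must lie in the span of $P^{(0)},\dots,P^{(\ell)}$, followed by the invertible affine substitution $f=\varphi g+m\overline{f}$ for the ``moreover'' clause. Your writeup would serve as a valid self-contained replacement for the citation.
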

\begin{proof}
    See Appendix A of \cite{og_dqi}.% \footnote{In short, it uses the elementary theorem of symmetric polynomials, and the fact that the highest degree of each variable is $1$}
\end{proof}
\begin{definition}[elementary symmetric polynomial state] \label{def:elementary_symmetric_polynomial_state}
    Let $P^{(k)}(f_1, \dots, f_m)$ be the degree-$k$ symmetric elementary polynomial with $m$ variables, then we define the state% \footnote{Note about the normalization. We get a factor of $1/\sqrt{p^n}$ from the sum over all $\F_p^n$, and a factor $\sqrt{p^k/\binom{m}{k}}$ since the $k$ elementary symmetric polynomial has $\binom{m}{k}$ terms TODO: Not sure about the $p^{-k}$}
    \begin{equation}\label{eq:elementary_symmetric_polynomial_state} % TODO: Check if  it's OK that we use P(g) and not P(f)
        \big| P^{(k)} \big\rangle = 
        \frac{1}{\sqrt{p^{n-k} \binom{m}{k} }}
        \sum_{\x \in \F_p^n} P^{(k)}\left(g_1(b_1^T\x + \x^T C_1 \x), \dots, g_m(b_m^T\x + \x^T C_m \x)\right) \left| \x \right> % TODO: Check normalization (what about zero?)
    \end{equation}
\end{definition}
It follows from \autoref{claim:dqi_polynomial_can_be_written_as_sum_of_symmetric_polys} and \autoref{def:elementary_symmetric_polynomial_state} that the DQI state \autoref{def:dqi_state_max_quadsat} can be written as a superposition of elementary symmetric polynomial states
\begin{equation} \label{eq:dqi_state_as_sum_of_sym_elem_poly}
    \left| P(f) \right> = \sum_{k = 0}^\ell w_k \big| P^{(k)} \big\rangle
\end{equation}
Thus, we will shift our focus from now on to preparing the elementary symmetric polynomial states.

\begin{claim}[elementary symmetric polynomial states for max-QUADSAT]
    The elementary symmetric polynomial state \autoref{def:elementary_symmetric_polynomial_state} can be written as
    \begin{equation} \label{eq:elem_sym_poly_state_rewritten_hamming}
        \big| P^{(k)} \big>
        =
        \frac{1}{\sqrt{p^n \binom{m}{k}}}
        \sum_{\x \in \F_p^n}
        \sum_{\substack{y \in \F_p^m \\ | y | = k}}
        \omega_p^{-(B^T y)^T\x - \x^T (C \cdot y) \x }
        \prod_{\substack{i=1\\y_i\ne 0}}^m
        \tilde{g}_{i}(y)
        \left| \x \right>
    \end{equation}
\end{claim}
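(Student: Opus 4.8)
The plan is to prove the identity by Fourier-inverting each normalized constraint $g_i$ and then reorganizing the resulting sum. Writing $\Phi_i(\x) := \bb_i^T \x + \x^T C_i \x$ for the argument of the $i$-th constraint, I start from the definition \autoref{eq:elementary_symmetric_polynomial_state}, in which the elementary symmetric polynomial $P^{(k)}$ is the sum over the $\binom{m}{k}$ size-$k$ subsets $T \subseteq \{1,\dots,m\}$ of the products $\prod_{i \in T} g_i(\Phi_i(\x))$. The first step is to replace each factor by its inverse Fourier transform, $g_i(\Phi_i(\x)) = \tfrac{1}{\sqrt{p}} \sum_{y_i \in \F_p} \omega_p^{-\Phi_i(\x)\, y_i}\, \tilde{g}_i(y_i)$, which follows directly from \autoref{eq:g_fourier}. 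Each product over a size-$k$ subset then contributes a factor $p^{-k/2}$ together with an independent sum over the $k$ Fourier variables $\{y_i\}_{i \in T}$.

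Second, I use the vanishing property $\tilde{g}_i(0)=0$ noted after \autoref{def:normalized_constraints}: every term in which some $y_i=0$ (for $i \in T$) drops out, so each summed Fourier variable is effectively restricted to $\F_p \setminus \{0\}$. The key bookkeeping step is to observe that summing over a size-$k$ subset $T$ together with a nonzero value $y_i$ for each $i \in T$ is exactly the same as summing over a single vector $\y \in \F_p^m$ of Hamming weight $|\y|=k$, under the identification $T = \operatorname{supp}(\y)$. With this identification the product of characters collapses to $\prod_{i \in T}\omega_p^{-\Phi_i(\x) y_i} = \omega_p^{-\sum_{i=1}^m \Phi_i(\x)\, y_i}$ and the product of amplitudes becomes $\prod_{i:\, y_i \ne 0}^m \tilde{g}_i(y_i)$, matching the product in \autoref{eq:elem_sym_poly_state_rewritten_hamming}.

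Third, I combine the phase. Since the exponent $\sum_{i=1}^m \Phi_i(\x)\, y_i$ is linear in $\y$ and each $\Phi_i$ splits into a linear and a quadratic form, I can write $\sum_{i=1}^m \Phi_i(\x)\, y_i = \big(\sum_i y_i \bb_i\big)^T \x + \x^T\big(\sum_i y_i C_i\big)\x = (B^T\y)^T\x + \x^T (C\cdot \y)\,\x$, using that the rows of $B$ are the $\bb_i$ and the notation $C\cdot\y = \sum_i y_i C_i$. This reproduces exactly the quadratic phase $\omega_p^{-(B^T\y)^T\x - \x^T(C\cdot\y)\x}$ of the claim. Finally I reconcile the prefactors: the $k$-fold Fourier inversion supplies $p^{-k/2}$, which combines with the normalization $(p^{n-k}\binom{m}{k})^{-1/2} = p^{k/2}(p^n\binom{m}{k})^{-1/2}$ of \autoref{eq:elementary_symmetric_polynomial_state} to leave precisely $(p^n\binom{m}{k})^{-1/2}$, as required.

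I do not expect a genuine technical obstacle, since this is a clean Fourier-inversion rewriting entirely parallel to the max-LINSAT computation, the only new feature being that the quadratic form $\x^T C_i\x$ now rides along inside the phase rather than a linear term alone. The one point demanding care is the combinatorial reindexing of the second step: one must verify that the map from pairs (size-$k$ support, assignment of nonzero values) to weight-$k$ vectors $\y$ is a genuine bijection, so that no subset is over- or under-counted. This is exactly where the reading of $P^{(k)}$ matters: the identity as stated requires $P^{(k)}$ to be the standard (unordered) elementary symmetric polynomial, which is also what makes the $\binom{m}{k}$ normalization consistent, whereas an ordered-tuple interpretation of \autoref{eq:elementary_symmetric_polynomial} would introduce a spurious factor of $k!$. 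Beyond this bookkeeping, the decisive structural input is the identity $\tilde{g}_i(0)=0$, which is what lets the $k$ Fourier variables be packaged into a single Hamming-weight-$k$ vector with no leftover contributions supported off $\operatorname{supp}(\y)$.
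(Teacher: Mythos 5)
Your proposal is correct and follows essentially the same route as the paper's own proof: Fourier-invert each $g_i$ via \autoref{eq:g_fourier}, reindex the sum over size-$k$ supports with nonzero values as a sum over weight-$k$ vectors $\y \in \F_p^m$, collapse the phases into $\omega_p^{-(B^T\y)^T\x - \x^T(C\cdot\y)\x}$, and absorb the $p^{-k/2}$ into the normalization. Your two added remarks --- that $\tilde g_i(0)=0$ is what confines the reindexed sum to exactly weight $k$, and that \autoref{eq:elementary_symmetric_polynomial} must be read as the unordered elementary symmetric polynomial for the $\binom{m}{k}$ normalization to be consistent --- simply make explicit bookkeeping that the paper passes over in one line.
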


\begin{proof}
By definition \autoref{def:elementary_symmetric_polynomial_state}
\begin{equation} \label{eq:elem_sym_poly_of_g}
    P^{(k)}\left(g_1(b_1^T\x + \x^T C_1 \x), \dots, g_m(b_m^T\x + \x^T C_m \x)\right)
    =
    \sum_{\substack{i_1, \dots , i_k \\ \text{distinct}} } \prod_{j=1}^k g_{i_j}(b_{i_j}^T\x + \x^T C_{i_j} \x)
\end{equation}
Substituting the Fourier transform of $g_{i_j}$, \autoref{eq:g_fourier}\footnote{Note that \autoref{eq:g_fourier} is the Fourier transform, $\tilde{g}$ in terms of $g$, but we plug in the inverse Fourier transform, $g$ in terms of $\tilde{g}$} into \autoref{eq:elem_sym_poly_of_g} yields
\begin{equation} \label{eq:elem_sym_poly_of_g_fourier}
    P^{(k)}\left(g_1(b_1^T\x + \x^T C_1 \x), \dots, g_m(b_m^T\x + \x^T C_m \x)\right)
    =
    \sum_{\substack{i_1, \dots , i_k \\ \text{distinct}} } \prod_{j=1}^k \sum_{y_{i_j} \in \F_p}
    \frac{1}{\sqrt{p}}
    \omega_p^{-\left( b_{i_j}^T\x + \x^T C_{i_j} \x \right) \cdot y_{i_j}}\tilde{g}_{i_j}(y_{i_j})
\end{equation}

The sum $\sum_{\substack{i_1, \dots , i_k \\ \text{distinct}} } \prod_{j=1}^k \sum_{y_{i_j} \in \F_p}$ can be rewritten in terms of $\sum_{\substack{\y \in \F_p^m \\ | \y | = k}}$.
% \begin{remark} % TODO: This probably doesn't need to be a remark
%     The sum $\sum_{\substack{i_1, \dots , i_k \\ \text{distinct}} } \prod_{j=1}^k \sum_{y_{i_j} \in \F_p}$ can be rewritten as
%     \[
%     \sum_{\substack{i_1, \dots , i_k \\ \text{distinct}} } \prod_{j=1}^k \sum_{y_{i_j} \in \F_p}
%     \to
%     \sum_{\substack{y \in \F_p^m \\ | y | = k}}
%     \]
% \end{remark}
We can thus rewrite \autoref{eq:elem_sym_poly_of_g_fourier} as 
\begin{equation} \label{eq:elem_sym_poly_of_g_fourier_rewrite_hamming}
    P^{(k)}\left(g_1(b_1^T\x + \x^T C_1 \x), \dots, g_m(b_m^T\x + \x^T C_m \x)\right)
    =
    \frac{1}{\sqrt{p^k}}
    \sum_{\substack{\y \in \F_p^m \\ | \y | = k}}
    \omega_p^{-(B^T \y)^T\x - \x^T (C \cdot \y) \x }
    \prod_{\substack{i=1\\y_i\ne 0}}^m
    \tilde{g}_{i}(y_i)
\end{equation}
% Where we used notations defined in \autoref{remark:B_matrix_and_C_vector}.
Substituting that back into \autoref{def:elementary_symmetric_polynomial_state} yields the desired result
% The elementary symmetric polynomial state \autoref{def:elementary_symmetric_polynomial_state} can thus be written as
% \begin{equation} \label{eq:elem_sym_poly_state_rewritten_hamming}
%     \big| P^{(k)} \big>
%     =
%     \sum_{\x \in \F_p^n}
%     \sum_{\substack{y \in \F_p^m \\ | y | = k}}
%     \omega_p^{-(B^T y)^T\x - \x^T (C \cdot y) \x }
%     \prod_{\substack{i=1\\y_i\ne 0}}^m
%     \tilde{g}_{i}(y)
%     \left| \x \right>
% \end{equation}
\end{proof}

Let's define an (oddly specific) operator to help us with some notation later % TODO: Write more politically correct pirant...
\begin{definition}[conditional quadratic phase operator] \label{def:F_alpha_very_convinient}
    We denote the operator $F_\alpha$ acting on a quantum register in $\F_p$ as
    \begin{equation} \label{eq:F_alpha_very_convinient}
        F_\alpha \left| x \right>
        =
        \begin{cases}
            p \left| \alpha \right> & \text{if $x = 0$} \\
            \chi(x) g(1;p) \sum_{z \in \F_p} \omega_p^{-x^{-1} (z-\alpha)^2 / 4} \left| z \right> & \text{otherwise}
        \end{cases}
        % p \delta_{x,0} \left| \alpha \right> + \chi(x) g(1;p) \sum_{z \in \F_p} \omega_p^{-x^{-1} (z-\alpha)^2 / 4} \left| z \right>
    \end{equation}
    To simplify notation, we will sometimes write
    \[
       F_\alpha \left| x \right> = p \delta_{x,0} \left| \alpha \right> + \chi(x) g(1;p) \sum_{z \in \F_p} \omega_p^{-x^{-1} (z-\alpha)^2 / 4} \left| z \right>
    \]
    % note that the $x^{-1}$ isn't problematic for $x=0$ because it has a factor of $\chi(x)$, and $\chi(0) = 0$
    % TODO: Make cases, to simply notation we sometimes write is as follows
\end{definition}
And now we can look at the Fourier transform of the quadratic DQI state. From now on, we will always assume the matrices $C_i$ are diagonal.
\begin{claim}[QFT of elementary symmetric polynomial] \label{claim:qft_of_elem_sym_poly_state_simplified}
    The QFT of \autoref{eq:elem_sym_poly_state_rewritten_hamming} can be written as
    \begin{equation} \label{eq:qft_of_elem_sym_poly_state_simplified}
    \big| \tilde{P}^{(k)} \big>
    =
    % \frac{1}{\sqrt{\binom{m}{k}}}
    % \sum_{\\z\in \F_p^n}
    % \sum_{\substack{\y \in \F_p^m \\ | \y | = k}}
    % \alpha_\y
    % \omega_p^{\frac{1}{4}(\z - B^T \y)^T (C \cdot \y)^{-1} (\z - B^T \y)}
    % \prod_{\substack{i=1\\y_i\ne 0}}^m
    % \tilde{g}_{i}(y_i)
    % \left| \z \right>
    %%%%%%%%%%%%%%%%%%%%%%%%%%%%%%%%%%%%%%%%%%%%%%%%%
    % \sum_{\substack{\y \in \F_p^m \\ | \y | = k}}
    % \prod_{\substack{i=1\\y_i\ne 0}}^m
    % \tilde{g}_{i}(y_i)
    % \prod_{j=1}^n
    % \left(
    % p\delta_{(D^T \y)_j} \left| 0 \right>
    % +
    % \chi((D^T \y)_j)g(1; p)
    % \sum_{z_j \in \F_p}
    % \omega_p^{-\frac{1}{4}(D^T \y)_j^{-1} z_j^2}
    % \left| z_j \right>
    % \right)
    %%%%%%%%%%%%%%%%%%%%%%%%%%%%%%%%%%%%%%%%%%%%%%%%
    \sum_{\substack{\y \in \F_p^m \\ | \y | = k}}
    \prod_{\substack{i=1\\y_i\ne 0}}^m
    \tilde{g}_{i}(y_i)
    \prod_{j=1}^n
    F_{(B^T\y)_j} \left| (D^T \y)_j \right>
\end{equation}
% Where $\alpha_\y = \chi((D^T \y)_j)g(1; p)$ is an easy to compute phase $\pm i / \pm 1$ that only depends on $\y$ (and the parameters of the problem). %$\alpha_\y := i_p^n p^{n/2} \chi(\det{-(C\cdot y)^{-1}})$
\end{claim}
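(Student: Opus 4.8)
The plan is to apply the quantum Fourier transform directly to the position-basis amplitude in \autoref{eq:elem_sym_poly_state_rewritten_hamming} and to show that the diagonality of the $C_i$ forces the resulting amplitude to factorize coordinate-by-coordinate into one-dimensional Gauss sums, each of which is, by construction, exactly one branch of the operator $F_\alpha$ of \autoref{def:F_alpha_very_convinient}. First I would write $\text{QFT}\left|\x\right> = p^{-n/2}\sum_{\z\in\F_p^n}\omega_p^{\x\cdot\z}\left|\z\right>$, apply it term by term, and interchange the finite sums so that the sum over $\x$ becomes innermost, multiplying the single combined phase $\omega_p^{\,\x\cdot\z-(B^T\y)^T\x-\x^T(C\cdot\y)\x}$.

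The decisive step uses the hypothesis that every $C_i$ is diagonal. Then $C\cdot\y=\sum_i y_i C_i$ is also diagonal, and its diagonal is precisely the vector $D^T\y\in\F_p^n$, where $D$ is the matrix whose columns are the diagonals of the $C_i$. Hence $\x^T(C\cdot\y)\x=\sum_{j=1}^n (D^T\y)_j x_j^2$ and $(B^T\y)^T\x=\sum_{j=1}^n (B^T\y)_j x_j$, so the combined phase splits as a product over $j$ and the sum over $\x\in\F_p^n$ factors into a product of $n$ independent one-dimensional sums
\[
    \sum_{x_j\in\F_p}\omega_p^{\,z_j x_j-(B^T\y)_j x_j-(D^T\y)_j x_j^2}.
\]
I would then carry the overall constant $p^{-n}\binom{m}{k}^{-1/2}$ along separately; it is suppressed in the displayed statement, as is done elsewhere in the paper.

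Next I would evaluate each one-dimensional sum, writing $\gamma=(D^T\y)_j$ and $\alpha=(B^T\y)_j$. When $\gamma=0$ the sum is the elementary $\sum_{x_j}\omega_p^{(z_j-\alpha)x_j}=p\,\delta_{z_j,\alpha}$, collapsing to $z_j=\alpha$ with weight $p$; when $\gamma\neq0$ I would complete the square and invoke the standard evaluation of a quadratic Gauss sum (\cite{murty2017evaluation}) to get $\chi(\gamma)\,g(1;p)$ times a phase quadratic in $z_j-\alpha$. These are exactly the two branches of \autoref{def:F_alpha_very_convinient}, so gathering the $z_j$-dependence into a ket gives $\sum_{z_j}\big(\sum_{x_j}\cdots\big)\left|z_j\right>=F_{(B^T\y)_j}\left|(D^T\y)_j\right>$. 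Reassembling the product over $j$ with the factor $\prod_{i:\,y_i\neq0}\tilde g_i(y_i)$ and the constant above then yields \autoref{eq:qft_of_elem_sym_poly_state_simplified}.

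I expect the main obstacle to be the sign and character bookkeeping in the $\gamma\neq0$ case: matching the completed square against the precise phase $\omega_p^{-\gamma^{-1}(z-\alpha)^2/4}$ and the prefactor $\chi(\gamma)\,g(1;p)$ of \autoref{def:F_alpha_very_convinient} requires fixing the QFT sign convention, using $g(a;p)=\chi(a)\,g(1;p)$, and correctly tracking the stray $\chi(-1)$ factors that appear when $p\equiv3\pmod4$. The operator $F_\alpha$ is defined precisely so that this bookkeeping closes; the one genuine case that must be kept separate throughout is the degenerate $\gamma=0$ branch, whose delta collapse has to be recorded as the $p\left|\alpha\right>$ term rather than folded into the Gauss-sum formula.
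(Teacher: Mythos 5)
Your proposal is correct and follows essentially the same route as the paper's proof: apply the QFT, pull the $\x$-sum innermost, use diagonality of the $C_i$ to factor it into $n$ one-dimensional sums $\sum_{x_j}\omega_p^{(z_j-(B^T\y)_j)x_j-(D^T\y)_j x_j^2}$, evaluate each via the general quadratic sum (the paper packages this as \autoref{claim:gen_quad_sum}, which is exactly your complete-the-square computation), and recognize the two branches as $F_{(B^T\y)_j}\left|(D^T\y)_j\right>$. The only nit is a transposition slip in words --- the paper's $D$ has the diagonals of the $C_i$ as its \emph{rows} ($D_{ij}=(C_i)_{jj}$, so $D\in\F_p^{m\times n}$ and $D^T\y$ typechecks), not its columns --- but your formulas are consistent with the correct convention.
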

\begin{proof}
The QFT of \autoref{eq:elem_sym_poly_state_rewritten_hamming} is
\begin{equation} \label{eq:elem_sym_poly_state_fourier_raw}
    \big| \tilde{P}^{(k)} \big>
    =
    \frac{1}{\sqrt{\binom{m}{k}}}
    \sum_{\x , \z\in \F_p^n}
    \sum_{\substack{\y \in \F_p^m \\ | \y | = k}}
    \omega_p^{-(B^T \y)^T\x - \x^T (C \cdot \y) \x }
    \omega_p^{\z^T \x}
    \prod_{\substack{i=1\\y_i\ne 0}}^m
    \tilde{g}_{i}(y_i)
    \left| \z \right>
\end{equation}
We focus on the inner sum over $\x$ in \autoref{eq:elem_sym_poly_state_fourier_raw}
\begin{equation} \label{eq:elem_sym_poly_state_fourier_raw_inner_sum}
    \sum_{\x\in \F_p^n}
    \omega_p^{\left(\z - B^T \y \right)^T\x - \x^T (C \cdot \y) \x }
\end{equation}
% Now we use the assumption \autoref{remark:matrices_are_invertible}\footnote{TODO: Write code and check if that assumption is necessary. I tend to say it's somewhat necessary because for some values of $\y$ it could be the case that $\x^T (C \cdot \y ) \x$ part vanishes, then this becomes a $\delta_{z, B^T y}$ like in the original DQI paper, but maybe we can work around that} and claim that the matrix $C \cdot \y$ is invertible, thus it is diagonalizable by orthogonal $\y$-dependent matrix $O_\y$. We shall write $C \cdot \y = O_\y^T D_\y O_\y$, then we can do the change of variables $\x \to O_\y\x$ in the sum in \autoref{eq:elem_sym_poly_state_fourier_raw_inner_sum}
% \begin{equation}
%     \to
%     \sum_{\x\in \F_p^n}
%     \omega_p^{\left(\z - B^T \y \right)^TO_\y\x - \x^T D_\z \x }  
%     =
%     \sum_{\x\in \F_p^n}
%     \omega_p^{\left(O_\y^T\z - O_\y^TB^T \y \right)^T\x - \x^T D_\y \x }
% \end{equation}
Let $\mathbf{a}_{\y, \z} = \z - B^T \y$, then the sum becomes
\begin{equation}
    \to
    \sum_{\x\in \F_p^n}
    \omega_p^{\mathbf{a}_{\y, \z}^T\x - \x^T (C \cdot \y) \x }
    % =
    % \sum_{x_1, \dots, x_n\in \F_p}
    % \omega_p^{\sum_{i=1}^n (\mathbf{a}_{\y, \z})_i x_i - (D_\y)_i x_i^2 }
    % =
    % \prod_{i=1}^n
    % \sum_{x_i \in \F_p}
    % \omega_p^{(\mathbf{a}_{\y, \z})_i x_i - (D_\y)_i x_i^2 }
\end{equation}
% Remembering we assumed in \autoref{remark:matrices_are_invertible} that $(C \cdot \y)$ is invertible
% \begin{wiparea}
%     FIX: I now realize that the condition of \autoref{remark:matrices_are_invertible} is stupid. Since later we will make assumptions such that the matrices are (almost) diagonal, we can never have that all linear combinations of two diagonal matrices are invertible. 

%     We can still fix this if we separate the diagonal entries of the matrix $C \cdot \y$ to zero and nonzero parts.

%     \textbf{Let's assume all the matrices $C_i$ are diagonal from now}
% \end{wiparea}
% \begin{wiparea}
%     DACA -- THIS IS WHERE I CUT THE OLD PROOF THAT ASSUMED $C$ IS INVERTIBLE -- DACA
% \end{wiparea}
Since we assumed $C$ is diagonal, we can rewrite this sum as a product of different registers (each in $\F_p$)
\[
    \sum_{\x \in \F_p^n}
    \omega_p^{\left(\z - B^T\y\right)^T\x - \x^T (C \cdot \y)\x }
    =
    \prod_{j=1}^n
    \sum_{x_j \in \F_p}
    \omega_p^{\left(z_j - (B^T\y)_j\right)x_j - (C \cdot \y)_{jj} x_j^2 }
\]
Let's denote the matrix $D$ such that $D_{ij} := (C_i)_{jj}$ (the $j^{\text{th}}$ element on the diagonal of the $i^{\text{th}}$ matrix). Then $(C \cdot \y)_{jj} = (D^T\y)_j$, so the expression above becomes
\[
    \prod_{j=1}^n
    \sum_{x_j \in \F_p}
    \omega_p^{\left(z_j - (B^T\y)_j\right)x_j - (D^T \y)_{j} x_j^2 }
\]
The sum is precisely of the form of \autoref{claim:gen_quad_sum}, thus we may write it as
\[
    \prod_{j=1}^n
    \left(
    p \delta_{z_j, (B^T \y)_j} \delta_{(D^T \y)_j}
    +
    \omega_p^{-\frac{1}{4}(D^T \y)_j^{-1} \left(z_j - (B^T \y)_j\right)^2} \chi((D^T \y)_j)g(1; p)
    \right)
\]
We can substitute that back into \autoref{eq:elem_sym_poly_state_fourier_raw} and get
\[
    \sum_{\z\in \F_p^n}
    \sum_{\substack{\y \in \F_p^m \\ | \y | = k}}
    \prod_{\substack{i=1\\y_i\ne 0}}^m
    \tilde{g}_{i}(y_i)
    \prod_{j=1}^n
    \left(
    p \delta_{z_j, (B^T \y)_j} \delta_{(D^T \y)_j}
    +
    \omega_p^{-\frac{1}{4}(D^T \y)_j^{-1} \left(z_j - (B^T \y)_j\right)^2} \chi((D^T \y)_j)g(1; p)
    \right)
    \left| z_j \right>
\]

% \[ % TODO: This equation is actually OK, it's the sum over each z register seperetly, but without assuming B=0
%     \sum_{\substack{\y \in \F_p^m \\ | \y | = k}}
%     \prod_{\substack{i=1\\y_i\ne 0}}^m
%     \tilde{g}_{i}(y_i)
%     \prod_{j=1}^n
%     \left(
%     p\delta_{(D^T \y)_j} \left| (B^T \y)_j \right>
%     +
%     \chi((D^T \y)_j)g(1; p)
%     \sum_{z_j \in \F_p}
%     \omega_p^{-\frac{1}{4}(D^T \y)_j^{-1} (z_j - (B^T \y)_j)^2}
%     \left| z_j \right>
%     \right)
% \]
%%%% B=0 ASSUMPTION USED TO BE HERE %%%
% \textbf{Attempt: Assume $B = 0$}. TODO: Here I assumed $B = 0$ but it's actually not necessary, this is because preparing each register is easy, since each register is in $\F_p$, and preparing a state with arbitrary amplitudes can be done in $\tilde{O}(\sqrt{p})$ (see methods in \cite{state_prep_Low_2024})
% \[
%     \sum_{\z\in \F_p^n}
%     \sum_{\substack{\y \in \F_p^m \\ | \y | = k}}
%     \prod_{\substack{i=1\\y_i\ne 0}}^m
%     \tilde{g}_{i}(y_i)
%     \prod_{j=1}^n
%     \left(
%     p \delta_{z_j} \delta_{(D^T \y)_j}
%     +
%     \omega_p^{-\frac{1}{4}(D^T \y)_j^{-1} z_j^2}
%     \chi((D^T \y)_j)g(1; p)
%     \right)
%     \left| z_j \right>
% \]
We can now convert the sum over $\z \in \F_p^n$ to a product of $n$ different sums over $z_j \in \F_p$ like so
% \[  % THIS IS THE SAME BUT WITH B=0 ASSUMPTION
%     \sum_{\substack{\y \in \F_p^m \\ | \y | = k}}
%     \prod_{\substack{i=1\\y_i\ne 0}}^m
%     \tilde{g}_{i}(y_i)
%     \prod_{j=1}^n
%     \left(
%     p\delta_{(D^T \y)_j} \left| 0 \right>
%     +
%     \chi((D^T \y)_j)g(1; p)
%     \sum_{z_j \in \F_p}
%     \omega_p^{-\frac{1}{4}(D^T \y)_j^{-1} z_j^2}
%     \left| z_j \right>
%     \right)
% \]
\[
    \sum_{\substack{\y \in \F_p^m \\ | \y | = k}}
    \prod_{\substack{i=1\\y_i\ne 0}}^m
    \tilde{g}_{i}(y_i)
    \prod_{j=1}^n
    \left(
    p \delta_{(D^T \y)_j}
    \left| (B^T \y)_j \right>
    +
    \chi((D^T \y)_j)g(1; p)
    \sum_{z_j \in \F_p}
    \omega_p^{-\frac{1}{4}(D^T \y)_j^{-1} \left(z_j - (B^T \y)_j\right)^2}
    \left| z_j \right>
    \right)
\]
And conveniently enough, using \autoref{def:F_alpha_very_convinient}, this becomes exactly \autoref{eq:qft_of_elem_sym_poly_state_simplified}.

\end{proof}

\begin{claim}[QFT of target DQI state for max-QUADSAT] \label{claim:fourier_of_DQI_state_max_QUADSAT}
    The QFT of \autoref{def:dqi_state_max_quadsat} is
    % \begin{equation} \label{eq:QFT_of_target_DQI_state_max_QUADSAT} % TODO: Fix binom factor
    %     \big| \tilde{P}(f) \big>
    %     =
    %     \sum_{k=0}^\ell
    %     \frac{w_k}{\sqrt{\binom{m}{k}}}
    %     \sum_{\\z\in \F_p^n}
    %     \sum_{\substack{\y \in \F_p^m \\ | \y | = k}}
    %     \alpha_\y
    %     \omega_p^{\frac{1}{4}(\z - B^T \y)^T (C \cdot \y)^{-1} (\z - B^T \y)}
    %     \prod_{\substack{i=1\\y_i\ne 0}}^m
    %     \tilde{g}_{i}(y_i)
    %     \left| \z \right>
    % \end{equation}
    \begin{equation} \label{eq:QFT_of_target_DQI_state_max_QUADSAT} % TODO: Fix binom factor
        \big| \tilde{P}(f) \big>
        =
        \sum_{k=0}^\ell
        \frac{w_k}{\sqrt{\binom{m}{k}}}
        \sum_{\substack{\y \in \F_p^m \\ | \y | = k}}
        \prod_{\substack{i=1\\y_i\ne 0}}^m
        \tilde{g}_{i}(y_i)
        \prod_{j=1}^n
        F_{(B^T\y)_j} \left| (D^T \y)_j \right>
    \end{equation}
\end{claim}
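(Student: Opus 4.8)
The plan is to reduce this statement to the per-sector computation already performed in \autoref{claim:qft_of_elem_sym_poly_state_simplified}, exploiting the fact that the quantum Fourier transform is unitary and in particular linear. The starting point is the decomposition \autoref{eq:dqi_state_as_sum_of_sym_elem_poly}, which expresses the target DQI state as the finite linear combination $\left| P(f)\right> = \sum_{k=0}^\ell w_k \big| P^{(k)}\big\rangle$ of elementary symmetric polynomial states. Applying the QFT term by term gives
\[
    \big| \tilde{P}(f)\big>
    =
    \text{QFT}\,\Big(\sum_{k=0}^\ell w_k \big| P^{(k)}\big\rangle\Big)
    =
    \sum_{k=0}^\ell w_k\, \text{QFT}\,\big| P^{(k)}\big\rangle
    =
    \sum_{k=0}^\ell w_k\, \big| \tilde{P}^{(k)}\big>,
\]
so the whole problem collapses to substituting the closed form of each Fourier-transformed sector and collecting coefficients.

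The next step is to insert the expression for $\big| \tilde{P}^{(k)}\big>$ obtained in \autoref{claim:qft_of_elem_sym_poly_state_simplified}, keeping the normalization factor $1/\sqrt{\binom{m}{k}}$ that is already visible in \autoref{eq:elem_sym_poly_state_fourier_raw} and carried unchanged through the simplification of the inner $\x$-sum. Each sector contributes the same structural object: a sum over Hamming-weight-$k$ vectors $\y \in \F_p^m$ of the Fourier weights $\prod_{i:\,y_i\ne 0}\tilde{g}_i(y_i)$ multiplied by the $n$-fold product of conditional quadratic phase operators $\prod_{j=1}^n F_{(B^T\y)_j}\left|(D^T\y)_j\right>$. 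The only $k$-dependence outside the inner sum is the scalar $w_k/\sqrt{\binom{m}{k}}$, so placing these scalars in front of the $k$-th summand yields exactly \autoref{eq:QFT_of_target_DQI_state_max_QUADSAT}.

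The genuine content of the claim is thus entirely contained in the earlier lemma, and the only real obstacle is the bookkeeping of normalization constants, which are scattered across several conventions: the $1/\sqrt{p^{n-k}\binom{m}{k}}$ in \autoref{def:elementary_symmetric_polynomial_state}, the $1/\sqrt{p^n}$ from the QFT kernel, and the $p$- and $\sqrt{p}$-sized factors (the Gauss sums $g(1;p)$ and the weights $p\,\delta$) hidden inside each $F_\alpha$. I would therefore verify once that the $n$-fold product of these Gauss-sum factors supplies exactly the $p^n$-type normalization needed to cancel the QFT and state-definition prefactors, leaving precisely $w_k/\sqrt{\binom{m}{k}}$ in front of the $k$-th sector. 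Once this single accounting check is done, linearity of the QFT completes the proof with no further computation.
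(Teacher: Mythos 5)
Your proposal is correct and matches the paper's own proof, which likewise derives the result by applying the linearity of the QFT to the decomposition \autoref{eq:dqi_state_as_sum_of_sym_elem_poly} and substituting the per-sector formula from \autoref{claim:qft_of_elem_sym_poly_state_simplified}. Your extra attention to the $1/\sqrt{\binom{m}{k}}$ and Gauss-sum normalization factors is a reasonable bookkeeping check but adds nothing beyond what the paper's one-line argument already relies on.
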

\begin{proof}
    This follows immediately from \autoref{eq:dqi_state_as_sum_of_sym_elem_poly}, \autoref{eq:qft_of_elem_sym_poly_state_simplified}, and the linearity of the QFT.
\end{proof}

% You may notice that substituting $D=0$ (without quadratic constraints) recovers the same DQI state as in \cite{og_dqi}. TODO: WRITE
\subsection{DQI algorithm for max-QUADSAT} \label{sec:dqi_quantum_algorithm_for_max_quadsat}

% \begin{claim}
%     There's an efficient algorithm to prepare the DQI state for max-QUADSAT \autoref{def:max_quadsat} when $\bb_i=0$ and the matrices $C_i$ are diagonal, and the diagonals of the matrices $C_i$ are columns of a parity-check matrix that has an efficient decoding scheme.
% \end{claim}

% \subsubsection{Attempt: Simpler case there are only quadratic constraints (no linear, $B=0$)}
We show an efficient algorithm to prepare the DQI state for some instances of max-QUADSAT, as described in \autoref{claim:efficient_algo_for_quadsat}. If all vectors $b_i$ are zero vectors, then the matrix $B$ is the zero matrix, and the QFT of the DQI state for max-QUADSAT \autoref{eq:QFT_of_target_DQI_state_max_QUADSAT} is
\begin{equation}
        % \big| \tilde{P}(f) \big>   % TODO: Check all tilde everywhere
        % =
        % \sum_{k=0}^\ell
        % w_k
        % \sum_{\\z\in \F_p^n}
        % \sum_{\substack{\y \in \F_p^m \\ | \y | = k}}
        % \alpha_\y
        % \omega_p^{\frac{1}{4}\z^T (C \cdot \y)^{-1} \z}
        % \prod_{\substack{i=1\\y_i\ne 0}}^m
        % \tilde{g}_{i}(y_i)
        % \left| \z \right>
        \big| \tilde{P}(f) \big>
        =
        \sum_{k=0}^\ell
        \frac{w_k}{\sqrt{\binom{m}{k}}}
        \sum_{\substack{\y \in \F_p^m \\ | \y | = k}}
        \prod_{\substack{i=1\\y_i\ne 0}}^m
        \tilde{g}_{i}(y_i)
        \prod_{j=1}^n
        F_0 \left| (D^T \y)_j \right>
\end{equation}
To prepare this state, we will need to implement $F_0$.
\begin{claim}
    The operator $F_0$ can be efficiently implemented. % TODO: write polylog or something, also primitive->subroutine
\end{claim}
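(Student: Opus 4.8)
The plan is to avoid implementing the evaluated Gauss-sum form of \autoref{def:F_alpha_very_convinient} directly -- which would require separately producing the quadratic character $\chi(x)$, the constant $g(1;p)$, the modular inverse $x^{-1}$ and the factor $1/4$ -- and instead to implement the \emph{unevaluated} sum that defines it. Setting the shift to $0$ in the derivation behind \autoref{claim:gen_quad_sum}, the matrix elements of $F_0$ are
\begin{equation*}
    \langle z | F_0 | x \rangle = \sum_{x' \in \F_p} \omega_p^{\, z x' - x (x')^2},
\end{equation*}
so that $F_0 |x\rangle = \sqrt{p}\,\mathrm{QFT}\big(\sum_{x'} \omega_p^{-x (x')^2} |x'\rangle\big)$. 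This factorizes the action of $F_0$ into three efficient pieces: (i) prepare the uniform superposition $\tfrac{1}{\sqrt p}\sum_{x'}|x'\rangle$ on a fresh output register (a single QFT applied to $|0\rangle$, or Hadamards in the qubit encoding); (ii) apply, controlled on the input value $x$, the quadratic phase $|x'\rangle \mapsto \omega_p^{-x (x')^2}|x'\rangle$; and (iii) apply the QFT over $\F_p$ to the output register. I would read the claim off this decomposition.

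For efficiency I would argue that each piece costs $\mathrm{poly}(\log p) = \mathrm{poly}(n)$ gates. The controlled quadratic phase is exactly the primitive of \autoref{prim:quadratic_phase}: reversibly compute $x\cdot (x')^2 \bmod p$ into an ancilla by modular multiplication, apply the diagonal phase by phase kickback, and uncompute. The QFT over $\F_p$ is efficient because $p=\mathrm{poly}(n)$, so the register carries only $O(\log n)$ qubits. A pleasant feature of using the unevaluated form is that the entire Gauss-sum amplitude $\chi(x)g(1;p)$, together with the $x=0$ special case, is produced \emph{automatically} by step (iii): for $x=0$ the phase is trivial and the QFT returns the uniform state to $|0\rangle$, reproducing $F_0|0\rangle=p|0\rangle$, while for $x\neq 0$ the QFT of a quadratic-phase state evaluates, via \autoref{claim:gen_quad_sum}, to exactly $\chi(x)g(1;p)\sum_z \omega_p^{-x^{-1}z^2/4}|z\rangle$. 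Thus no explicit computation of characters, Gauss sums, or modular inverses is needed, which removes what would otherwise be the fiddliest part of the construction.

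The step I expect to require the most care is normalization and register bookkeeping, not gate count. The operator $F_0$ is not norm preserving -- it scales every basis state to norm $p$ -- and is not unitary, so strictly the circuit above realizes a controlled, sub-normalized version of $F_0$ in which the input value $x$ persists in a control register. This is in fact the behaviour one wants here: the input register is the decoded register $|(D^T\y)_j\rangle$ supplied by the preceding Dicke-state, matrix-multiplication, and decoding steps; the control it provides is precisely what makes the quadratic phase depend on $(D^T\y)_j$; and the overall factor $p$ is absorbed into the normalization of the superposition in \autoref{eq:QFT_of_target_DQI_state_max_QUADSAT}. I would therefore state the result as ``the action of $F_0$ inside the DQI state is realized by an efficient controlled circuit built from (i)--(iii),'' prove it by direct substitution into \autoref{eq:QFT_of_target_DQI_state_max_QUADSAT}, and defer the matching of normalizations and the disposition of the control register to the full state-preparation argument of \autoref{sec:dqi_quantum_algorithm_for_max_quadsat}.
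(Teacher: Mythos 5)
Your route is genuinely different from the paper's. The paper's proof is a one-line reduction: split on $x=0$ versus $x\neq 0$ with \autoref{prim:quantum_conditions}, and realize the $x\neq 0$ branch by invoking \autoref{prim:quadratic_phase} in place. You instead implement the \emph{unevaluated} Gauss sum $\langle z|F_0|x\rangle=\sum_{x'}\omega_p^{zx'+x(x')^2}$ as uniform superposition, controlled quadratic phase, then QFT. (The sign of the quadratic term should be $+x(x')^2$: by \autoref{claim:gen_quad_sum} with $a=x$, $b=z$ this gives $\chi(x)g(1;p)\,\omega_p^{-x^{-1}z^2/4}$, whereas your $a=-x$ gives $\chi(-x)g(1;p)\,\omega_p^{+x^{-1}z^2/4}$; the paper's own derivation of \autoref{eq:qft_of_elem_sym_poly_state_simplified} is inconsistent on the same point, so this is a shared slip rather than a flaw of your approach.) Your decomposition buys something real: the prefactor $\chi(x)g(1;p)$ --- which is a \emph{relative} sign between branches of the superposition and which the paper's one-liner silently omits --- and the $x=0$ special case both emerge automatically from the final QFT, and no modular inverse or explicit case split is needed.

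The genuine gap is the one you flag and then defer: the disposition of the control register. Your circuit produces $|x\rangle\otimes\tfrac{1}{p}F_0|x\rangle$, not $\tfrac{1}{p}F_0|x\rangle$. The normalized columns $\tfrac{1}{p}F_0|x\rangle$ are pairwise non-orthogonal (their mutual overlaps are Gauss sums of magnitude $1/\sqrt{p}$), so $F_0/p$ is not an isometry and the leftover $|x\rangle=|(D^T\y)_j\rangle$ cannot be unitarily uncomputed from the output. Leaving it in place is not harmless bookkeeping absorbed by normalization: it entangles the state with a persistent copy of $D^T\y$, so the output registers carry a mixture rather than the coherent superposition \autoref{eq:QFT_of_target_DQI_state_max_QUADSAT}. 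Relatedly, your step (ii) is not ``exactly'' \autoref{prim:quadratic_phase}: that primitive is the in-place, postselected map $|a\rangle\to\sum_{s}\omega_p^{as^2}|s\rangle$, not a controlled diagonal phase onto a fresh register. The fix is to use it as such: replace your steps (i)--(ii) by applying \autoref{prim:quadratic_phase} directly to the input register with $a=x$ (no fresh register, no surviving control), then apply your step (iii). That realizes $F_0$ up to normalization and a constant postselection cost, closes the gap, and is arguably cleaner than the paper's own reduction since the $\chi(x)g(1;p)$ amplitude and the $x=0$ case still come out for free.
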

\begin{proof}
    This follows immediately from \autoref{prim:quadratic_phase} and \autoref{prim:quantum_conditions}.
\end{proof}
Now we have everything we need to prepare the state \autoref{def:dqi_state_max_quadsat} for diagonal quadratic constraints with no linear constraints. The following is a full description of such an algorithm.

\begin{mdframed}[linewidth=0.8pt,roundcorner=5pt] 
\begin{algorithm}[max-QUADSAT state preparation]\label{algo:max_quadsat_diag_quad_no_linear}
\textbf{Step 1 -- Prepare weights.} Where $w_k$ are precomputed classically to give optimal results.
\[
    \sum_{k = 0}^\ell w_k \left| k \right>
\]

\textbf{Step 2 -- Prepare Dicke state.} Use the  $\left| k \right>$ register to prepare the Dicke state of weight $k$ \cite{dicke_states_bartschi2022short}.
\[
    \sum_{k = 0}^\ell w_k \left| k \right>
    \frac{1}{\sqrt{\binom{m}{k}}} \sum_{\substack{\mathbf{\mu} \in \set{0,1}^m \\ | \mathbf{\mu}| = k}}
        \left| \mathbf{\mu} \right> % TODO: Add dots
\]

\textbf{Step 3 -- Uncompute $\left| k \right>$}, using the $\left| \mu \right>$ register, and $k = | \mu |$
\begin{equation} \label{eq:max_quadsat_algo_nolinear_commuting_dicke_state}
    \sum_{k = 0}^\ell w_k
    \frac{1}{\sqrt{\binom{m}{k}}} \sum_{\substack{\mathbf{\mu} \in \set{0,1}^m \\ | \mathbf{\mu}| = k}}
        \left| \mathbf{\mu} \right> % TODO: CHeck if Hamming weight is standard enough
\end{equation}

\textbf{Step 4 -- Create error register in prime field.} Let $G_i$ denote the transform on $\set{\left| 0 \right>, \left| 1 \right>}$ such that $G_i \left| 0 \right> = \left| 0 \right>$ and $G_i \left|1\right> = \sum_{y \in \F_p} \tilde{g}_i(y) \left| y \right>$ % TODO: Explain clearer and add rigor, this is not difficult
then apply $G = \prod_{i=1}^m G_i$ to \autoref{eq:max_quadsat_algo_nolinear_commuting_dicke_state}
\begin{equation}
    \sum_{k = 0}^\ell w_k
    \frac{1}{\sqrt{\binom{m}{k}}} \sum_{\substack{\y \in \F_p^m \\ | \y | = k}}
        \prod_{\substack{i = 1\\ y_i \ne 0}}^m \tilde{g}_i(y_i)\left| \y \right>
\end{equation}

% \textbf{Step 5 -- Compute syndrom matrix.} Reversibly compute $(C \cdot \y)^{-1} = \left( C_1 y_1 + \cdots + C_m y_m \right)^{-1}$ into another register
% \begin{equation}
%     \sum_{k = 0}^\ell w_k
%     \frac{1}{\sqrt{\binom{m}{k}}} \sum_{\substack{\y \in \F_p^m \\ | \y | = k}}
%         \prod_{\substack{i = 1\\ y_i \ne 0}}^m \tilde{g}_i(y_i)
%         \left| \y \right>
%         \big| \left( C \cdot \y \right)^{-1} \big>
% \end{equation}
\textbf{Step 5 -- Compute syndrome.} Reversibly compute $D^T\y = C_1 y_1 + \cdots + C_m y_m$ into another register % TODO 
\begin{equation}
    \sum_{k = 0}^\ell w_k
    \frac{1}{\sqrt{\binom{m}{k}}} \sum_{\substack{\y \in \F_p^m \\ | \y | = k}}
        \prod_{\substack{i = 1\\ y_i \ne 0}}^m \tilde{g}_i(y_i)
        \left| \y \right>
        \big| D^T\y \big>
\end{equation}

% \textbf{Step 6 -- Apply quadratic residue phase.} Recall that $\alpha_\y$ \autoref{eq:alpha_y} is a simple phase that's only dependent on $\det{\left( C \cdot \y \right)^{-1}}$ and field parameters. Then, we can use the phase kickback trick 
% \begin{equation}
%     \sum_{k = 0}^\ell w_k
%     \frac{1}{\sqrt{\binom{m}{k}}} \sum_{\substack{\y \in \F_p^m \\ | \y | = k}}
%         \alpha_\y
%         \prod_{\substack{i = 1\\ y_i \ne 0}}^m \tilde{g}_i(y_i)
%         \left| \y \right>
%         \big| \left( C \cdot \y \right)^{-1} \big>
% \end{equation}

\textbf{Step 6 -- Uncompute error register by decoding.} Calculating $\left| \y \right>$ from $\left| D^T \y \right>$ is equivalent to a decoding problem with $| \y | = k$ errors. Assuming we can efficiently solve this, we can uncompute the $\left| \y \right>$ register.
\begin{equation}
    \sum_{k = 0}^\ell w_k
    \frac{1}{\sqrt{\binom{m}{k}}} \sum_{\substack{\y \in \F_p^m \\ | \y | = k}}
        \prod_{\substack{i = 1\\ y_i \ne 0}}^m \tilde{g}_i(y_i)
        \big| D^T\y \big>
\end{equation}

% \textbf{Step 8 -- Apply quadratic form phase.} Recall that $C_1, \dots, C_m$ are simultaneously diagonalizable by matrix $O$, that we can pre-compute classically. Thus all criteria of \autoref{prim:quadratic_form_phase_explicitly_diagonalizable} are met, and we can apply the transform $\left| ( C \cdot \y )^{-1} \right> \to \sum_{\z \in \F_p^n} \omega_p^{\z^T ( C \cdot \y )^{-1} \z} \left| \z \right>$. Yielding (don't forget $\frac{1}{4}$)
% \begin{equation}
%     \sum_{k = 0}^\ell w_k
%     \frac{1}{\sqrt{\binom{m}{k}}} \sum_{\substack{\y \in \F_p^m \\ | \y | = k}}
%         \sum_{\z \in \F_p^n}
%         \omega_p^{\frac{1}{4}\z^T ( C \cdot \y )^{-1} \z}
%         \alpha_\y
%         \prod_{\substack{i = 1\\ y_i \ne 0}}^m \tilde{g}_i(y_i)
%         \left| \z \right>
% \end{equation}
% We notice this state is exactly equal to the QFT of the desired max-QUADSAT state \autoref{eq:QFT_of_target_DQI_state_max_QUADSAT}

\textbf{Step 7 -- Apply $F_0$ on each subregister of the syndrome register.}
\[
        \sum_{k=0}^\ell
        \frac{w_k}{\sqrt{\binom{m}{k}}}
        \sum_{\substack{\y \in \F_p^m \\ | \y | = k}}
        \prod_{\substack{i=1\\y_i\ne 0}}^m
        \tilde{g}_{i}(y_i)
        \prod_{j=1}^n
        F_0 \left| (D^T \y)_j \right>
\]
{\color{red} Mistake disclaimer: Applying $F_0$ to each sub-register involves a measurement that succeeds with probability $\frac{1}{8}$, hence the probability that this step succeeds is $\frac{1}{8^m}$.}

\textbf{Step 8 -- Apply QFT.} According to \autoref{claim:fourier_of_DQI_state_max_QUADSAT}, this yields the desired state \autoref{def:dqi_state_max_quadsat}
\begin{equation}
    \sum_{\x \in \F_p^n} P(f(\x)) \left| \x \right>
\end{equation}

\end{algorithm}
\end{mdframed}

The existence of this algorithm proves \autoref{claim:efficient_algo_for_quadsat}.
\section{Instances of max-QUADSAT} \label{sec:problems}

In this section, we will discuss problems that can be optimized using max-QUADSAT. Note that any max-LINSAT problem has a corresponding max-QUADSAT variant, if we take the $\bb_i$ vectors of max-LINSAT and turn them into the diagonals of the $C_i$ matrices from max-QUADSAT. To demonstrate this, we will use the following example.

\subsection{Quadratic Optimal Polynomial Intersection}
% We will start with the Optimal Polynomial Intersection problem, as presented in \cite{og_dqi}.

% TODO CUT
% \begin{definition}[Optimal Polynomial Intersection] \label{def:opi}
%      Given integers $n < p-1$ with $p$ prime, an instance of the Optimal Polynomial Intersection (OPI) problem is as follows. Let $F_1 , \dots , F_{p-1}$ be subsets of the finite field $\F_p$. Find the degree $n - 1$ polynomial in $Q \in \F_p [y]$ that maximizes $f_\text{OPI}(Q) = \left|\set{y \in \set{1, . . . , p - 1} : Q(y) \in \F_y }\right|$, i.e. intersects as many of these subsets as possible.
% \end{definition}

The OPI problem was considered in \cite{og_dqi}, and was the strongest demonstration of quantum speedup they achieved. Inspired by that, we defined the quadratic-OPI problem \autoref{def:quadratic_opi}, which adds a condition that the coefficients of the polynomial are quadratic residues.

% \begin{definition}[Constrained Optimal Polynomial Intersection] \label{def:constrained_opi}
%     Let $\mathbf{q}$ be the vector of coefficients of the polynomial as described in \autoref{def:opi}, and let $P : \F_p^n \to \set{0,1}$ be a predicate function on $\mathbf{q}$. Then the Constrained Optimal Polynomial Intersection problem is the same as regular OPI, subject to the constraint that the coefficients satisfy the predicate $P(\mathbf{q}) = 1$.
% \end{definition}
% Constrained OPI \autoref{def:constrained_opi} is a family of problems, parameterized by the predicate $P(q)$.

% The simplest version of quadratic OPI is when there's only a quadratic part; this can be thought of as regular OPI, with the additional constraint that the polynomial coefficients must be quadratic residues.

It is worth mentioning that quadratic OPI is at least as hard as regular OPI, that is, because solutions of quadratic OPI are also solutions of regular OPI, but the inverse is not true in general. We do not know of a trivial way to use standard DQI to gain algorithmic speedup on this problem.

Now we can show that quadratic OPI is a special case of max-QUADSAT, similarly to how that was shown for OPI and max-LINSAT; this would prove \autoref{theorem:quadratic_opi_is_max_quadsat}.

\begin{proof}[Proof Theorem 2]
    Let $q_0, q_1, \dots, q_{n-1} \in \F_p$ be the coefficients of the polynomial $Q$:
    \[
        Q(y) = \sum_{j=0}^{n-1} q_j y^j
    \]
    Let $\gamma$ be a primitive root of $\F_p$, then every element of $\F_p$ can be written as a power of $\gamma$, and we can write the OPI objective function as:
    \[
        f_{\text{OPI}}(Q) = \left|\set{i \in \set{0, . . . , p - 2} : Q(\gamma^i) \in \F_{\gamma^i} }\right|
    \]
    Now, define each DQI constraint function $f_i$ as
    \[
        f_i(x) = \begin{cases}
            +1 & \text{if $x \in \F_{\gamma^i}$} \\
            -1 & \text{otherwise}
        \end{cases}
    \]
    And define the max-QUADSAT matrices $C_i$ to be the diagonal matrices
    \[
        (C_i)_{jj} = \gamma^{i \times j}\quad i=0,\dots,p-2 \quad j=0,\dots,n-1
    \]
    Then the max-QUADSAT objective function is $f(\mathbf{q}) = \sum_{i=0}^{p-2} f_i (\mathbf{q}^T C_i \mathbf{q})$, where $\mathbf{q}$ is a vector of coefficients. Now notice that $\mathbf{q}^T C_i \mathbf{q} = \sum_{j=0}^{n-1} (C_i)_{jj} q_j^2 = \sum_{j=0}^{n-1} \gamma^{i \times j} q_j^2 = Q^*(\gamma^i)$, where $Q^*$ is a polynomial with the coefficients $q_0^2, \dots, q_{n-1}^2$. Hence, $f(\mathbf{q}) = 2 f_{\text{OPI}}(Q^*) - (p-1)$, which means this instance of max-QUADSAT maximizes the same objective function of the OPI problem, where the coefficients used are the squares of $\mathbf{q}$. Lastly, note that all conditions of \autoref{claim:efficient_algo_for_quadsat} hold, with the code being a Reed-Solomon code of distance $n+1$, which proves that this state can be efficiently prepared by \autoref{algo:max_quadsat_diag_quad_no_linear}.
\end{proof}
Note that quadratic OPI with our variant of DQI is equivalent to regular OPI with regular DQI, with the same quantum advantage guarantees. So, for example, if $n \simeq p/10$, $r = \lfloor p/2 \rceil$, then the fraction of satisfied constraints by measuring the DQI state is $1/2 + \sqrt{19}/20 \approx 0.7179$, which is more than $0.55$ that is achieved classically with currently known techniques as discussed in \cite{og_dqi}. To get higher fractions classically, one must fall back to brute force, which runs in exponential time.

\section{Optimal expected fraction of satisfied constraints} \label{sec:optimal_expected_fraction_of_satisfied_constraints}

Now that we know how to prepare the DQI state for max-QUADSAT, we can analyze the expected number of satisfied constraints in this state. We now re-derive the semicircle law of \cite{og_dqi} in a different way that would allow us to extend it to max-QUADSAT.

\begin{claim}[DQI satisfied constraints distribution] \label{claim:dqi_state_is_from_N_s}
    For the DQI state \autoref{eq:dqi_state}, the probability to measure a string $\x$ that satisfies $s$ constraints is
    \[
        \text{Pr}_{\text{\tiny DQI}}[s] = \left| P(2s - m) \right|^2 N_s(s)
    \]
    Where $P$ is the DQI polynomial, and $N_s(s)$ is the probability for a random $\x \in \F_p^n$ to satisfy $s$ constraints.
\end{claim}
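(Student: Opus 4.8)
The plan is to exploit the fact that the DQI objective $f(\x)$ takes a value depending only on \emph{how many} constraints $\x$ satisfies, not on \emph{which} ones. First I would observe that if $\x$ satisfies exactly $s$ of the $m$ constraints, then since each satisfied constraint contributes $f_i = +1$ and each unsatisfied one contributes $f_i = -1$, the objective evaluates to $f(\x) = s - (m-s) = 2s - m$. Consequently the amplitude $P(f(\x)) = P(2s-m)$ is constant across the entire set of strings satisfying exactly $s$ constraints. (This is exactly the ``shift and rescale'' that takes the argument of $P$ from a constraint count to the value $2s-m$.)

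Next I would partition $\F_p^n$ into the $m+1$ disjoint level sets $V_s = \set{\x \in \F_p^n : \x \text{ satisfies exactly } s \text{ constraints}}$, for $s = 0, \dots, m$. By the definition of $N_s$ as the fraction of random strings satisfying $s$ constraints, we have $|V_s| = p^n N_s(s)$. The probability of measuring some string in $V_s$ is the sum over that set of the squared normalized amplitudes; because the amplitude is constant on $V_s$, this collapses to $|P(2s-m)|^2 \, |V_s| / \mathcal{N}$, where $\mathcal{N} = \sum_{\x \in \F_p^n} |P(f(\x))|^2$ is the norm of the DQI state.

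Finally I would evaluate $\mathcal{N}$ by grouping the sum over strings according to their level set: $\mathcal{N} = \sum_{s=0}^m |V_s|\,|P(2s-m)|^2 = p^n \sum_{s=0}^m N_s(s)\,|P(2s-m)|^2$. The polynomial $P$ is normalized precisely so that the DQI state has unit norm, which forces $\sum_{s} N_s(s)\,|P(2s-m)|^2 = 1$ and hence $\mathcal{N} = p^n$. Substituting back, the factor $p^n$ from $|V_s|$ cancels against $\mathcal{N}$, yielding $\text{Pr}_{\text{\tiny DQI}}[s] = |P(2s-m)|^2 \, N_s(s)$, as claimed.

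There is no genuinely hard step here: the whole argument rests on the single structural observation that the amplitude is constant on level sets of the satisfied-constraint count, after which everything is bookkeeping. The only point deserving care is the normalization --- one must confirm that the convention fixing $\||P(f)\rangle\| = 1$ is exactly the condition $\sum_s N_s(s)\,|P(2s-m)|^2 = 1$, so that the $p^n$ factor in the level-set cardinalities cancels cleanly rather than surviving as a stray constant.
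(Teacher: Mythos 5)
Your proposal is correct and follows essentially the same route as the paper: partition $\F_p^n$ into level sets of the objective (the paper groups by the value $f \in \{-m,\dots,m\}$ and then substitutes $f = 2s-m$, which is identical to your grouping by $s$), observe the amplitude is constant on each level set, and read off the probability as $|P(2s-m)|^2 N_s(s)$. Your explicit treatment of the normalization $\sum_s |P(2s-m)|^2 N_s(s) = 1$ is slightly more careful than the paper's, which leaves that convention implicit, but the argument is the same.
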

\begin{proof}
We consider the DQI state \autoref{eq:dqi_state}% of max-LINSAT/QUADSAT. % TODO: XORSAT TOO?
\[
    \left| P(f) \right>
    =
    \sum_{\x \in \F_p^n} P(f(\x)) \left| \x \right>
\] % \textit{Idea:} Instead of considering this state as a whole, we will separate it into parts of the same $f(\x)$, then we will claim that it is distributed normally TODO: EXPLAIN BETTER
First, we note that $f(x)$ takes values only from $-m$ to $m$, so we will separate the sum as follows:
\[
    \sum_{\x \in \F_p^n} P(f(\x)) \left| \x \right>
    =
    \sum_{f = -m}^m \sum_{\substack{\x \in \F_p^n \\ \text{where} \\ f(\x) = f}}
    P(f) \left| \x \right>
    =
    \sum_{f = -m}^m
    P(f)
    \sum_{\substack{\x \in \F_p^n \\ \text{where} \\ f(\x) = f}}
    \left| \x \right>  % TODO: Don't write f(x) = f, use f(x)=i or something
\]
Thus we can denote by $N_f(f)$ the number of $\x \in \F_p^n$ such that $f(\x) = f$ (divided by $p^n$ so it'd be a probability density), then the probability of measuring $\x$ with $f(\x) =f$ in the DQI state is
\[
    \text{Pr}_{\text{\tiny DQI}}[f] = \left| P(f) \right|^2 N_f(f)
\]
% Now that $P(f)$, we can take it out of the inner sum
% \[
%     \to
%     \sum_{f = -m}^m
%     P(f)
%     \overbrace{
%     \sum_{\substack{\x \in \F_p^n \\ \text{where} \\ f(\x) = f}}
%     \left| \x \right>
%     }^{\left| f \right>}
%     =
%     \sum_{f = -m}^m
%     P(f)
%     \left| f \right>
% \]
% Where we denoted $\left| f \right> = \sum_{\substack{\x \in \F_p^n \\ \text{where} \\ f(\x) = f}} \left| \x \right>$. Notice that
% \[
%     \left< f' \middle| f \right>
%     =
%     \sum_{\substack{\x, \x' \in \F_p^n \\ \text{where} \\ f(\x) = f \\ f(\x') = f'}}
%     \left< \x' \middle| \x \right>
%     =
%     \delta_{f,f'}
%     \sum_{\substack{\x, \x' \in \F_p^n \\ \text{where} \\ f(\x) = f \\ f(\x') = f}}
%     \delta_{\x, \x'}
%     =
%     \delta_{f,f'}
%     \overbrace{
%     \sum_{\substack{\x \in \F_p^n \\ \text{where} \\ f(\x) = f}}
%     1
%     }^{N(f)}
%     =
%     \delta_{f,f'} N(f)
% \]
% We denote by $N(f)$ the number of strings $\x \in \F_p^n$ such that $f(x)=f$.
It will be more convenient to work with the quantity "number of satisfied constraints", denoted $s$ ($f = 2s - m$). % \footnote{$s$ and $f$ relate to each other by $f = 2s - m$. This can be seen by remembering the definitions $f = \text{SAT} - \text{UNSAT}$, $m = \text{SAT} + \text{UNSAT}$, and $s = \text{SAT}$, then the relation follows from simple algebra}.
Then if we denote by $N_s(s)$ the probability for $\x \in \F_p^n$ to satisfy $s$ constraints, the probability to measure $\x$ with $s$ satisfied constraints from the DQI state is
\[
    \text{Pr}_{\text{\tiny DQI}}[s] = \left| P(2s - m) \right|^2 N_s(s)
\]
\end{proof}

\begin{remark} \label{remark:all_dep_is_in_N_s}% corollary?
     All the dependence on the actual problem (LINSAT / QUADSAT / constraints / etc...) all enter via $N_s(s)$. Therefore, showing $N_s(s)$ of max-QUADSAT is the same as $N_s(s)$ of max-LINSAT is enough to use Theorem 4.1 of \cite{og_dqi} (semicircle law for DQI) to claim the expected number of satisfied constraints of max-QUADSAT is the same as that of max-LINSAT.
     % %Therefore, if we find $N_s(s)$ for max-QUADSAT, then we can use Theorem 4.1 of \cite{og_dqi} to get the expected number of satisfied constraints.
     %TODO EXPLAIN BETTER RELATION TO 4.1
\end{remark} % TODO: Can we remove the preimage $r$ assumption?

We shall now show that $N_s(s)$ is distributed \textit{"close enough"} to a binomial distribution with mean $m \frac{r}{p}$ and variance $m \frac{r}{p}\left(1 - \frac{r}{p}\right)$, so that the mean of $s$ in  $\text{Pr}_{\text{\tiny DQI}}[s]$ is the same as if we'd use a perfect binomial distribution for $N_s(s)$.

\begin{claim}[matrix functional leaves uniform distribution uniform] \label{claim:linsat_is_uniform}
    Let $B \in \F_p^{m \times n}$ be a matrix where $m < n$ and $\text{rank}(B) = m$, and let $\x \in \F_p^n$ be a random variable uniformly distributed over $\F_p^n$. Then $\vv = B \x$ is uniformly distributed over $\F_p^m$.
\end{claim}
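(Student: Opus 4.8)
The plan is to prove uniformity by a direct counting argument: I would show that every point $\vv \in \F_p^m$ has exactly the same number of preimages under the linear map $\x \mapsto B\x$, so that each value is attained with probability $p^{-m}$. The structural fact doing all the work is that the fibers of a linear map are cosets of its kernel, and cosets of a subgroup all have the same cardinality.

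First I would note that, since $\text{rank}(B) = m$ equals the dimension of the codomain $\F_p^m$, the map $B : \F_p^n \to \F_p^m$ is surjective; in particular every fiber $B^{-1}(\vv)$ is nonempty. Next I would apply the rank--nullity theorem to get $\dim \ker B = n - m$, hence $|\ker B| = p^{n-m}$. Then, fixing any $\vv$ together with a particular solution $\x_0$ satisfying $B\x_0 = \vv$ (which exists by surjectivity), I would observe that $B^{-1}(\vv) = \x_0 + \ker B$, so that $|B^{-1}(\vv)| = |\ker B| = p^{n-m}$ independently of $\vv$. Finally, because $\x$ is uniform on $\F_p^n$,
\[
    \text{Pr}(B\x = \vv) = \frac{|B^{-1}(\vv)|}{p^n} = \frac{p^{n-m}}{p^n} = \frac{1}{p^m},
\]
which is precisely the uniform distribution on $\F_p^m$.

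Alternatively, and more in keeping with the Fourier-analytic flavour of the rest of the paper, I could expand the indicator via additive characters, writing $\mathbf{1}[B\x = \vv] = p^{-m}\sum_{\mathbf{t} \in \F_p^m} \omega_p^{\mathbf{t}^T(B\x - \vv)}$, swapping the order of summation, and using that the inner sum $\sum_{\x \in \F_p^n} \omega_p^{(B^T\mathbf{t})^T \x}$ vanishes unless $B^T\mathbf{t} = \mathbf{0}$. Full row rank of $B$ makes $B^T$ injective, so only the term $\mathbf{t} = \mathbf{0}$ survives, again yielding $p^{-m}$.

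I do not expect any genuine obstacle: the only point that requires care is justifying that all fibers are equinumerous (equivalently, in the second approach, that $B^T$ is injective), and both facts follow immediately from the full-row-rank hypothesis. It is worth remarking that the assumption $m < n$ is not actually needed for the conclusion; it merely guarantees that the map is strictly many-to-one rather than a bijection, so that the kernel is nontrivial.
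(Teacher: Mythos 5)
Your proof is correct and follows essentially the same route as the paper's: both arguments reduce to the observation that the fibers of the full-row-rank linear map $\x \mapsto B\x$ are cosets of $\ker B$, each of size $p^{n-m}$, so every $\vv \in \F_p^m$ is hit with probability $p^{-m}$ (the paper invokes the first isomorphism theorem where you use rank--nullity plus a particular solution, which is only a cosmetic difference). Your closing remark that $m<n$ is not actually needed for the conclusion is also accurate.
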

\begin{proof}
    Consider $B$ as a linear map from $\F_p^n$ to $\F_p^m$, then by the isomorphism theorem $\F_p^n / \ker{B} \simeq \text{im} B$. Because $B$ has rank $m$ it spans $\F_p^m$, so it's surjective $\text{im } B = \F_p^m$. Thus, the size of the preimage of each $\vv \in \F_p^m$ is the same (it's equal to $\left| \ker{B} \right| = p^{n-m}$). So if $\x$ is uniformly distributed, then so is $\vv$.
\end{proof} % TODO: Image is equidistributed. Since B is rank m the map is surjective
% \noindent\rule{2cm}{0.4pt}
\begin{claim}[LINSAT is close to binomial distribution] \label{claim:dqi_dist_like_binom}
Let $B \in \F_p^{m \times n}$ be a matrix such that $B^T$ is a parity-check matrix of a code with distance $d > 2\ell + 1$,\footnote{Having $d > 2\ell +1$ rather than $d \ge 2\ell + 1$ makes the argument more exact and simpler, but it is not strictly necessary.} and let $f_i : F_p \to \pm 1$ be constraint functions for an instance of max-LINSAT. Suppose that $\left| f_i^{-1} (+1) \right| = r$ for all $i=1,\dots,m$ and some $r \in \set{1, \dots, p-1}$.% (and hence $\left| f_i^{-1} (-1) \right| = p-r$).

Let $\x \in \F_p^n$ be a random variable uniformly distributed in $\F_p^n$, and let $s_i$ be a random variable such that $s_i = 0$ if $f_i (\bb_i \cdot \x) = -1$, and $s_i = +1$ otherwise\footnote{We can write this as $s_i = \frac{1+f(\bb_i \cdot \x)}{2}$}. Let $S = s_1 + \cdots + s_m$, and let $N(s)$ denote the probability that $S=s$. Let $N_{\text{binom}}(s)$ be a binomial distribution with mean $m \frac{r}{p}$ and variance $m \frac{r}{p} (1 - \frac{r}{p})$.

Then for every (properly normalized\footnote{Assuming the coefficients of $P(s)$ are chosen such that $P(s)$ is properly normalized $\sum_s P(s)N(s) = 1$, then also $\sum_s P(s)N_{\text{binom}}(s)=1$. This follows from a weaker version of the same proof of \autoref{claim:dqi_dist_like_binom}}) polynomial $P(s)$ of degree $\le 2\ell$: $\left< s \right>_{s \sim P(s) N(s)} = \left< s \right>_{s \sim P(s) N_{\text{binom}}(s)}$
%$\left< P(s) N(s) \right> = \left< P(s) N_{\text{binom}}(s) \right>$.
% TODO: What about the fact that P(s) is normalized differently?
% Then $N(s)$ is indistinguishable from $N_{\text{binom}}(s)$ by any polynomial test with degree $\le 2\ell$
\end{claim}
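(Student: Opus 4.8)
The plan is to reduce the claimed identity of weighted first moments to an agreement of ordinary moments of $S$ up to order $2\ell+1$, and then to prove that agreement by a subset-independence argument driven entirely by the code distance. First I would observe that $s\,P(s)$ is a polynomial in $s$ of degree at most $2\ell+1$. Expanding it in the monomial basis, the target equality $\sum_s s\,P(s)N(s) = \sum_s s\,P(s)N_{\text{binom}}(s)$ is implied by the moment equalities $\mathbb{E}_{N}[S^j] = \mathbb{E}_{\text{binom}}[S^j]$ for all $0 \le j \le 2\ell+1$. (The normalization footnote is then just the special case obtained by applying the same reasoning to $P$ itself, whose degree is $\le 2\ell$.) So the entire claim reduces to matching moments up to order $2\ell+1$.

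Next I would expand the $j$-th moment. Since each $s_i$ is $\{0,1\}$-valued, $s_i^k = s_i$ for every $k\ge 1$, so a product $s_{i_1}\cdots s_{i_j}$ collapses to $\prod_{i\in T}s_i$ where $T$ is the set of distinct indices appearing in $(i_1,\dots,i_j)$, with $|T|\le j\le 2\ell+1$. This gives
\[
    \mathbb{E}_{N}[S^j]
    = \sum_{i_1,\dots,i_j=1}^m \mathbb{E}\!\left[s_{i_1}\cdots s_{i_j}\right]
    = \sum_{T} c_{T,j}\,\mathbb{E}\!\left[\prod_{i\in T} s_i\right],
\]
where the combinatorial multiplicities $c_{T,j}$ depend only on $j$ and $|T|$, hence are identical for $N$ and $N_{\text{binom}}$. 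Therefore it suffices to show that the joint moment $\mathbb{E}\big[\prod_{i\in T}s_i\big]$ agrees between the two models for every index set $T$ with $|T|\le 2\ell+1$.

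The \emph{heart} of the argument is to show that for any such $T$ the variables $(s_i)_{i\in T}$ are jointly independent, each $\mathrm{Bernoulli}(r/p)$ — which is exactly their law under the binomial model. Since $B^T$ is the parity-check matrix of a code of minimum distance $d > 2\ell+1$, every set of at most $d-1$ columns of $B^T$ is linearly independent; as $d-1 \ge 2\ell+1$, every subcollection $\{\bb_i : i\in T\}$ with $|T|\le 2\ell+1$ is linearly independent. Hence the submatrix $B_T$ with rows $\{\bb_i : i\in T\}$ has full row rank $|T|$, so by \autoref{claim:linsat_is_uniform} (whose proof only needs full row rank, giving equal-size fibers) the vector $(\bb_i\cdot\x)_{i\in T}$ is uniform over $\F_p^{|T|}$ when $\x$ is uniform. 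Because the coordinates of a uniform vector over a product space are independent and uniform, and each $s_i$ is a function of $\bb_i\cdot\x$ alone with $\Pr[s_i=1] = |f_i^{-1}(+1)|/p = r/p$, the $s_i$ for $i\in T$ are independent $\mathrm{Bernoulli}(r/p)$. Consequently $\mathbb{E}_N\big[\prod_{i\in T}s_i\big] = (r/p)^{|T|} = \mathbb{E}_{\text{binom}}\big[\prod_{i\in T}s_i\big]$.

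Combining the three steps, all moments of $S$ up to order $2\ell+1$ coincide under $N$ and $N_{\text{binom}}$, which yields the weighted first-moment identity for every $P$ of degree $\le 2\ell$. I expect the third step to be the only real obstacle: the delicate point is correctly translating ``minimum distance $> 2\ell+1$'' into ``every $\le 2\ell+1$ of the $\bb_i$ are linearly independent'' and then upgrading uniformity of $(\bb_i\cdot\x)_{i\in T}$ into genuine joint independence of the indicators $s_i$; once that is in place, the moment expansion and the reduction to $P(s)$ are routine bookkeeping.
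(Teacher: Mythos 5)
Your proposal is correct and follows essentially the same route as the paper's own proof: reduce the weighted first-moment identity to matching the moments of $S$ up to order $2\ell+1$, collapse those moments to joint moments over index sets of size at most $2\ell+1$, and use the distance bound $d>2\ell+1$ together with \autoref{claim:linsat_is_uniform} to conclude that any such subcollection of the $s_i$ is jointly independent $\mathrm{Bernoulli}(r/p)$. Your write-up is, if anything, slightly more explicit than the paper's (e.g.\ the $s_i^k=s_i$ collapse and the matching multiplicities $c_{T,j}$), but the argument is the same.
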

\begin{proof} % TODO: Note that we use d > 2\ell + 1 rather than d >= 2\ell + 1, which fixes off-by-one issues
    $B^T$ is a parity-check matrix of a code with distance $d > 2\ell + 1$, thus every subset of $2\ell+1$ columns in $B^T$ is linearly independent\footnote{If $B^T$ can correct up to $\lfloor {\frac{d - 1}{2}} \rfloor > \ell$. By definition of code distance, every non-zero $\y \in \F_p^m$ with hamming weight $|\y| < d$ satisfies $B^T\y \ne \mathbf{0}$. Or in other words, every subset of at most $d-1 \ge 2\ell + 1$ (notice $\ge$ instead of $>$) columns of $B^T$ is linearly independent}. and hence each subset of $2 \ell+1$ rows in $B$ is linearly independent. Thus, by \autoref{claim:linsat_is_uniform} every subset of $2\ell+1$ rows in $B^T\x$ is uniformly distributed in $\F_p^{2\ell+1}$, and hence independent. Applying $f_i$ to the $i^{\text{th}}$ rows of the vector $B^T \x$ immediately results in the fact that any collection of $2\ell+1$ variables in $\set{s_i}$ are independently distributed Bernoulli variables with mean $\frac{r}{p}$ and variance $\frac{r}{p} \left(1 - \frac{p}{r}\right)$.
    
    We shall look at the $k$ ($\le 2\ell + 1$) moment of $S = s_1 + \cdots + s_m$, given by
    \begin{equation} \label{eq:S_k_moemnt}
        \left< S^k \right> = \left< \left( s_1 + \cdots + s_m \right)^k \right>
    \end{equation}
    And compare it to $S_{\text{binom}} = s_1^{\text{binom}} + \cdots + s_m^{\text{binom}}$ ($\set{s_1^{\text{binom}}, \dots, s_m^{\text{binom}}}$ are all i.i.d Bernoulli variables with success probability $\frac{r}{p}$). The $k^{\text{th}}$ moment of $S_{\text{binom}}$ is
    \begin{equation} \label{eq:S_binom_k_moment}
        \left< S_{\text{binom}}^k \right> = \left< \left( s_1^{\text{binom}} + \cdots + s_m^{\text{binom}} \right)^k \right>
    \end{equation}
    We note that every element in \autoref{eq:S_k_moemnt} and \autoref{eq:S_binom_k_moment} contains up to $k$ variables from $s_i$, and since we assume $k \le 2 \ell+1$, these $k$ variables are independent. Thus the first $2\ell+1$ moments of $S$ and $S_{\text{binom}}$ match
    \begin{equation}
       \forall k \le 2\ell+1 \ : \ \left< S^k \right> = \left< S_{\text{binom}}^k \right>
    \end{equation}
    Hence, their mean is given by
    \[
        % \left<P(s)N(s)\right>
        \left< s \right>_{s \sim P(s)N(s)}
        =
        \sum_{s} s P(s) N(s)
        =
        \sum_s \sum_{k=1}^{2\ell + 1} \alpha_k s^k N(s)
        =
        \sum_{k=1}^{2\ell + 1} \alpha_k \underbrace{\sum_s s^k N(s)}_{\left<S^k\right>}
        =
        \sum_{k=1}^{2\ell + 1} \alpha_k \left<S^k\right>
    \]
    And similarly
    \[
        % \left<P(s)N_{\text{binom}}(s)\right>
        \left< s \right>_{s \sim P(s)N_{\text{binom}}(s)}
        =
        \sum_{k=1}^{2\ell + 1} \alpha_k \underbrace{\sum_s s^k N_{\text{binom}}(s)}_{\left<S^k\right>}
        =
        \sum_{k=1}^{2\ell + 1} \alpha_k \left<S^k\right>
        =
        % \left<P(s)N(s)\right>
        \left< s \right>_{s \sim P(s)N(s)}
    \]
    Which is what we wanted to show.
\end{proof}
% We can use \autoref{claim:dqi_dist_like_binom} to re-derive Theorem 4.1 in \cite{og_dqi} using a different approach that only depends on the binomial nature of the distribution $N(s)$.

Now, thanks to \autoref{remark:all_dep_is_in_N_s}, we can go on to show the DQI state for max-QUADSAT also obeys the semicircle law. Before that, let's re-derive Lemma 9.2 (and thus Theorem 4.1) of \cite{og_dqi} in a completely different way. Now we have all we need to prove \autoref{claim:rederive_lemma92}

% TODO: Maybe cut?

% \begin{claim} (re-derivation of \cite[lemma 9.2]{og_dqi})\label{claim:rederive_lemma92}
%     Let $f(\x) = \sum_{i=1}^m f_i(B_{ij}x_j)$ be a max-LINSAT objective function with matrix $B \in \F_p^{m \times n}$ for a prime $p$ and positive integers $m$ and $n$, such that $m>n$. Suppose that $\left|f_i^{-1}(+1) \right| = r$ for some $r \in \set{1, \dots, p-1}$. Let $\big< s^{(m, \ell)} \big>$ be the expected number of satisfied constraints for the symbol string obtained upon measuring the DQI state \autoref{def:dqi_state_max_quadsat}. If $2\ell + 1 < d^{\perp}$ where $d^\perp$ is the minimum distance of the code $C^\perp = \set{\mathbf{d} \in \F_p^m : B^T \mathbf{d} = \mathbf{0}}$, then
%     \begin{equation} \label{eq:lemma92_result}
%         \big< s^{(m, \ell)} \big>
%         =
%         \frac{mr}{p}
%         +
%         \frac{\sqrt{r (p - r)}}{p} \ww^\dagger A^{(m, \ell, d)} \ww
%     \end{equation}
%     where $\ww = (w_0, \dots, w_\ell)^T$ and $A^{(m, \ell, d)}$ is the $(\ell+1) \times (\ell+1)$ symmetric tridiagonal matrix
%     \[
%         A^{(m, \ell, d)}
%         =
%         \begin{pmatrix}
%             0 & a_{1} \\
%             a_{1} & d & a_{2} \\
%             & a_{2} & 2d & \ddots \\
%             & & \ddots  & & a_{\ell}\\
%             &&& a_{\ell} & \ell d
%         \end{pmatrix}
%     \]
%     with $a_k = \sqrt{k(m-k+1)}$ and $d=\frac{p-2r}{\sqrt{r(p-r)}}$
% \end{claim}
\begin{proof}[Proof Thoerem 3]
    % Let \autoref{claim:dqi_dist_like_binom} we can
    Using \autoref{claim:dqi_dist_like_binom}, the expectation value of $s$ in the DQI state is
    \[
    \left< s \right>
    =
    \frac{
    \sum_{s = 0}^m s \left| P(s) \right|^2 N(s)
    }{
    \sum_{s = 0}^m \left| P(s) \right|^2 N(s)
    }
    \]
    Where $N(s)$ is a binomial distribution of $m$ Bernoulli variables, each with a success probability $r/p$.
    We define the following inner product with respect to a binomial distribution weight
    \[
        \left< P, Q \right>_{\text{binom}}
        =
        \sum_{s = 0}^m P^*(s) Q(s) N(s)
    \]
    Then the expectation value can be written as the product of $P$ and $sP$
    \[
        \left< s \right>
        =
        \frac{
        \left< s P, P \right>_{\text{binom}}
        }{
        \left< P, P \right>_{\text{binom}}
        }
    \]
    It is a known fact that the \textit{Krawtchouk polynomials} are orthogonal with respect to a binomial weight \cite{krawtchouk_koekoek1996askeyschemehypergeometricorthogonalpolynomials}. That is, let $K_i(s)$ be the $i^{\text{th}}$ (normalized) Krawtchouk polynomial\footnote{In the notation of \cite{krawtchouk_koekoek1996askeyschemehypergeometricorthogonalpolynomials}, $N \to m,\ p \to r/p,\ x \to s,\ K_i \to K_i / \sqrt{\left< K_i, K_i \right>_{\text{binom}}}$}, then
    \[
        \left< K_i, K_j \right>_{\text{binom}} = \delta_{i, j}
    \]
    Now let's write the DQI polynomial $P(s)$ as a linear combination of Krawtchouk polynomials
    \[
        P(s)
        =
        \sum_{k=0}^\ell
        w_k K_k(s)
    \]
    First, we see that $\left< P, P \right>_{\text{binom}} = \ww^\dagger \ww$, where we denoted $\ww = \left(w_0, \cdots,  w_\ell\right)^T$, so we will limit ourselves to $\ww^\dagger \ww = 1$ from now. Then we are left with
    \[
        \left< s \right>
        =
        \big< s \sum_{k=0}^\ell w_k K_k, \sum_{k'=0}^\ell w_{k'} K_{k'} \big>_{\text{binom}}
        =
        \sum_{k, k'=0}^\ell w_k^* w_{k'} \big< s K_k, K_{k'} \big>_{\text{binom}}
    \]
    Second, we can invoke the three-term recurrence relation for the Krawtchouk polynomials \cite{krawtchouk_koekoek1996askeyschemehypergeometricorthogonalpolynomials}
    \[
        sK_k =
          \underbrace{[\frac{r}{p}(m-k) + k(1-\frac{r}{p})]}_{a_{k0}}K_k
        - \underbrace{\sqrt{\frac{r}{p}(1-\frac{r}{p})(m-k)(k+1)}}_{a_{k+}}K_{k+1} 
        - \underbrace{\sqrt{\frac{r}{p}(1-\frac{r}{p})(m-k+1)k}}_{a_{k-}}K_{k-1} 
    \]
    Then
    \begin{equation}\label{eq:s_expectation_before_linalg}
        \left< s \right>
        =
        \sum_{k, k'=0}^\ell w_k^* w_{k'} \big< a_{k0}K_k - a_{k+}K_{k+1} - a_{k-}K_{k-1}, K_{k'} \big>
        =
        \sum_{k=0}^\ell a_{k0}\left| w_k \right|^2 - a_{k+} w_k^*w_{k+1} - a_{k_-} w_k^*w_{k-1}
        % \big< a_{k0}K_k - a_{k+}K_{k+1} - a_{k-}K_{k-1}, K_{k'} \big>
    \end{equation}
    Now we can write $a_{k0}, a_{k\pm}$ in a more familiar way to make the parallel to Lemma 9.2 of \cite{og_dqi} clear:
    \[
        a_{k0}
        = \frac{mr}{p} + k\frac{\sqrt{r(p-r)}}{p}\underbrace{\frac{p- 2r}{\sqrt{r(p-r)}}}_{d}
        \qquad ; \qquad
        a_{k-} = a_{(k-1)+}
        =
        \frac{\sqrt{r(p-r)}}{p}
        \underbrace{\sqrt{k(m-k+1)}}_{a_k}
    \]
    Then if we denote $d := \frac{p- 2r}{\sqrt{r(p-r)}}$, $a_k := \sqrt{k(m-k+1)}$, and the matrix
    \[
        A
        =
        \begin{pmatrix}
            0 & -a_{1} \\
            -a_{1} & d & -a_{2} \\
            & -a_{2} & 2d & \ddots \\
            & & \ddots  & & -a_{\ell}\\
            &&& -a_{\ell} & \ell d
        \end{pmatrix}
    \]
    Then we can rewrite \autoref{eq:s_expectation_before_linalg} as
    \[
        \left< s \right> = \frac{mr}{p} + \frac{\sqrt{r(p-r)}}{p} \ww^\dagger A \ww
    \]
    % Now, if we rewrite \autoref{eq:s_expectation_before_linalg} using matrix $A$ as defined in \autoref{claim:rederive_lemma92}, and take out of the sum $\sum_{k=0}^\ell \frac{mr}{p}\left| w_k \right|^2 = \frac{mr}{p}$, we get exactly \autoref{eq:lemma92_result}, which is what we wanted. TODO: CHECK FACTOR OF -2 ALSO MAYBE ACTUALL WRITE IT
    % Then using $A$ as defined in \autoref{claim:rederive_lemma92} (and the fact that $\sum_{k=0}^\ell \left| w_k \right|^2 = 1$) we can rewrite \autoref{eq:s_expectation_before_linalg} exactly as we expected
    % \[
    %     \left< s \right>
    %     =
    %     \frac{mr}{p}
    %     +
    %     \frac{\sqrt{r(p-r)}}{p} \ww^\dagger A \ww
    % \]
    This becomes exactly \autoref{eq:lemma92_result} (without negative signs on the off-diagonal) when we transform $\ww$ to a basis where $w_k$ is multiplied by $-1$ for even $k$. % TODO: CHECK NEGATIVE SIGNS % TODO: Connection of Krawtchouk to elementary symmetric polynomials: https://www.mdpi.com/2073-8994/8/5/33#:~:text=Section%202%2C%20Theorem%201%2C%20shows,Bernoulli%20trials%20in%20Theorem%202.
\end{proof}

\subsection{Application to max-QUADSAT} \label{sec:opti_app_to_max_quadsat}

So far, in \autoref{claim:dqi_state_is_from_N_s} we have shown that the expectation of the DQI state $\left< s \right>$ depends only on the distribution $N(s)$ of the number of constraints satisfied by a random assignment. In \autoref{claim:dqi_dist_like_binom} we have shown that $N(s)$ is distributed "close enough" to a binomial distribution such that they are indistinguishable when we take the mean.

A key part of that proof was \autoref{claim:linsat_is_uniform}, where we showed that $B\x$ is uniformly distributed for uniformly distributed $\x$. Unfortunately, this claim breaks if we replace the linear functional $B\x$ with a quadratic functional $\x^T C \x$. That is, for a uniform $\x \in \F_p^n$, $\x^T C \x$ is not uniform in $\F_p$. But hope is not lost, because instead, we can show that $\x^T C \x$ is exponentially close to being uniformly distributed. We will now prove the contents of \autoref{claim:quadsat_is_almost_uniform_intorduction}.

\begin{proof}[Proof Claim 2]
Denote by $\lambda_i$ the non-zero entries on the diagonal of $C$, then
\[
    \x^T C \x = \sum_{i=1}^r \lambda_i x_i^2
\]
let $a \in \F_p$, the probability of $\x^T C \x$ to equal $a$ is
\[
    \text{Pr}(\x^T C \x = a)
    =
    \frac{1}{p^n}\sum_{\x \in \F_p^n} \delta_{\x^T C \x, a}
    =
    \frac{1}{p^n}\sum_{\x \in \F_p^n} \frac{1}{p}\sum_{u \in \F_p}\omega_p^{u(\x^T C \x - a)}
    =
    \frac{1}{p^{n+1}}
    \sum_{u \in \F_p}
    \omega_p^{-ua}
    \sum_{\x \in \F_p^n}
    \omega_p^{u\sum_{i=1}^r \lambda_i x_i^2}
\]
Now we can recognize that the inner sum is just a product of $r$ quadratic gauss sums \autoref{def:quadratic_gauss_sum}, $g(u \lambda_i; p)$
\[
    \sum_{\x \in \F_p^n}
    \omega_p^{u \sum_{i=1}^n \lambda_i x_i^2}
    =
    \prod_{i=1}^r
    \sum_{x_i \in \F_p}
    \omega_p^{u\lambda_i x_i^2}
    =
    \prod_{i=1}^r
    g(u\lambda_i; p)
\]
Now using \autoref{eq:quadratic_gauss_sum_a} we can write $g(u\lambda_i; p) =p \delta_{u\lambda_i, 0} + i_p \chi(u \lambda_i) \sqrt{p}$ and since $\lambda_i \ne 0$ the entire product is $p^n \delta_{u, 0} + i_p^r \chi(u)^r \chi\left(\prod_{i=1}^r \lambda_i\right) p^{r/2}$. Plugging that back in yields
\[
    \text{Pr}(\x^T C \x = a)
    =
    \frac{1}{p^{r+1}}
    \sum_{u \in \F_p}
    p^r \delta_{u, 0}
    +
    \chi\left(\prod_{i=1}^r \lambda_i\right)
    i_p^r
    \omega_p^{-ua}
    \chi(u)^r
    p^{r/2}
    =
    \frac{1}{p}
    +
    \frac{i_p^r \chi\left(\prod_{i=1}^r \lambda_i\right)}{p^{r/2+1}}
    \sum_{u \in \F_p}
    \omega_p^{-ua}\chi(u)^r
\]
For even $r$, $\chi(u)^r = 1-\delta_{u, 0}$, and the sum becomes $\sum_{u \in \F_p^*}\omega_p^{-ua} = \sum_{u \in \F_p}\omega_p^{-ua} - \omega_p^{-0\cdot a} = p\delta_{a,0}-1$. For odd $r$ then $\chi(u)^r = \chi(u)$, and the sum is just a quadratic gauss sum $\sum_{u \in \F_p} \omega_p^{-ua}\chi(u) = g(-a; p) = i_p \chi(-a) \sqrt{p}$. Hence
\[
    \text{Pr}(\x^T C \x = a)
    =
    \frac{1}{p}
    +
    \frac{\chi\left(\prod_{i=1}^r \lambda_i\right)}{p^{r/2+1}}
    \begin{cases}
        i_p^r\left(p \delta_{a,0} - 1\right) & r \text{ even} \\
        i_p^{r+1} \chi(-a) \sqrt{p} & r \text{ odd}
    \end{cases}
    % \frac{i_p^r \chi\left(\prod_{i=1}^r \lambda_i\right)}{p^{r/2+1}}
    % \sum_{u \in \F_p}
    % \omega_p^{-ua}\chi(u)^r
\]
Notice that $i_p$ is always raised to an even power, so it's never imaginary (can only be $\pm 1$). And recall that $\chi$ only takes the values $\pm 1, 0$. So the largest the second term can be (in absolute value) is $\frac{p-1}{p^{r/2+1}} = p^{-r/2} - p^{-r/2-1}$, which occurs when $a=0$ and $r$ is even. This proves \autoref{eq:quadsat_is_almost_uniform_introduction}.
% Hence
% \[
% \text{Pr}(\x^T C \x = a)
% =
% \frac{1}{p}
% \pm
% O\left(p^{-r/2}\right)
% \]
\end{proof}
% \end{wiparea}
In the limit $r \to \infty$, $\x^T C \x$ is distributed uniformly, just like the linear case $B\x$ in \autoref{claim:linsat_is_uniform}, thus all conditions of \autoref{claim:dqi_dist_like_binom} hold for max-QUADSAT if we replace the linear functional $\bb_i \cdot \x$ with a quadratic one $\x^T C_i \x$, and hence \autoref{claim:quadsat_also_binomial} is proven and the semicircle law holds for the DQI state of max-QUADSAT.

\appendix

\section{Quantum Techniques} \label{sec:appendix_quantum_techniques}

The goal of this section is to create "primitives" (quantum subroutines) that would be useful for us in preparing the max-QUADSAT states, see \autoref{sec:dqi_quantum_state_for_max_quadsat}.

\begin{primitive}[Quadratic phase] \label{prim:quadratic_phase}
    Let $p$ be an odd prime, and let $a \in \F_p$. There exists an efficient quantum algorithm that succeeds with probability $\frac{1}{8}$%\footnote{TODO: I think it could be possible to improve this to $\frac{1}{2}$}
    to do the following transformation
    \begin{equation} \label{eq:quadratic_phase_transform}
        \left| a \right>
        \to
        \sum_{x \in \F_p} \omega_p^{a x^2} \left| x \right>
    \end{equation}
    We will assume here that $p = 3 \mod 4$ to not separate everything into cases where $-1$ is a residue or nonresidue, but the proof for $p= 1 \mod 4$ is essentially identical.
\end{primitive}

\begin{proof}
We begin with a definition % TODO FIX p=3mod5
% \begin{definition}[sign in a finite field]
%     Let $a \in \F_p^\times$. We say that $a$ is \textit{"positive"} if it's a quadratic residue $\mod{p}$, and that $a$ is \textit{negative} if it's a quadratic nonresidue $\mod{p}$.\footnote{This analogy works best when $-1$ is a nonresidue, then if $a$ is positive, then $b = -a = -1 \cdot a$ is negative.} We define the "sign" function as
%     \begin{equation} \label{eq:finite_field_sign}
%         \text{sign}(a) = \begin{cases}
%             +1 & \text{if $x$ is a residue} \\
%             -1 & \text{if $x$ is a nonresidue}
%         \end{cases}
%     \end{equation}
% \end{definition}

\begin{definition}[invertible square root in a finite field] \label{def:invertible_sqrt_finite_field}
    % Let $x \in \F_p$, and we shall define the operation $\sqrt{x}$ as
    % \begin{equation}
    %     \sqrt{x} := \begin{cases}
    %         u & \text{for $u \in \F_p$ s.t. $u^2 = x$ and $u$ is a residue} \\
    %         u & \text{for $u \in \F_p$ s.t. $u^2 = -x$ and $u$ is a nonresidue}
    %     \end{cases}
    % \end{equation}
    Let $x \in \F_p$ be a residue, denote $\sqrt[\pm]{x}$ as the two square roots of $x$, thus $(\sqrt[+]{x})^2 = (\sqrt[-]{x})^2 = x$ and $\sqrt[+]{x} \ne \sqrt[-]{x}$. 
    % , and let let $h : \F_p \to \F_p$ be any bijection from the non-residues to the residues.
    Then we define the operation $\sqrt{x}$ as
    \[
        \sqrt{x}
        :=
        \begin{cases}
            0 & \text{if $x=0$} \\
            \sqrt[+]{x} & \text{if $x$ is a residue} \\
            \sqrt[-]{-x} & \text{if $x$ is a nonresidue}
        \end{cases}
    \]
    % is a bijection $\sqrt{x} : \F_p \to \F_p$. % TODO: This was just for a screenshot
\end{definition}

Intuitively, exactly half of the elements in $\F_p^\times$ are residues, and each residue has two roots, so we "take" one root from each residue and "give" it to a non-residue.
% Each of the residues has exactly two roots, so we define $\sqrt{x}$ such that if $x$ is "positive", $\sqrt{x}$ takes its positive root. If $x$ is "negative" (and thus, has no roots), we look at $-x$ (which has two roots) and take its "negative" root.
It's easy to verify that $\sqrt{x}$ is a bijection on $\F_p$ onto itself.

We can easily perform the following transformation
\[
    \left| a \right>
    \xrightarrow{\text{QFT}}
    \sum_{x \in \F_p} \omega_p^{a x} \left| x \right>
    \xrightarrow{\sqrt{x}}
    \sum_{x \in \F_p} \omega_p^{a x} \left| \sqrt{x} \right>
\]
Where doing $\left| x \right> \to \left| \sqrt{x} \right>$ is done by reversibly computing $\sqrt{x}$ in an ancilla\footnote{This requires an efficient algorithm to take the square root of a quadratic residue in a finite field, see \cite{sqrt_finite_adiguzelgoktas2024squarerootcomputationfinite} for example}, then uncomputing $x$ by squaring.

Let us denote $s = \sqrt{x}$, then
\[
    x = \begin{cases}
        s^2 & \text{if $s$ is a residue} \\
        -s^2 & \text{if $s$ is a nonresidue}
    \end{cases}
\]
Or, more compactly, $x = \chi(s)s^2$. Then, the sum above becomes
\[
    \sum_{s \in \F_p} \omega_p^{\chi(s)a s^2} \left| s \right>
\]
Using this method, we can implement the following two unitary operators\footnote{$U_-$ is implemented by using IQFT instead of QFT in the first step}
\[
    U_{\pm} \left| a \right>
    =
    \sum_{s \in \F_p} \omega_p^{\pm\chi(s) a s^2} \left| s \right>
\]
For $U_+$, if we reversibly compute $\left| \chi(s) \right>$ into an ancilla register, measure, and postselect on $+$, we will get the state
\[
    \sum_{\substack{s \in \F_p \\ \text{$s$ is a residue}}} \omega_p^{a s^2} \left| s \right>
\]
And similarly for $U_-$, we reversibly compute $\left| \chi(s) \right>$ into an ancilla register, measure, and postselect on $-$, we will get the state
\[
    \sum_{\substack{s \in \F_p \\ \text{$s$ is a nonresidue}}} \omega_p^{a s^2} \left| s \right>
\]
To combine the two, we will use an approach very similar to the Linear Combination of Unitaries method (LCU, see \cite{childs2017lecturenotes} section 27.3).

% If we measure the sign register and post-select on $+$, we will have almost exactly the state we want, expect that the sum $\sum_s$ would be only over all the "positive" numbers. To fix this, we will use Linear Combination of Unitaries (LCU) technique.

We first define the "select" operator
\[
    S \left| 0/1 \right> \left| \psi \right>
    =
    \left| 0/1 \right> U_{\pm}\left| \psi \right>
\]
Then
\[
    S\left(\left| 0 \right> + \left| 1 \right>\right) \left| a \right>
    =
    \left| 0 \right>
        \sum_{s \in \F_p}
        \omega_p^{\chi(s) a s^2}   % TODO: use \chi and not +\chi!
        \left| s \right>
    +
    \left| 1 \right>
        \sum_{s \in \F_p}
        \omega_p^{-\chi(s) a s^2}
        \left| s \right>
\]
We can now compute $\left| \chi(s) \right>$ into two ancilla registers, one conditioned on the first register being $\left| 0 \right>$, and the other conditioned on it being $\left| 1 \right>$. Then post select on the first $\chi(s)$ being $+$ and the second being $-$ ($1/4$ chance), and get the state
% \footnote{We can actually just compute $\left| \chi_0(s) + \chi_1(s)\right>$ into an ancilla, and post-select on $0$, then compute and measure $\chi_0(s)$. If it's $+$, then we're done. If it's $-$, then both "residue" and "nonresidue" cases are wrong, so just do $\left| s \right> \to \left| -s \right>$ and you're done}
\[
    \left| 0 \right>
        \sum_{\substack{s \in \F_p\\ \text{residue}}}
        \omega_p^{ a s^2}
        \left| s \right>
    +
    \left| 1 \right>
        \sum_{\substack{s \in \F_p\\ \text{nonresidue}}}
        \omega_p^{ a s^2}
        \left| s \right> 
\]
Now do a Hadamard on the first register, and post-select $0$. This would yield exactly the desired result.

\end{proof}

\begin{primitive}[shifted quadratic phase] \label{prim:shifted_quadratic_phase}
    Let $a, b \in \F_p$, then there exists an efficient quantum algorithm that succeeds in doing the following transformation with probability $\frac{1}{p}$:
\[
    \left| a \right> \left| b \right>
    \to
    \sum_{x \in \F_p} \omega_p^{a (x + b)^2} \left| x \right>
\]
\end{primitive}
\begin{proof}
    Using \autoref{prim:quadratic_phase}, we can efficiently implement the following transformation
    \[
        \left| a \right> \left| b \right>
        \to
        \sum_{x \in \F_p}
            \omega_p^{a x^2} \left| x \right> \left| b \right>
    \]
    Then we can reversibly subtract the second register from the first register and get the following:
    \[
        \sum_{x \in \F_p}
            \omega_p^{a x^2} \left| x - b \right> \left| b \right>
    \]
    Which is equal to (after change of variables, $x \to x + b$)
    \[
        \sum_{x \in \F_p}
            \omega_p^{a (x+b)^2} \left| x\right> \left| b \right>
    \]
    Now all we need is to uncompute the second register, $\left| b \right>$. We can do the usual Hadamard and post-selection trick
    \[
        \sum_{x \in \F_p}
            \omega_p^{a (x+b)^2} \left| x\right>
            \sum_{y \in \F_p} (-1)^{y \cdot b} \left| y \right>
    \]
    Then, measuring the second register and post-selecting on $y = 0$ yields the desired result. Note that post-selection succeeds with probability $\frac{1}{p}$.
\end{proof}

\begin{primitive}[Quadratic form phase for diagonal matrices] \label{prim:quadratic_form_phase_diagonal}
    Let $n$ be an integer, and let $D \in \F_p^{n \times n}$ be a diagonal matrix. There exists an efficient quantum algorithm that does the following transform
    \begin{equation} \label{eq:quadratic_form_phase_diagonal}
        \left| D \right>
        \to
        \sum_{x \in \F_p^n} \omega_p^{\x^T D \x} \left| \x \right>
    \end{equation}
\end{primitive}
\begin{proof}
The target state in \autoref{eq:quadratic_form_phase_diagonal} is just a product of $n$ separate registers, each prepared in exactly the same way as \autoref{prim:quadratic_phase}.
\[
    \prod_{i=1}^n \left(
        \sum_{x \in \F_p} \omega_p^{D_i x_i^2} \left| x_i \right>
    \right)
\]
So we can prepare each sub-register separately and succeed with each time with probability $\ \frac {1}{8}$. If preparing the sub-register failed (i.e., we measured the wrong value for postselection), we can discard that specific register and try again.

% TODO: David had an argument that since creating the state in each sub-register requires a measurement + post-selection, if the measurement "fails", then it could create unwanted entanglement such that it wouldn't be possible the discard that specific sub-register and try again, but instead we'd need to start all over, which would mean the algorithm is not efficient at all. I don't see how that could happen now. I tried checking by induction that if creating the $k+1$ sub-register "fails", then it doesn't affect the previous sub-registers, but it seems too trivial; maybe I'm missing something. (the setup was a little different then, so maybe it's no longer an issue?)
\end{proof}

\begin{primitive}[quantum conditions] \label{prim:quantum_conditions}
    Let $P: \F_p \to \set{0,1}$ be an efficiently computable predicate. And assume there are efficient quantum algorithms that does $U_0: \left| x \right> \to \left| f_0(x) \right>$ and $U_1: \left| x \right> \to \left| f_1(x) \right>$. Then there exists an efficient quantum algorithm that does
    \[
        \left| x \right>
        \to
        \delta_{P(x), 0} \left| f_0(x) \right>
        +
        \delta_{P(x), 1} \left| f_1(x) \right>
    \]
\end{primitive}
\begin{proof}
    The final state is equivalent to
    \[
        \left| f_{P(x)}(x) \right>
    \]
    Hence, we can compute the predicate into a single qubit ancilla.
    \[
        \left| x \right>
        \to
        \left| x \right> \left| P(x) \right>
    \]
    We can now apply $U_0$ conditioned on $P(x)=0$, and $U_1$ conditioned on $P(x)=1$ to get
    \[
        \to
        \left| f_{P(x)}(x) \right> \left| P(x) \right>
    \]
    To uncompute the second register, will use the usual Hadamard-postselection trick
    \[
        \xrightarrow{H}
        \left| f_{P(x)}(x) \right> \left(\left| 0 \right> + (-1)^{P(x)} \left| 1 \right>\right)
    \]
    Now we can measure the second register and post-select on $0$ to achieve the desired result
\end{proof}

\section{Gauss Sums} \label{sec:appendix_gauss_sums}

Here, we review some well-known properties of Gauss Sums that are used throughout this paper.

\begin{definition}[Gauss sums over Finite Fields]\label{def:gauss_sums_over_finiite_fields}
     Given a finite field $\F_{p^r}$, a multiplicative character in the field $\chi_\alpha$, and an additive character $e_\beta$, the Gauss sum is
     \[
        G(\F_{p^r}, \chi_\alpha, e_\beta)
        = \sum_{x \in \F_{p^r}} \chi_\alpha(x) e_\beta(x)
        = \sum_{x \in \F_{p^r}} \omega_{p^r -1}^{\alpha \log_g x} \omega_p^{\text{Tr}(\beta x)}
     \]
\end{definition}
Where $\log_g x$ is the discrete logarithm of $x$ in the multiplicative group (since the multiplicative group $\F_{p^r}^\times$ is cyclic, we can write each element $x \in \F_{p^r}^\times$ as $x = g^{n}$, where $g$ is a generator of the group, then $\log_g x = n$). We sometimes use the shorthand notation $G(\F_{p^r}, \alpha, \beta)$ or even $G(\alpha, \beta)$ when it's clear from the context.

Since $\chi_\alpha$ and $e_\beta$ are homomorphisms, we can easily prove the following fact:

\begin{claim}
    For $\beta \ne 0$, it holds that $G(\F_p, \alpha, \beta \delta) = \chi_\alpha(\beta^{-1})G(\F_p, \alpha, \delta)$
\end{claim}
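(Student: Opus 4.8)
The plan is to prove this by a straightforward change of variables in the summation, exploiting the multiplicativity of $\chi_\alpha$ together with the linearity of the additive character exponent. Since we work over $\F_p$ (so $r=1$), the trace $\text{Tr}$ is the identity map and the additive character simplifies to $e_\beta(x) = \omega_p^{\beta x}$. The key observation is that multiplying the additive-character index $\delta$ by $\beta$ has the same effect as rescaling the summation variable, and that rescaling interacts cleanly with the multiplicative character.

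First I would unfold the definition from \autoref{def:gauss_sums_over_finiite_fields}, writing
\[
    G(\F_p, \alpha, \beta\delta)
    = \sum_{x \in \F_p} \chi_\alpha(x)\, \omega_p^{\beta\delta x}.
\]
Next, since $\beta \ne 0$, the map $x \mapsto u := \beta x$ is a bijection of $\F_p$ onto itself, so I would substitute $x = \beta^{-1}u$. The exponent becomes $\omega_p^{\beta\delta x} = \omega_p^{\delta u}$, and the multiplicative character factors as $\chi_\alpha(\beta^{-1}u) = \chi_\alpha(\beta^{-1})\chi_\alpha(u)$ because $\chi_\alpha$ is a group homomorphism on $\F_p^\times$. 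Pulling the constant factor $\chi_\alpha(\beta^{-1})$ out of the sum leaves exactly
\[
    \chi_\alpha(\beta^{-1}) \sum_{u \in \F_p} \chi_\alpha(u)\, \omega_p^{\delta u}
    = \chi_\alpha(\beta^{-1})\, G(\F_p, \alpha, \delta),
\]
which is the claimed identity.

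This argument is essentially routine, so there is no genuine obstacle; the only point requiring a small amount of care is the treatment of the $x=0$ term, where the multiplicative character is extended by the convention $\chi_\alpha(0)=0$. Because that term contributes nothing to either sum, the change of variables may be viewed either as a bijection of all of $\F_p$ (with the zero terms matching trivially) or as a bijection of $\F_p^\times$; in both readings the factorization $\chi_\alpha(\beta^{-1}u)=\chi_\alpha(\beta^{-1})\chi_\alpha(u)$ holds wherever it is nonzero, so the identity follows without incident.
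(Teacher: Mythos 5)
Your proof is correct and follows essentially the same route as the paper: the substitution $x \mapsto \beta x$ combined with the multiplicativity of $\chi_\alpha$ to extract the factor $\chi_\alpha(\beta^{-1})$. Your extra remark about the $x=0$ term is a harmless refinement the paper leaves implicit.
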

\textit{Proof:} 
\[
\begin{aligned}
    G(\F_{p^r}, \alpha, \beta \delta)
      &= \sum_{x \in \F_{p^r}} \chi_\alpha(x) \omega_p^{\text{Tr}(\beta \delta x)}
      \xrightarrow{x' = \beta x}
      \sum_{x' \in \F_{p^r}} \chi_\alpha(\beta^{-1}x') \omega_p^{\text{Tr}(\delta x')}
      \\ &=
      \chi_\alpha(\beta^{-1})\sum_{x' \in \F_{p^r}} \chi_\alpha(x') \omega_p^{\text{Tr}(\delta x')}
      =
      \chi_\alpha(\beta^{-1})G(\F_{p^r}, \alpha, \delta)
\end{aligned}
\]

A direct consequence is that Gauss sums with different additive characters are proportional.

\begin{equation} \label{eq:gauss_sum_propto_one}
    G(\F_{p^r}, \alpha, \beta)
    = G(\F_{p^r}, \alpha, \beta \cdot 1)
    = \chi_\alpha(\beta^{-1})G(\F_{p^r}, \alpha, 1)
\end{equation}
Hence, they only differ by a phase, $\chi_\alpha (\beta^{-1})$.

\subsection{Quadratic Gauss Sums}
\begin{definition}[quadratic character] \label{def:quadratic_character}
    The quadratic character mod $p$ is defined as\footnote{Can also be written as Legendre symbol}
    \begin{equation} \label{eq:quadratic_character}
        \chi(a) =
        \begin{cases}
            1 & \text{for $a \ne 0$,\quad $a = u^2$,\quad $u \in \F_p$} \\
            -1 & \text{for $a \ne 0$,\quad $a \ne u^2$,\quad $u \in \F_p$} \\
            0 & \text{for $a = 0$}
        \end{cases}
    \end{equation}
\end{definition}
\begin{definition}[quadratic Gauss sums] \label{def:quadratic_gauss_sum}
    Let $p$ be an odd prime, and $a \in \F_p$, then the quadratic Gauss sum $g(a; p)$ is defined as
    \[
        g(a; p) = \sum_{x \in \F_p} \omega_p^{a x^2}
    \]
\end{definition}
\begin{claim}
    Let $\chi$ be the quadratic character mod $p$ as defined in \autoref{eq:quadratic_character}, then
    \begin{equation}\label{eq:quadratic_gauss_sum_a_to_one}
    g(a; p) = \chi(a) g(1; p)    
    \end{equation}
\end{claim}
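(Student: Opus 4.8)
The plan is to reduce the quadratic Gauss sum of \autoref{def:quadratic_gauss_sum}, which is an additive sum over $\omega_p^{ax^2}$, to an honest Gauss sum of the form in \autoref{def:gauss_sums_over_finiite_fields} built from the quadratic \emph{multiplicative} character $\chi$, and then invoke the proportionality already established in \autoref{eq:gauss_sum_propto_one}. Throughout I will take $a \neq 0$; the claim is intended in this regime (indeed at $a=0$ one has $g(0;p)=\sum_{x}1=p$ while $\chi(0)g(1;p)=0$, so the $a=0$ case is the separate Kronecker-delta term that appears wherever the paper uses this identity).

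The key step is to reorganize the sum over $x$ into a sum over the values $t = x^2$. The map $x \mapsto x^2$ has fiber size $N(t) = 1 + \chi(t)$ for every $t \in \F_p$: there is a single root when $t=0$ (and $\chi(0)=0$), two roots when $t$ is a nonzero residue (and $\chi(t)=1$), and none when $t$ is a nonresidue (and $\chi(t)=-1$). Hence I can rewrite
\[
    g(a;p) = \sum_{t \in \F_p} N(t)\, \omega_p^{at}
    = \sum_{t \in \F_p} \omega_p^{at} + \sum_{t \in \F_p} \chi(t)\,\omega_p^{at}.
\]
By orthogonality of the additive characters, $\sum_{t \in \F_p}\omega_p^{at} = p\,\delta_{a,0} = 0$ for $a \neq 0$, so the first sum drops out and I am left with $g(a;p) = \sum_{t\in\F_p}\chi(t)\omega_p^{at} = G(\F_p,\chi,a)$, the quadratic-character Gauss sum. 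Applying the same manipulation at $a=1$ identifies $g(1;p) = G(\F_p,\chi,1)$.

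It now remains to apply \autoref{eq:gauss_sum_propto_one} with the multiplicative character $\chi$, giving $G(\F_p,\chi,a) = \chi(a^{-1})\,G(\F_p,\chi,1)$; since $\chi$ takes values in $\set{\pm 1}$ on $\F_p^\times$ we have $\chi(a^{-1}) = \chi(a)^{-1} = \chi(a)$, and combining the three displayed identifications yields $g(a;p) = \chi(a)\,g(1;p)$, as desired. I expect the main obstacle to be the fiber-counting reduction itself, i.e.\ convincingly justifying the identity $g(a;p) = G(\F_p,\chi,a)$ for $a\neq 0$ that bridges the additive and multiplicative descriptions; once that recognition is in place, the rest is a direct appeal to \autoref{eq:gauss_sum_propto_one} together with the elementary fact that $\chi$ is its own inverse on $\F_p^\times$.
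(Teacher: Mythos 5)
Your proof is correct and follows essentially the same route as the paper: both arguments reduce to the decomposition $g(a;p) = p\delta_{a,0} + G(\chi,a)$ via the fiber count $1+\chi(t)$ of the squaring map and then invoke \autoref{eq:gauss_sum_propto_one} together with $\chi(a^{-1})=\chi(a)$. The only cosmetic difference is that you obtain the fiber count by the direct substitution $t=x^2$, whereas the paper reaches the same expression by computing the Fourier transform $\tilde{g}(k;p)=1+\chi(k)$ and re-expanding; your explicit handling of the $a=0$ exception is a small bonus in precision.
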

\begin{proof}
    Expanding $g(a; p)$ in the Fourier basis yields
    \[
        \tilde{g}(k;p)
        = \frac{1}{p}\sum_{a \in \F_p} \omega_p^{-ak} g(a;p)
        = \frac{1}{p}\sum_{a \in \F_p} \sum_{x \in \F_p} \omega_p^{a(x^2-k)}
        = \frac{1}{p}\sum_{x \in \F_p} \underbrace{\sum_{a \in \F_p} \omega_p^{a(x^2-k)}}_{p\delta_{k,x^2}}
        = \begin{cases}
            2 & \text{if $k$ is a residue} \\
            0 & \text{if $k$ is a nonresidue}
        \end{cases}
    \]
    This can be written more compactly as
    \[
        \tilde{g}(k; p) = 1 + \chi(k)
    \]
    Now writing $g(a;p)$ in terms of its Fourier components yields
    \[
        g(a;p)
        = \sum_{k \in \F_p} \tilde{g}(k;p) \omega_p^{ak}
        = \sum_{k \in \F_p} (1 + \chi(k)) \omega_p^{ak}
        = 
        \underbrace{\sum_{k \in \F_p} \omega_p^{ak}}_{p\delta_{a,0}}
        +
        \underbrace{\sum_{k \in \F_p} \chi(k) \omega_p^{ak}}_{G(\chi, a)}
        = 
        p\delta_{a,0}
        +
        G(\chi, a)
    \]
    For $a=0$ we get $g(0;p) = p$, and for $a \in \F_p^\times$ we get that $g(a;p)$ is a Gauss sum $g(a;p) = G(\chi, a)$ as defined in \autoref{def:gauss_sums_over_finiite_fields}. Then using \autoref{eq:gauss_sum_propto_one} we know that $G(\chi, a) = \chi(a^{-1}) G(\chi, 1) = \chi(a^{-1}) g(1; p)$. Recalling that the inverse of a residue is a residue and that the inverse of a nonresidue is a nonresidue, and hence $\chi(a^{-1}) = \chi(a)$, yields the desired result\footnote{You can also easily see it since $\chi(a)\chi(a^{-1}) = \chi(a \cdot a^{-1})= \chi(1) = 1$, and thus $\chi(a)$ and $\chi(a^{-1})$ share the same sign}.
\end{proof}

\begin{claim} The quadratic Gauss sum $g(1; p)$ is
    \begin{equation} \label{eq:quadratic_gauss_sum_one}
        g(1; p) = \begin{cases}
            \sqrt{p}  & \text{if $p \equiv 1 \mod{4}$} \\
            i\sqrt{p} & \text{if $p \equiv 3 \mod{4}$} \\
        \end{cases}
    \end{equation}
\end{claim}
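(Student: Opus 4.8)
The plan is to pin down the complex number $g(1;p)$ by fixing, in turn, its modulus, then its axis (real versus imaginary), and finally its overall sign. For the modulus I would compute $|g(1;p)|^2 = g(1;p)\overline{g(1;p)} = \sum_{x,y\in\F_p}\omega_p^{x^2-y^2}$; substituting $x=y+t$ turns the exponent into $2ty+t^2$, and the inner sum $\sum_{y}\omega_p^{2ty}$ collapses to $p\,\delta_{t,0}$ because $2$ is invertible modulo the odd prime $p$. Only the $t=0$ term survives, giving $|g(1;p)|^2=p$, hence $|g(1;p)|=\sqrt p$.

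Next I would determine whether $g(1;p)$ is real or purely imaginary using the relation already recorded in \autoref{eq:quadratic_gauss_sum_a_to_one}. Conjugation gives $\overline{g(1;p)}=\sum_x\omega_p^{-x^2}=g(-1;p)=\chi(-1)\,g(1;p)$, and by Euler's criterion $\chi(-1)=(-1)^{(p-1)/2}$. Thus for $p\equiv 1\pmod 4$ we get $\overline{g(1;p)}=g(1;p)$, so $g(1;p)$ is real and equals $\pm\sqrt p$, while for $p\equiv 3\pmod 4$ we get $\overline{g(1;p)}=-g(1;p)$, so $g(1;p)$ is purely imaginary and equals $\pm i\sqrt p$. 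At this point only a single sign remains undetermined in each case, and the claim reduces to showing that this sign is always $+$.

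To fix that sign I would use Schur's trace method. Note that $g(1;p)$ is exactly the trace of the (unnormalized) $p\times p$ discrete Fourier matrix $F$ with $F_{jk}=\omega_p^{jk}$, since its diagonal entries are $\omega_p^{j^2}$. A one-line computation gives $F^2=p\,P$, where $P$ is the involution $j\mapsto -j$, so $F^4=p^2 I$ and every eigenvalue of $F$ lies in $\{\sqrt p,-\sqrt p,i\sqrt p,-i\sqrt p\}$. Writing $a,c,b,d$ for the respective multiplicities, the identities $a+b+c+d=p$ and $\mathrm{tr}(F^2)=p\,\mathrm{tr}(P)=p$ (only $j=0$ is fixed by $P$) force $a+c=(p+1)/2$ and $b+d=(p-1)/2$. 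Together with the reality or imaginarity from the previous step, this leaves exactly one unknown sign: $a-c=\pm1$ when $p\equiv1$, or $b-d=\pm1$ when $p\equiv3$.

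The remaining sign I would extract from $\det F$, evaluated two ways. On the spectral side $\det F=p^{p/2}(-1)^{c}\,i^{\,b-d}$. On the other side $\det F$ is the Vandermonde determinant $\prod_{0\le j<k\le p-1}(\omega_p^{k}-\omega_p^{j})$; writing each factor as $\omega_p^{(j+k)/2}\cdot 2i\sin(\pi(k-j)/p)$ splits it into a positive real magnitude and a pure phase. The sine factors are all positive, the accumulated $\omega_p^{(j+k)/2}$ phase evaluates to $1$ since $\sum_{j<k}(j+k)=p(p-1)^2/2$ contributes an even multiple of $\pi i$, and the leftover phase is $i^{\binom{p}{2}}$. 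Equating the two expressions for $\det F$ yields a congruence (for $c$ modulo $2$ when $p\equiv1$, and for $b-d$ modulo $4$ when $p\equiv3$); because the two candidate values of $a-c$ force opposite parities of $c$, and the two candidate values of $b-d$ are distinct modulo $4$, this congruence is sharp enough to select the $+$ sign in both cases, giving $g(1;p)=\sqrt p$ for $p\equiv1\pmod4$ and $g(1;p)=i\sqrt p$ for $p\equiv3\pmod4$. The main obstacle is precisely this last step: fixing the overall sign is the classical difficulty first resolved by Gauss, and the delicate point is computing the phase of the Vandermonde determinant and verifying that the resulting parity constraint rules out the wrong sign.
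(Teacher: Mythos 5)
Your proposal is correct, but it takes a genuinely different route from the paper: the paper does not prove this claim at all, it simply defers to Theorem 1.1 of the cited reference on the evaluation of Gauss sums, whereas you supply a self-contained argument. Your three-stage structure is sound. The modulus computation via $x = y+t$ and the collapse of $\sum_y \omega_p^{2ty}$ to $p\,\delta_{t,0}$ is standard and correct; the reality/imaginarity dichotomy via $\overline{g(1;p)} = g(-1;p) = \chi(-1)\,g(1;p)$ together with Euler's criterion is correct and usefully reuses \autoref{eq:quadratic_gauss_sum_a_to_one}; and the sign determination is precisely Schur's classical proof. The bookkeeping checks out: $\mathrm{tr}(F^2) = p\,\mathrm{tr}(P) = p$ gives $a + c = (p+1)/2$ and $b + d = (p-1)/2$, the spectral determinant $p^{p/2}(-1)^c i^{\,b-d}$ against the Vandermonde phase $i^{\binom{p}{2}}$ pins down the parity of $c$ (when $p \equiv 1 \bmod 4$) or the residue of $b-d$ modulo $4$ (when $p \equiv 3 \bmod 4$), and in each case the two candidate signs are distinguished by that congruence, as one can verify on $p=5$ and $p=3$. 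The one point worth making explicit if you write this up fully is that $\omega_p^{(j+k)/2}$ should be read as a $2p$-th root of unity $e^{\pi i (j+k)/p}$, so that the factorization $\omega_p^k - \omega_p^j = \omega_p^{(j+k)/2}\cdot 2i\sin\bigl(\pi(k-j)/p\bigr)$ is legitimate; with that convention your evaluation of the accumulated phase as $e^{\pi i (p-1)^2/2} = 1$ is correct. What your approach buys is a complete elementary proof in place of a black-box citation, at the cost of roughly a page of determinant and multiplicity bookkeeping; since this evaluation is classical and well documented, the paper's choice to cite is defensible, but your argument would make the appendix self-contained.
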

\begin{proof}
    See theorem 1.1. of \cite{murty2017evaluation}.
\end{proof}
It follows from \autoref{eq:quadratic_gauss_sum_a_to_one} and \autoref{eq:quadratic_gauss_sum_one} that
\begin{equation} \label{eq:quadratic_gauss_sum_a}
        g(a; p) = \begin{cases}
            \chi(a)\sqrt{p}  & \text{if $p \equiv 1 \mod{4}$} \\
            i\chi(a)\sqrt{p} & \text{if $p \equiv 3 \mod{4}$} \\
        \end{cases}
\end{equation}
To avoid case distinctions, define\footnote{We could have also written it more compactly $i_p = i^{\frac{1 + \chi(-1)}{2}}$}
\[
    i_p = \begin{cases}
            1  & \text{if $p \equiv 1 \mod{4}$} \\
            i & \text{if $p \equiv 3 \mod{4}$} \\
    \end{cases}
\]

\begin{claim}[multidimensional quadratic Gauss sum] \label{claim:multidim_quadratic_gauss_sum}
    Let $p$ be a prime, $n \in \mathbb{N}$, let $A \in \F_p^{n \times n}$ be a symmetric invertible matrix, then
    \[
        \sum_{\x \in \F_p^n} \omega_p^{\x^T A \x} = i_p^n p^{n/2} \chi(\det A)
    \]
\end{claim}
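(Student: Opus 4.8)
The plan is to reduce the $n$-dimensional character sum to a product of one-dimensional quadratic Gauss sums by diagonalizing the quadratic form. Since $p$ is odd and $A$ is symmetric, the form $\x \mapsto \x^T A \x$ admits a diagonalizing congruence: there exists an invertible $M \in \F_p^{n\times n}$ with $M^T A M = D$, where $D = \operatorname{diag}(d_1,\dots,d_n)$. This is the standard orthogonalization of a symmetric bilinear form over a field of characteristic $\neq 2$ (a completing-the-square / Gram--Schmidt argument). Invertibility of $A$ forces every $d_i \neq 0$, since $\det D = (\det M)^2 \det A \neq 0$.

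First I would perform the change of variables $\x = M\y$. Because $M$ is a bijection of $\F_p^n$, as $\y$ ranges over $\F_p^n$ so does $\x$, and the exponent becomes $\x^T A \x = \y^T (M^T A M)\y = \sum_{i=1}^n d_i y_i^2$. The sum then factors across coordinates:
\[
    \sum_{\x \in \F_p^n} \omega_p^{\x^T A \x}
    = \sum_{\y \in \F_p^n} \omega_p^{\sum_{i} d_i y_i^2}
    = \prod_{i=1}^n \sum_{y_i \in \F_p} \omega_p^{d_i y_i^2}
    = \prod_{i=1}^n g(d_i; p).
\]

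Next I would invoke the one-dimensional evaluation \autoref{eq:quadratic_gauss_sum_a}, which gives $g(d_i;p) = i_p\,\chi(d_i)\,\sqrt{p}$ for each $d_i \neq 0$. Multiplying over $i$ and using that $\chi$ is multiplicative,
\[
    \prod_{i=1}^n g(d_i; p)
    = i_p^n\, p^{n/2} \prod_{i=1}^n \chi(d_i)
    = i_p^n\, p^{n/2}\, \chi\!\left(\prod_{i=1}^n d_i\right)
    = i_p^n\, p^{n/2}\, \chi(\det D).
\]
Finally, since $\det D = (\det M)^2 \det A$ and $(\det M)^2$ is a nonzero square, $\chi(\det D) = \chi\!\left((\det M)^2\right)\chi(\det A) = \chi(\det A)$, which yields the claimed value.

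The only nontrivial ingredient is the existence of the diagonalizing congruence $M^T A M = D$; everything else is bookkeeping. The subtle point to get right is that this diagonalization holds over the finite field $\F_p$ (not merely over $\mathbb{R}$): it relies crucially on $2$ being invertible modulo $p$, i.e.\ on $p$ being odd, which is exactly the regime in which the quadratic Gauss sum is defined. I would also emphasize that the final answer is manifestly independent of the choice of $M$, as it must be, precisely because $\chi$ annihilates the square factor $(\det M)^2$.
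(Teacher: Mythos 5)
Your proof is correct and follows the same overall strategy as the paper's (diagonalize the quadratic form, factor the sum into $n$ one-dimensional quadratic Gauss sums, and collect the quadratic characters into $\chi(\det A)$), but you execute the diagonalization step more carefully, and in fact more correctly. The paper asserts that a symmetric invertible matrix over $\F_p$ ``can be diagonalized by an orthogonal matrix'' and then speaks of the $D_i$ as eigenvalues of $A$; this is an import of the real spectral theorem that does not hold over a finite field in general. The statement that does hold in characteristic $\neq 2$ is exactly the one you use: a congruence diagonalization $M^T A M = D$ with $M$ invertible but not necessarily orthogonal. This changes the bookkeeping in a way the paper's proof silently skips: $\det D = (\det M)^2 \det A$ rather than $\det D = \det A$, so one must observe that $\chi$ kills the square factor $(\det M)^2$ to land on $\chi(\det A)$. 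You make that observation explicitly, and you also correctly flag that the whole argument requires $p$ odd (so that $2$ is invertible and the form can be diagonalized). In short, your route repairs the one genuinely shaky step in the paper's argument while arriving at the same evaluation.
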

\begin{proof}
Since $A$ is symmetric and invertible, it can be diagonalized by an orthogonal matrix. Let us denote $A = O^T D O$, and then the sum becomes
\[
    \sum_{\x \in \F_p^n} \omega_p^{\x^T O^T D O \x}
    \xrightarrow{\x \to O\x}
    \sum_{\x \in \F_p^n} \omega_p^{\x^T D \x}
    =
    \prod_{i=1}^n \sum_{x_i \in \F_p} \omega_p^{D_i x_i^2}
\]
we recognize the sum as a quadratic Gauss sum, $g(D_i; p) = \chi(D_i) g(1; p)$, thus we get
\[
(g(1; p))^n\prod_{i=1}^n \chi(D_i)
\]
Using \autoref{eq:quadratic_gauss_sum_one}, recalling that $D_i$ correspond the eigenvalues of $A$, recalling that the determinant is the product of eigenvalues ($\det{A} = \prod_i \lambda_i^A$) and that $\chi$ is a homomorphism and hence $\chi(a \cdot b) = \chi(a) \cdot \chi(b)$, yields the desired result
\end{proof}

\begin{claim}[general quadratic sum] \label{claim:gen_quad_sum}
    Let $a, b, c \in \F_p$ be elements in odd prime field, then
    \[
    \sum_{x \in \F_p}
    \omega_p^{ax^2 + bx + c}
    =
    \begin{cases}
        p\omega_p^c \delta_{b} & \text{if $a=0$} \\
        \omega_p^{c - \frac{1}{4}a^{-1} b^2} \chi(a)g(1; p) & \text{otherwise}
    \end{cases}
    \]
    Equivalently, one may write
    \[
    \sum_{x \in \F_p}
    \omega_p^{ax^2 + bx + c}
    =
    p\omega_p^c \delta_{b} \delta_a
    +
    \omega_p^{c - \frac{1}{4}a^{-1} b^2} \chi(a)g(1; p)
    \]
    The second term is not ill-defined due to the $a^{-1}$ because for $a=0$, then $\chi(a) = 0$, and it vanishes.
\end{claim}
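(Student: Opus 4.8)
The plan is to reduce everything to the completing-the-square identity together with the evaluation $g(a;p) = \chi(a)\,g(1;p)$ established in \autoref{eq:quadratic_gauss_sum_a_to_one}. First I would dispose of the degenerate case $a = 0$, where the summand is the pure additive character $\omega_p^{bx+c}$. Pulling out the constant $\omega_p^c$ leaves $\sum_{x\in\F_p}\omega_p^{bx}$, which is a complete character sum: it equals $p$ when $b=0$ and $0$ otherwise, by orthogonality of additive characters. Hence the $a=0$ branch evaluates to $p\,\omega_p^c\,\delta_b$, matching the claim.

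For the generic case $a \neq 0$ I would complete the square. Because $p$ is an odd prime, $2$ (and hence $4$) is invertible in $\F_p$, so I may write
\[
    ax^2 + bx + c = a\left(x + \tfrac{1}{2}a^{-1}b\right)^2 + \left(c - \tfrac{1}{4}a^{-1}b^2\right).
\]
The constant term factors out of the sum as the phase $\omega_p^{c - \frac14 a^{-1}b^2}$. The shift $y = x + \tfrac12 a^{-1}b$ is a bijection of $\F_p$, so reindexing turns the remaining sum into $\sum_{y\in\F_p}\omega_p^{a y^2} = g(a;p)$, exactly the quadratic Gauss sum of \autoref{def:quadratic_gauss_sum}. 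Applying \autoref{eq:quadratic_gauss_sum_a_to_one} to rewrite $g(a;p) = \chi(a)\,g(1;p)$ yields the second branch $\omega_p^{c-\frac14 a^{-1}b^2}\chi(a)\,g(1;p)$.

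Finally, to obtain the combined single-line form I would observe that the two branches never overlap: when $a=0$ the factor $\chi(a)=\chi(0)=0$ annihilates the would-be second term (so the formal $a^{-1}$ it contains is harmless), while $\delta_a$ switches off the first term precisely when $a\neq 0$. I expect no real obstacle here -- the only points requiring care are the invertibility of $2$ in $\F_p$ (guaranteed by $p$ odd) and the fact that the affine shift is a bijection, both immediate; all the genuine content has already been isolated in the prior evaluation of $g(a;p)$.
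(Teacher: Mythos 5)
Your proposal is correct and follows essentially the same route as the paper: handle $a=0$ by orthogonality of additive characters, then complete the square and reduce to the quadratic Gauss sum via \autoref{eq:quadratic_gauss_sum_a_to_one}. The only cosmetic difference is that the paper's chain of substitutions lands on $g(a^{-1};p)$ and then uses $\chi(a^{-1})=\chi(a)$, whereas you arrive at $g(a;p)$ directly; both are equivalent.
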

\begin{proof}
    For $a=0$ then the sum is just
    \[
    \omega_p^{c}
    \sum_{x \in \F_p}
    \omega_p^{bx}
    =
    p \omega_p^c \delta_b
    \]
    For $a \ne 0$ then we can convert the sum into a quadratic Gauss sum
    \[
    \begin{aligned}
    \sum_{x \in \F_p}
    \omega_p^{ax^2 + bx + c}
    &=
    \omega_p^c
    \sum_{x \in \F_p}
    \omega_p^{(ax + b)x}
    \xeq{x \to a^{-1} x}  % =
    \omega_p^c
    \sum_{x \in \F_p}
    \omega_p^{(x + b) a^{-1} x}
    \xeq{x \to x - \frac{1}{2}b}  % =
    \omega_p^c
    \sum_{x \in \F_p}
    \omega_p^{(x + \frac{1}{2}b) a^{-1} (x - \frac{1}{2}b)}
    \\&=
    \omega_p^c
    \sum_{x \in \F_p}
    \omega_p^{a^{-1}(x^2 - \frac{1}{4}b^2)}
    =
    \omega_p^{c - \frac{1}{4}a^{-1} b^2}
    \underbrace{\sum_{x \in \F_p}
    \omega_p^{a^{-1}x^2}}_{g(a^{-1};p)}
    =
    \omega_p^{c - \frac{1}{4}a^{-1} b^2}
    g(a^{-1}; p)
    \\&=
    \omega_p^{c - \frac{1}{4}a^{-1} b^2}
    \chi(a^{-1})g(1; p)
    =
    \omega_p^{c - \frac{1}{4}a^{-1} b^2}
    \chi(a)g(1; p)
    \end{aligned}
    \]
\end{proof}

\printbibliography
\end{document}